\def\draft{0}

\documentclass{article}
\usepackage{amsfonts, amsmath, amssymb}
\usepackage{url, graphics}
\usepackage[dvips]{graphicx}
\usepackage{color}
\usepackage{fullpage}

\usepackage{amsthm}
\theoremstyle{plain}

\newtheorem*{theorem*}{Theorem}
\newtheorem{theorem}{Theorem}

\newtheorem{definition}{Definition}
\newtheorem{lemma}{Lemma}
\newtheorem{proposition}{Proposition}
\newtheorem{corollary}{Corollary}
\newtheorem{claim}{Claim}

\newtheorem{remk}{Remark}


\def\FullBox{\hbox{\vrule width 8pt height 8pt depth 0pt}}

\def\qed{\ifmmode\qquad\FullBox\else{\unskip\nobreak\hfil
\penalty50\hskip1em\null\nobreak\hfil\FullBox
\parfillskip=0pt\finalhyphendemerits=0\endgraf}\fi}

\newtheorem{fact}{Fact}

\def\qedsketch{\ifmmode\Box\else{\unskip\nobreak\hfil
\penalty50\hskip1em\null\nobreak\hfil$\Box$
\parfillskip=0pt\finalhyphendemerits=0\endgraf}\fi}

\newcommand{\etal}{{\it et~al.\ }}



\newcommand{\zo}{\{0,1\}}

\newcommand{\E}{\mathop{\mathbb E}\displaylimits}

\newcommand{\eps}{\varepsilon}

\ifnum\draft=1
\newcommand{\authnote}[2]{{ \bf [#1's Note: #2]}}
\else
\newcommand{\authnote}[2]{}
\fi





\newcommand{\COMMENT}[1]{}
\newcommand{\ket}[1]{|#1\rangle}
\newcommand{\bra}[1]{\langle#1|}
\newcommand{\ketbra}[2]{|#1\rangle\langle#2|}
\def\01{\{0,1\}}
\newcommand{\braket}[2]{\langle{#1}|{#2}\rangle} 
\def\01{\{0,1\}}

\newcommand{\triple}[3]{\langle{#1}|{#2}|{#3}\rangle}

\newcommand{\Tr}{\mbox{\rm Tr}}

\newcommand{\cadre}[1]
{
\begin{tabular}{|p{15.4cm}|}
\hline
#1 \\
\hline
\end{tabular}
}


\newcommand{\spa}[1]{\mathcal{#1}}

\newcommand{\X}{\mathcal{X}}
\newcommand{\Y}{\mathcal{Y}}

\numberwithin{theorem}{section}
\numberwithin{definition}{section}
\numberwithin{lemma}{section}
\numberwithin{proposition}{section}
\numberwithin{corollary}{section}
\numberwithin{claim}{section}
\numberwithin{fact}{section}

\newcommand{\altketbra}[1]{\ketbra{#1}{#1}}
\newcommand{\kb}[1]{\ketbra{#1}{#1}}

\newcommand{\F}{\operatorname{F}}
\newcommand{\Fbar}{\overline{F}}

\newcommand{\Bob}{\mbox{Bob}}
\newcommand{\Alice}{\mbox{Alice}}
\renewcommand{\X}{\widetilde{X}}
\renewcommand{\Y}{\widetilde{Y}}

\title{Parallel Repetition of Free Entangled Games: \\ Simplification and Improvements}
\author{Andr\'e Chailloux \thanks{SECRET Project Team, INRIA Paris-Rocquencourt.} \and
Giannicola Scarpa \thanks{Universitat Aut\`onoma de Barcelona, supported by EC project RAQUEL (323 970).}}

\begin{document}
\maketitle

\begin{abstract}
In a two-player game, two cooperating but non communicating players, Alice and Bob, receive inputs taken from a probability distribution. Each of them produces an output and they win the game if they satisfy some predicate on their inputs/outputs. The entangled value $\omega^*(G)$ of a game $G$ is the maximum probability that Alice and Bob can win the game if they are allowed to share an entangled state prior to receiving their inputs. 

The $n$-fold parallel repetition $G^n$ of $G$ consists of $n$ instances of $G$ where Alice and Bob receive all the inputs at the same time and must produce all the outputs at the same time. They win $G^n$ if they win each instance of $G$. Recently, there has been a series of works showing parallel repetition with exponential decay for projection games \cite{DSV13}, games on the uniform distribution \cite{CS14} and for free games, \emph{i.e.}, games on a product distribution \cite{JPY13}. 

This article is meant to be a follow up of \cite{CS14}, where we improve and simplify several parts of our previous paper. Our main result is that for any free game $G$ with value $\omega^*(G)=1-\eps$, we have $\omega^*(G^n) \le (1 - \eps^2)^{\Omega(\frac{n}{\log(l)})}$ where $l$ is the size of the output set of the game. This result improves on both the results in \cite{JPY13} and \cite{CS14}. The framework we use can also be extended to free projection games. We show that for a free projection game $G$ with value $\omega^*(G)=1-\eps$, we have $\omega^*(G^n) \le (1 - \eps)^{\Omega(n)}$.
\end{abstract}

\section{Introduction}
A \emph{two-player (nonlocal) game} is played between two cooperating parties, Alice and Bob, which are not allowed to communicate.  This game $G$ is characterized by an input set $I$, an output set $O$, a probability distribution $p$ on $I^2$
and a result function $V: O^2 \times I^2 \rightarrow \zo$. The game proceeds as follows: Alice receives $x \in I$, Bob receives $y \in I$ where $(x,y)$ is taken according to some distribution $p$. Alice outputs $a \in O$ and Bob outputs $b \in O$. They win the game if $V(a,b|x,y) = 1$. The value of the game $\omega(G)$ is the maximum probability, over all classical strategies, with which Alice and Bob can win the game. 

The $n$-fold parallel repetition $G^n$ of $G$ consists of the following. Alice and Bob get inputs $x_1,\dots,x_n$ and $y_1,\dots,y_n$, respectively. Each $(x_i,y_i)$ is taken according to $p$. They output $a_1,\dots,a_n$ and $b_1,\dots,b_n$, respectively. They win the game if and only if $\forall i, \ V(a_i,b_i|x_i,y_i) = 1$. In order to win the $n$-fold repetition, Alice and Bob can just take the best strategy for $G$ and use it $n$ times. If they do so, they will win $G^n$ with probability $(\omega(G))^n$ which shows that $\omega(G^n) \geq (\omega(G))^n$.

Parallel repetition of games studies how the quantity $\omega(G^n)$ behaves. For example, if $\omega(G^n) = (\omega(G))^n$ for each $n$ then we say that $G$ admits perfect parallel repetition. However, we know some games for which this does not hold. It was a long-standing open question to determine whether the value of $\omega(G^n)$ decreases exponentially in $n$. This was first shown by Raz~\cite{Raz98}. Afterwards, a series of works showed improved results for specific types of games~\cite{Hol07,Rao08,AKK+08,Raz11,BG14}. Parallel repetition for games has many applications, from direct product theorems in communication complexity~\cite{PRW97} to hardness of approximation results~\cite{BGS98,Fei98,Has01}.

In the quantum setting, it is natural to consider games where Alice and Bob are allowed to share some entangled state at the beginning of the game, before the inputs are generated. Entangled games exhibit Bell violations which are a witness of quantum non-locality. The study of entangled games may also be related to several aspects of quantum complexity, as in the classical setting.

Perfect parallel repetition has been shown for entangled XOR games~\cite{CSUU08}. It was also shown that entangled unique games~\cite{KRT08} admit parallel repetition with exponential decay. Finally, it was shown that any entangled game admits parallel repetition~\cite{KV11}. However, this last parallel repetition only shows a polynomial decay of $\omega^*(G^n)$. It was unknown for a large class of games whether this decay is exponential or not.

Recently, parallel repetition result with exponential decay has been shown for entangled projection games~\cite{DSV13} (see Section \ref{Section:EntangledGames} for a definition of projection games). We have also presented earlier a parallel repetition result with exponential decay for games on the uniform distribution. (Note that here and in the rest of the paper, unless otherwise stated, we use the convention that $\eps = 1-\omega^*(G)$.)
\begin{theorem*}[\cite{CS14}]
For any game $G$ on the uniform distribution, we have $\omega^*(G^n) \le (1 - \eps^2)^{\Omega(\frac{n}{\log(k) + \log(l)})}$ where $k$ and $l$ are respectively the dimension of the input set and of the output set.
\end{theorem*}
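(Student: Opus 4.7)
The plan is to follow the information-theoretic paradigm for parallel repetition adapted to the quantum setting, in the spirit of Raz/Holenstein on the classical side and Jain--Pereszl\'enyi--Yao / Dinur--Steurer--Vidick on the quantum side. Assume toward a contradiction that $\omega^*(G^n) > (1-\eps^2)^{cn/(\log k+\log l)}$ for a sufficiently small constant $c>0$, and fix an optimal strategy: an entangled state $\ket{\psi}_{E_A E_B}$ together with projective measurements $\{A_x^a\}_{x\in I^n, a\in O^n}$ for Alice and $\{B_y^b\}$ for Bob. Let $X=(X_1,\dots,X_n)$, $Y=(Y_1,\dots,Y_n)$ be the random inputs (uniform on $I^n\times I^n$) and $W_i$ the event that coordinate $i$ is won.

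The first step is a subset-sampling / chain-rule argument. Pick a random subset $T\subseteq[n]$ of moderate size and an index $i\notin T$, and condition on the event $W_T=\bigwedge_{j\in T}W_j$. By the assumed lower bound on $\omega^*(G^n)$, the probability $\Pr[W_T]$ is not too small, so for a typical $(T,i)$ the conditional winning probability on coordinate $i$ remains at least $1-\eps+\Omega(\eps)$. Because the input distribution is the uniform product distribution, the additivity of mutual information combined with the $\log(1/\Pr[W_T])$ conditioning cost gives
\[
\E_{T,i}\!\left[\,I\bigl((X_i,Y_i)\,;\,E_A E_B\,\big|\,W_T\bigr)\right] \;=\; O\!\left(\frac{\log k+\log l}{n}\right),
\]
which is much smaller than $\eps^2$ for sufficiently small $c$.

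The second and main step is to convert this conditional strategy into an actual strategy for the single game $G$. Because the inputs outside coordinate $i$ are uniform and independent, Alice and Bob can locally sample them using shared randomness. The only remaining obstacle is to simulate the effect of conditioning on $W_T$ on the entangled state $\ket{\psi}$. This is where the quantum correlated-sampling / quantum substate theorem enters: using the small conditional mutual information above, Alice and Bob jointly prepare, by local operations on $\ket{\psi}$ together with shared randomness, a state close in trace distance to the true post-$W_T$ conditional state. Applying the original measurements to this simulated state yields a strategy for $G$ winning with probability at least $1-\eps+\Omega(\eps)-o(\eps)>1-\eps$, contradicting $\omega^*(G)=1-\eps$.

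The main obstacle --- and the origin of the $\log k+\log l$ factor --- is precisely the quantum state simulation in the second step. Unlike the classical case, where correlated sampling is essentially free given shared randomness, in the quantum setting Alice and Bob must approximate a post-measurement entangled state using only local operations on the unconditioned state $\ket{\psi}$; the per-coordinate register dimension (which scales like $kl$, since it must hold one input from $I$ and one output from $O$) enters as the effective cost of this simulation. Obtaining the sharp $\eps^2$ rate, rather than a weaker $\eps^{c}$, requires carefully combining a quantum Pinsker-type inequality on the conditional state with a finite-dimensional version of the substate theorem, and using that the resulting trace-distance error converts into an $o(\eps)$ loss on the winning probability.
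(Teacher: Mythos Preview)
This theorem is not proved in the present paper; it is quoted from~\cite{CS14}, and the paper supersedes it with the stronger Theorem~\ref{Theorem:ParallelRepetition} (no $\log k$ term, arbitrary product distributions). So the relevant comparison is to the paper's proof of that stronger result.

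Your outline differs structurally from what the paper does. You follow a Raz/Holenstein/\cite{JPY13}-style argument: condition on winning a random subset $T$, bound the information leaked about the $i$th input pair, then try to locally simulate the conditioned state. The paper instead works through the \emph{superposed information cost}: it proves $SIC(H,1-\Theta(\eps))\ge\Omega(\eps)$ directly via Uhlmann's theorem and fidelity manipulations (Proposition~\ref{Proposition:SICnGame}), and upper-bounds the same quantity via an explicit one-way ``checking procedure'' in which Alice sends $(x_i,a_i)$ for $i\in C$ (Sections~\ref{Section:Checking}--\ref{Section:ConstructionAdvice}). The $\log l$ factor arises concretely as the bit-cost of transmitting the outputs $a^C$; in~\cite{CS14} the additional $\log k$ came from also transmitting inputs. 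Your sketch provides no mechanism that would produce either factor.

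More importantly, there is a genuine gap at your second step. A small value of $I\bigl((X_i,Y_i);E_AE_B\mid W_T\bigr)$ does \emph{not} by itself allow Alice and Bob to locally prepare, to trace distance $o(\eps)$, the global post-$W_T$ entangled state: the event $W_T$ depends jointly on both parties' measurement outcomes, and there is no black-box ``quantum correlated sampling'' lemma that does this from a one-shot mutual-information bound. This is exactly the obstacle that forces the paper (and~\cite{JPY13}) to work with an explicit protocol together with min-relative-entropy inequalities (Facts~\ref{f7}--\ref{f8}, Corollary~\ref{Corollary:CoolProduct}) rather than invoking a substate theorem. In addition, the bound you write, $O((\log k+\log l)/n)$, is not what the conditioning argument actually yields: conditioning on $W_T$ costs $\log(1/\Pr[W_T])$ in total, which under your hypothesis is $O\!\bigl(cn\eps^2/(\log k+\log l)\bigr)$, giving $O\!\bigl(c\eps^2/(\log k+\log l)\bigr)$ per coordinate --- the $\log k+\log l$ lands in the denominator, not the numerator, and you have not explained where it enters the simulation cost. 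As written, the proposal is a plausible-sounding outline but not a proof.
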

Independently Jain \etal presented a parallel repetition result with exponential decay on free games, which are games on a product distribution.
\begin{theorem*}[\cite{JPY13}]
For any game $G$ on a product distribution, we have $\omega^*(G^n) \le (1 - \eps^3)^{\Omega(\frac{n}{\log(l)})}$ where $l$ is the dimension of the output set
\end{theorem*}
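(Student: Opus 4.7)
\begin{proofsketch}
The plan is to follow the standard information-theoretic framework for parallel repetition, adapted to the entangled setting. Assume for contradiction that there is an entangled strategy winning $G^n$ with probability $\delta \gg (1-\eps^3)^{cn/\log l}$ for some small constant $c$; the goal is to distill from it a single-copy strategy for $G$ with winning probability strictly greater than $1-\eps$, contradicting the assumption on $\omega^*(G)$.

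First I would set up the ``conditioning on winning'' step. Let $W_i$ be the event of winning coordinate $i$, and pick a random subset $C \subseteq [n]$ together with a random coordinate $i \notin C$. After conditioning on $W_C = \bigwedge_{j\in C} W_j$, the information-theoretic chain rule bounds the expected quantum mutual information between the outputs of $G^n$ and the input/output registers of coordinate $i$ by roughly $\log l$ per coordinate, while the cost of conditioning on the Bernoulli event $W_C$ contributes at most $\log(1/\delta)$. Averaging over $C$ and $i$ should then select a coordinate $i$ for which the post-measurement conditional state is within trace distance $O(\eps)$ of the original shared state on the registers relevant to coordinate $i$, while simultaneously $\Pr[W_i \mid W_C] \ge 1-\eps/2$.

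Second I would exploit the product structure of the input distribution $p = p_X \times p_Y$ to let Alice and Bob simulate the conditioned strategy from a single-copy input. On input $(x,y)$, Alice locally samples $y_{-i} = (y_j)_{j\neq i}$ from $p_Y^{n-1}$ (which is independent of her own input $x$ by freeness) and Bob symmetrically samples $x_{-i}$; using a quantum correlated-sampling / state-generation lemma with shared entanglement, they jointly prepare a state close to the $G^n$ post-measurement state conditioned on $W_C$. They then run the $G^n$ strategy on the padded input and output the $i$-th coordinate answers, obtaining a strategy for $G$ whose success probability is at least $\Pr[W_i \mid W_C]$ minus the trace-distance slack.

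The main obstacle is the quantum correlated-sampling step: even with a free distribution, conditioning on $W_C$ introduces correlations between Alice's and Bob's local views of $X_{-i}, Y_{-i}$ and of the entangled register, and these must be reproduced without any communication. Any quantum state-generation lemma used here loses (at least) a square root when converting relative entropy to trace distance via quantum Pinsker, and it is precisely the accumulation of such losses, combined with the entropy cost $\log l$ per coordinate, that yields the $\eps^3$ exponent in the bound. Tightening this accounting, by picking the subset $C$ and coordinate $i$ more carefully so that the trace-distance slack is paid for only once rather than twice, is the natural route to the $\eps^2$ improvement that the paper then pursues.
\end{proofsketch}
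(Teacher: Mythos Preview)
This theorem is not proved in the paper: it is quoted in the introduction as the prior result of Jain, Pereszl\'enyi, and Yao~\cite{JPY13}, and the paper's own contribution is the stronger $(1-\eps^2)^{\Omega(n/\log l)}$ bound of Theorem~\ref{Theorem:ParallelRepetition}. So there is no ``paper's proof of this statement'' to compare against directly; what one can compare is your sketch to the method the paper uses for its improved bound.

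Your outline is recognisably the Holenstein/Raz-style template adapted to the entangled setting, and indeed this is essentially the route \cite{JPY13} takes: condition on winning a random subset $C$, argue that for a typical remaining coordinate $i$ the conditioned global state is close to something Alice and Bob can locally regenerate (this is where freeness and a quantum correlated-sampling lemma enter), and then embed a fresh instance into coordinate $i$. One detail in your sketch is garbled: you have Alice sampling $y_{-i}$ and Bob sampling $x_{-i}$, but each player should complete \emph{their own} input string; the other player's inputs appear only through the shared entangled state, and the hard part is recreating that conditioned entangled state locally, which is exactly the obstacle you flag.

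The paper's approach is genuinely different. Rather than conditioning on winning events and invoking correlated sampling, it works through the \emph{superposed information cost} $SIC(G)$: it first shows (Section~\ref{Section:Relating}) that any advice state allowing the players to win $G$ with probability close to $1$ must carry $\Omega(\eps)$ bits of mutual information between each player's classical input and the other player's registers. It then builds (Section~\ref{Section:ConstructionAdvice}) such an advice state for a single coordinate of $G^n$ by running the optimal $G^n$ strategy coherently and having Alice send a short message ($x^C,a^C$ on a random index set $C$) so that Bob can post-select on passing a consistency check; this costs roughly $t+|C|\log l$ bits of information spread over $n$ coordinates. Comparing the $\Omega(\eps)$ lower bound to the $O((t+v\log l)/n)$ upper bound for a good coordinate yields the result. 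The communication-protocol viewpoint replaces the correlated-sampling lemma entirely, and the single square-root loss (from relative entropy to fidelity in Proposition~\ref{Proposition:SICnGame}) is what gives $\eps^2$ rather than $\eps^3$.
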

The second result applies to more general games and doesn't depend on the input set dimension. On the other hand, the first result has a better dependance in $\eps$.

\subsection{Contribution}
In this paper, we simplify, improve and extend our previous work \cite{CS14}, inspiring ourselves from the techniques used in \cite{JPY13} and blending them with our own. Our main contributions are the following: (1) we present a new parallel repetition theorem for \emph{free games} that improves on the results of both \cite{JPY13} and \cite{CS14} (2) we present a stronger parallel repetition theorem for \emph{free projection games}. 

\paragraph{Parallel repetition theorem for entangled \emph{free games}}
We first show the following:
\begin{theorem}\label{Theorem:ParallelRepetition}
For any free game $G$, we have $\omega^*(G) \le (1 - \eps^2)^{\Omega(\frac{n}{\log(l)})}$.
\end{theorem}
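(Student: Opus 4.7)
The plan is to prove Theorem~\ref{Theorem:ParallelRepetition} by contradiction, following the information-theoretic Raz--Holenstein framework specialized to the quantum free-games setting as in \cite{CS14,JPY13}. Suppose a strategy $\mathcal{S}$ on shared entangled state $\ket{\psi}$ achieves $\omega^*(G^n) \ge (1-\eps^2)^{cn/\log l}$ for a sufficiently small constant $c>0$. Let $W_i$ be the event that $\mathcal{S}$ wins the $i$-th instance and set $W_C = \bigcap_{i\in C} W_i$ for $C \subseteq [n]$. A standard averaging argument over a uniformly random index $i$ and a random subset $C \subseteq [n] \setminus \{i\}$ will produce a pair $(i,C)$ for which the conditional win rate $\Pr[W_i \mid W_C]$ is large and, simultaneously, the local states on Alice's and Bob's sides conditioned on $W_C$ are close to the unconditioned local states in a quantified sense. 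The goal is to convert this into a single-game strategy for $G$ with success probability strictly greater than $1-\eps$.

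The reduction uses a classical \emph{dependency-breaking variable} $D$ consisting of the inputs and Alice's measurement outcomes on the coordinates in $C$. Conditioned on $D$, the post-measurement state decouples into a form that Alice and Bob can approximately prepare without communication. Freeness is essential here: because the input distribution is a product, Alice and Bob can jointly sample $(x_j,y_j)_{j\neq i}$ and $D$ from shared randomness. They then perform quantum correlated sampling (via a Jordan-lemma decomposition of the projectors induced by $D$, as in \cite{CS14}) to obtain a joint state $\sigma$ whose trace distance to the true conditional state $\rho_{W_C}$ is controlled by Alice-side and Bob-side distance terms between conditional and unconditional local states. Finally, Alice and Bob embed their single-shot input $(x,y) \sim p$ at position $i$ and run the $i$-th measurement of $\mathcal{S}$, winning with probability close to $\Pr[W_i \mid W_C]$.

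The quantitative heart of the argument uses the chain rule for quantum relative entropy: the total information cost of conditioning on $W_{[n]}$ is at most $\log(1/\omega^*(G^n)) \le c n \eps^2$, so on average across coordinates the relevant divergence is $O(\eps^2)$. Since each coordinate output carries at most $\log l$ bits, choosing $|C|$ of order $n/\log l$ balances the revealed-information budget against the number of remaining coordinates, and a single application of Pinsker's inequality converts the averaged divergence into a trace-distance approximation error of order $\eps$. Combined with $\Pr[W_i \mid W_C] \ge 1 - o(\eps)$ for a typical $(i,C)$, this yields a single-shot success probability strictly greater than $1-\eps$, contradicting $\omega^*(G) = 1-\eps$. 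The main obstacle I anticipate is the quantum correlated-sampling step: specifically, implementing it so that its approximation error is controlled by exactly one quantum-informational divergence (rather than a nested chain, as in \cite{JPY13}), which is what enables a single Pinsker conversion and is precisely the reason for the improvement from $\eps^3$ to $\eps^2$.
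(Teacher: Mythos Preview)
Your outline follows the Raz--Holenstein ``condition on $W_C$, then correlated-sample'' template, but that is \emph{not} how the paper proves Theorem~\ref{Theorem:ParallelRepetition}, and the gap you yourself flag at the end is exactly the one the paper avoids by taking a different route.

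The paper does \emph{not} build a single-shot strategy by correlated sampling. Instead it works with the \emph{superposed information cost}: it keeps the inputs $X,Y$ in coherent superposition and proves (Proposition~\ref{Proposition:LocalUnitaries}, Proposition~\ref{Proposition:SICnGame}) that if $I(\widetilde X:BY)+I(\widetilde Y:XA)$ is small then Uhlmann's theorem yields local isometries $U_x,V_y$ that decouple the advice state from the inputs, directly contradicting $\omega^*(G)=1-\eps$. This gives the lower bound $SIC(H,1-\eps/32)\ge\Omega(\eps)$ with a \emph{linear} loss in the mutual information (Fact~\ref{Claim:DC}: $I\ge 1-F$), not a Pinsker square root. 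For the upper bound, a checking protocol with $|C|=v=O\!\big((t+\log(1/\eps))/\eps\big)$ (not $n/\log l$) and $|M_{A^C}|=v\log l$ yields $I(\widetilde X_i:\Bob)+I(\widetilde Y_i:\Alice)\le O\!\big(t\log l/(n\eps)\big)$ for a good index $i$. Equating the two gives $t\ge\Omega(n\eps^2/\log l)$.

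Your proposal has two concrete problems. First, the bookkeeping is off: under the contradiction hypothesis $\log(1/\omega^*(G^n))\le c\,n\eps^2/\log l$, not $c\,n\eps^2$; and the relevant conditioning set has size $O(t/\eps)$, not $n/\log l$. Second, and more seriously, the ``single Pinsker'' correlated-sampling step you hope for is precisely what is not known for general free games: the Jordan-lemma machinery you cite is from \cite{DSV13} (for projection games), not \cite{CS14}, and applying known quantum correlated-sampling along your outline reproduces the $\eps^3$ bound of \cite{JPY13} rather than the claimed $\eps^2$. The improvement to $\eps^2$ in this paper comes specifically from replacing correlated sampling by the Uhlmann-based SIC argument, which converts small mutual information to small $1-F$ linearly. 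Without supplying a new correlated-sampling lemma with that linear dependence, your plan does not close.
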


The proof will have two main components. First, as in \cite{CS14}, we use the notion of the superposed information cost to lower bound the value of an entangled game. Informally, the superposed information cost (SIC) of a game represents the minimal amount of information that Alice and Bob must have about each other's classical inputs in other to win the game with probability $1$, while having their inputs in a quantum superposition. In \cite{CS14}, we showed that $SIC(G) \ge \Omega(\eps)$. In this paper, we reprove this statement by simplifying the previous proof. 

We proceed to show that $SIC(G^n) \ge \Omega(n\eps)$. Then, we show that Alice and Bob can win a weaker version of $G^n$, where we only require Alice and Bob to win most games, while having only $\approx O(-\log(\omega^*(G^n)) \frac{\log(l)}{\eps})$ information about each other's inputs in this superposed setting. This is will be done via a communication protocol that will help Alice and Bob win $G^n$. We finally manage to combine these two results to show that 
$-\log(\omega^*(G^n)) \ge \Omega(\frac{n \eps^2}{\log(l)})$ or equivalently $\omega^*(G^n) \le (1 - \eps^2)^{\Omega(\frac{t}{\log(l)})}$.

\paragraph{Parallel repetition theorem for entangled \emph{free projection games}}
We then improve the above theorem for the special case of entangled free projection games.
\begin{theorem}\label{Theorem:ParallelRepetitionProjectionGames}
For any free projection game $G$, we have $\omega^*(G) \le (1 - \eps)^{\Omega(n)}$.
\end{theorem}

The theorem follows by an improvement of the communication protocol mentioned above, for the specific case of free projection games.

\subsection{Organization of the paper}
The rest of the paper is organized as follows. 
In Section \ref{Section:Prel} we introduce some preliminaries concerning quantum information theory. We also present entangled games and define the notion ot the superposed information cost. In Section \ref{Section:Relating} we prove the relation between the superposed information cost and the value of the game. Then, in Section \ref{Section:Proof} we provide the proof of our main result. The organization of the proof is detailed at the beginning of the section. Finally, in Section \ref{Section:Projection} we extend our result to projection games.

\section{Preliminaries}
\label{Section:Prel}

\subsection{The fidelity of two quantum states.}\label{HowClose}
We start by stating a few properties of the fidelity $F$ between two quantum states. 
\begin{definition}
For any two states $\rho,\sigma$, their fidelity $F$ is given by
$F(\rho,\sigma) = F(\sigma,\rho)= \Tr(\sqrt{\rho^{\frac{1}{2}}\sigma\rho^{\frac{1}{2}}}) =  ||\sqrt{\rho}\sqrt{\sigma}||_1$. We also define $\Fbar(\rho,\sigma) = 1 - F(\rho,\sigma)$.
\end{definition}

\begin{fact}\label{POVMfidelity}
For any two states $\rho,\sigma$, and a POVM $E = \{E_1,\dots,E_m\}$ with $p_i = \Tr(\rho E_i)$ and $q_i = \Tr(\sigma E_i)$, we have $F(\rho,\sigma) \le \sum_{i} \sqrt{p_i q_i}$. There exists a POVM for which this inequality is an equality.
\end{fact}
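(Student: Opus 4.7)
\begin{proofsketch}
The quantity $\sum_i \sqrt{p_i q_i}$ is the classical Bhattacharyya/fidelity coefficient between the output distributions produced by the POVM on $\rho$ and on $\sigma$, so the inequality is the statement that measurement can only increase fidelity (the Fuchs--Caves monotonicity). The plan is to prove both parts via the purification picture.

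First I would invoke Uhlmann's theorem: pick purifications $\ket{\psi},\ket{\phi}\in\mathcal{H}\otimes\mathcal{H}'$ of $\rho,\sigma$ such that $F(\rho,\sigma)=|\inp{\psi}{\phi}|$. Then for every POVM element $E_i$,
\[
p_i=\bra{\psi}(E_i\otimes I)\ket{\psi}=\|\sqrt{E_i\otimes I}\ket{\psi}\|^2,\qquad q_i=\|\sqrt{E_i\otimes I}\ket{\phi}\|^2.
\]
Applying Cauchy--Schwarz in $\mathcal{H}\otimes\mathcal{H}'$ to the vectors $\sqrt{E_i\otimes I}\ket{\psi}$ and $\sqrt{E_i\otimes I}\ket{\phi}$ yields
\[
|\bra{\psi}(E_i\otimes I)\ket{\phi}|\le\sqrt{p_i\, q_i}.
\]
Summing over $i$, using $\sum_i E_i=I$ and the triangle inequality gives
\[
\sum_i\sqrt{p_i q_i}\;\ge\;\Bigl|\sum_i\bra{\psi}(E_i\otimes I)\ket{\phi}\Bigr|\;=\;|\inp{\psi}{\phi}|\;=\;F(\rho,\sigma),
\]
which is the claimed inequality.

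For the equality part I would exhibit the Fuchs--Caves measurement. Let $A=\sqrt{\rho}\sqrt{\sigma}$ with polar decomposition $A=U|A|$ where $|A|=\sqrt{\sqrt{\sigma}\rho\sqrt{\sigma}}$, and diagonalize $|A|=\sum_i\lambda_i\ket{e_i}\bra{e_i}$ so that $F(\rho,\sigma)=\Tr|A|=\sum_i\lambda_i$. Define the projective POVM $\tilde E_i=V\ket{e_i}\bra{e_i}V^*$ for an appropriate isometry $V$ built from $U$ and $\sqrt{\sigma}^{-1}$ (on the support of $\sigma$), chosen so that the vectors $\sqrt{\tilde E_i\otimes I}\ket{\psi}$ and $\sqrt{\tilde E_i\otimes I}\ket{\phi}$ become parallel with non-negative real inner product. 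The Cauchy--Schwarz and triangle inequalities in the argument above then become equalities simultaneously, which forces $\sum_i\sqrt{\tilde p_i\tilde q_i}=F(\rho,\sigma)$.

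The main obstacle is really only in the equality part: one has to handle the case where $\rho$ or $\sigma$ is not of full rank when defining the optimal measurement. I would resolve this either by restricting attention to the support of $\sigma$ and completing the POVM arbitrarily on the complement (terms with $q_i=0$ contribute $0$ on both sides), or by a standard continuity argument, perturbing $\sigma$ to $\sigma+\delta I$, applying the invertible-case construction, and taking $\delta\to 0$ using continuity of the fidelity.
\end{proofsketch}
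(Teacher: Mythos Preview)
Your argument is correct. The inequality via Uhlmann purifications plus Cauchy--Schwarz is the standard clean route, and the Fuchs--Caves construction is the right witness for tightness; your remarks on handling non-full-rank $\sigma$ are adequate for a sketch.

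There is, however, nothing to compare against: in the paper this statement is recorded as a \emph{Fact} in the preliminaries and is not given a proof at all. It is simply quoted as a known characterization of the fidelity (the measurement-monotonicity direction together with existence of an optimal POVM, as in Nielsen--Chuang or Fuchs--Caves). So your proposal is not an alternative to the paper's proof but a self-contained justification of a result the paper takes for granted.
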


\begin{definition}
A pure state $\ket{\psi}$ in $\spa{A} \otimes \spa{B}$ is a purification of some state $\rho$ in $\spa{B}$ if $\ \Tr_{\spa{A}}(\ketbra{\psi}{\psi}) = \rho$.
\end{definition}

\begin{fact}[Uhlmann's theorem]
For any two quantum states $\rho,\sigma$ and any purification $\ket{\phi}$ of $\rho$, there exists a purification $\ket{\psi}$ of $\sigma$ such that $|\braket{\phi}{\psi}| = F(\rho,\sigma)$.
\end{fact}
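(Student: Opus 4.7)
The plan is to parametrize purifications via the vec isomorphism and then invoke the variational characterization of the trace norm. Concretely, pick an ancilla $\spa{A}$ of dimension at least $\dim(\spa{B})$ with standard basis $\{\ket{i}\}$, set $\ket{\Omega} = \sum_i \ket{i}_{\spa{A}}\ket{i}_{\spa{B}}$, and consider the map $M \mapsto \ket{M} := (I \otimes M)\ket{\Omega}$ from operators $M : \spa{B} \to \spa{A}$ to vectors in $\spa{A} \otimes \spa{B}$. A one-line calculation using the identity $\bra{\Omega}(I \otimes X)\ket{\Omega} = \Tr(X)$ shows both that $\Tr_{\spa{A}}(\ketbra{M}{M}) = MM^\dagger$ and that $\braket{M}{N} = \Tr(M^\dagger N)$. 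Hence purifications of $\rho$ correspond exactly to operators with $MM^\dagger = \rho$, which by SVD can be written $M = \sqrt{\rho}\,V$ for an isometry $V$; and purifications of $\sigma$ correspond similarly to $N = \sqrt{\sigma}\,W$.

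With the given purification $\ket{\phi} = \ket{M}$ fixed (so $V$ is determined), maximizing $|\braket{\phi}{\psi}|$ over all purifications $\ket{\psi} = \ket{N}$ of $\sigma$ therefore reduces to
\[
\max_{W \text{ isometry}} \bigl| \Tr(V^\dagger \sqrt{\rho}\sqrt{\sigma}\,W) \bigr| \;=\; \max_{W \text{ isometry}} \bigl| \Tr(AW) \bigr|,
\]
where $A = V^\dagger \sqrt{\rho}\sqrt{\sigma}$. The standard duality for the Schatten-$1$ norm gives $\max_{W} |\Tr(AW)| = \|A\|_1$, with the maximum attained by $W = U_0^\dagger$, where $A = U_0 |A|$ is the polar decomposition. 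Because $V$ is an isometry, $\|V^\dagger \sqrt{\rho}\sqrt{\sigma}\|_1 = \|\sqrt{\rho}\sqrt{\sigma}\|_1 = F(\rho,\sigma)$ by the definition given at the start of the section, and the corresponding $\ket{\psi} = \ket{\sqrt{\sigma}\,W}$ is the desired purification.

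I expect the main obstacle to be not conceptual but merely the bookkeeping around isometries: when $\rho$ or $\sigma$ has rank smaller than $\dim(\spa{A})$, the operators $V,W$ arising above are only partial isometries and one must extend them to genuine isometries on $\spa{A}$ so that the corresponding vectors are true purifications. This is standard and harmless, and once handled the whole argument is a short combination of the vec identity, cyclicity of the trace, and trace-norm duality.
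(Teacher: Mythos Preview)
The paper does not prove Uhlmann's theorem; it is stated as a \emph{Fact}, i.e., a known result quoted without proof, so there is no in-paper argument to compare against. Your approach---the vectorization map together with trace-norm duality---is the standard textbook proof and is correct in substance.

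One small bookkeeping point: with $\ket{\Omega}=\sum_i\ket{i}_{\spa{A}}\ket{i}_{\spa{B}}$ and $\ket{M}=(I\otimes M)\ket{\Omega}$, the identity $\Tr_{\spa{A}}(\ketbra{M}{M})=MM^\dagger$ holds when $M$ is an operator on $\spa{B}$ (so that $\ket{M}\in\spa{A}\otimes\spa{B}$), not for $M:\spa{B}\to\spa{A}$ as you wrote; under the latter reading $(I\otimes M)\ket{\Omega}$ would land in $\spa{A}\otimes\spa{A}$. This is harmless once you observe that the given purification $\ket{\phi}$ has Schmidt rank at most $\dim\spa{B}$, so without loss of generality one may take $\dim\spa{A}=\dim\spa{B}$ and the two readings coincide. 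With that identification your $V,W$ become unitaries on $\spa{B}$, the duality $\max_{W}|\Tr(AW)|=\|A\|_1$ is the usual one, and $\|V^\dagger\sqrt{\rho}\sqrt{\sigma}\|_1=\|\sqrt{\rho}\sqrt{\sigma}\|_1$ follows from unitary invariance of the trace norm. The partial-isometry issue you anticipate is exactly this reduction and is indeed routine.
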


\begin{fact}\label{CPTPfidelity}
For any two quantum states $\rho,\sigma$ and a completely positive trace preserving operation $Q$, we have $F(\rho,\sigma) \le F(Q(\rho),Q(\sigma))$.
\end{fact}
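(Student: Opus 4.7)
The plan is to deduce the data-processing inequality for fidelity from the Fuchs--Caves characterization already recorded as Fact \ref{POVMfidelity}, by pulling back an optimal output POVM through the Heisenberg-picture adjoint of the channel.

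First I would introduce the dual (Heisenberg) map $Q^\ast$: the unique linear map on operators satisfying $\Tr(Q(X)\,Y) = \Tr(X\,Q^\ast(Y))$ for all operators $X,Y$. Because $Q$ is completely positive and trace-preserving, $Q^\ast$ is completely positive and unital, i.e.\ $Q^\ast(I)=I$. Consequently, if $\{E_i\}$ is any POVM on the output Hilbert space, then $\{Q^\ast(E_i)\}$ is a POVM on the input Hilbert space: each $Q^\ast(E_i)$ is positive semidefinite, and $\sum_i Q^\ast(E_i) = Q^\ast\!\left(\sum_i E_i\right) = Q^\ast(I) = I$. Moreover, the defining adjoint relation yields $\Tr(\rho\,Q^\ast(E_i)) = \Tr(Q(\rho)\,E_i)$, and likewise for $\sigma$, so the classical probability distributions obtained on the input side with POVM $\{Q^\ast(E_i)\}$ coincide exactly with those on the output side with POVM $\{E_i\}$.

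Next, I would invoke the tightness clause of Fact \ref{POVMfidelity} to choose a POVM $\{E_i\}$ on the output space that achieves
\begin{equation*}
F\bigl(Q(\rho),Q(\sigma)\bigr) \;=\; \sum_i \sqrt{\Tr\!\bigl(Q(\rho)E_i\bigr)\,\Tr\!\bigl(Q(\sigma)E_i\bigr)},
\end{equation*}
and apply the inequality clause of the same Fact to $\rho,\sigma$ with the pulled-back POVM $\{Q^\ast(E_i)\}$:
\begin{equation*}
F(\rho,\sigma) \;\le\; \sum_i \sqrt{\Tr\!\bigl(\rho\,Q^\ast(E_i)\bigr)\,\Tr\!\bigl(\sigma\,Q^\ast(E_i)\bigr)} \;=\; \sum_i \sqrt{\Tr\!\bigl(Q(\rho)E_i\bigr)\,\Tr\!\bigl(Q(\sigma)E_i\bigr)} \;=\; F\bigl(Q(\rho),Q(\sigma)\bigr),
\end{equation*}
which is the desired inequality.

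The only non-routine step is verifying that $Q^\ast$ is well-defined, completely positive, and unital, and therefore sends POVMs to POVMs. This follows from any standard description of CPTP maps: e.g.\ a Kraus representation $Q(X)=\sum_k A_k X A_k^\dagger$ with $\sum_k A_k^\dagger A_k = I$ yields $Q^\ast(Y)=\sum_k A_k^\dagger Y A_k$, which is visibly CP, and whose action on the identity is $\sum_k A_k^\dagger A_k = I$. An alternative, more geometric route would combine Uhlmann's theorem (the preceding Fact) with a Stinespring dilation $Q(\rho) = \Tr_E\bigl(U(\rho\otimes\kb{0})U^\dagger\bigr)$, lifting Uhlmann-optimal purifications of $\rho$ and $\sigma$ to purifications of $Q(\rho)$ and $Q(\sigma)$ whose overlap still equals $F(\rho,\sigma)$; but that route needs the full max-over-purifications form of Uhlmann, whereas the Fuchs--Caves pullback used above uses only the facts stated in the excerpt.
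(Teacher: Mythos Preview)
The paper does not supply a proof of this fact at all; it is recorded as one of several standard background facts without justification. Your argument via the Heisenberg adjoint and Fact~\ref{POVMfidelity} is correct and self-contained within the facts the paper lists, so there is nothing to compare against.
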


\begin{fact}[\cite{SR01,NS03}]~\label{Prop:FidelityInequality}
For any two quantum states $\rho, \sigma$ 
\[
    \max_{\xi} \left( F^2(\rho, \xi) + F^2(\xi, \sigma) \right)
    = 1 + \F(\rho,\sigma).
\]
\end{fact}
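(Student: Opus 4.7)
The plan is to prove the identity by matching upper and lower bounds. The lower bound comes from an explicit construction of an optimizer, and the upper bound from parametrizing by the Bures angle together with a trigonometric identity.

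For the lower bound, I will build an optimal $\xi$ from a ``midpoint purification'' of $\rho$ and $\sigma$. The case $F(\rho,\sigma)=1$, where $\rho=\sigma$, makes both sides equal to $2$ and is trivial, so assume $F(\rho,\sigma)<1$. By Uhlmann's theorem, fix purifications $|\phi\rangle$ of $\rho$ and $|\psi\rangle$ of $\sigma$ living in a common extended space such that $\langle\phi|\psi\rangle=F(\rho,\sigma)$, taken real and non-negative after adjusting phases. Define
\[
|\chi\rangle := \frac{|\phi\rangle + |\psi\rangle}{\sqrt{2\bigl(1+F(\rho,\sigma)\bigr)}}, \qquad \xi := \Tr_{\spa{A}}\bigl(|\chi\rangle\langle\chi|\bigr),
\]
where the partial trace is over the purifying register, so that $\xi$ sits on the same system as $\rho,\sigma$. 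A direct computation gives $\langle\chi|\chi\rangle=1$ and $\langle\phi|\chi\rangle = \sqrt{(1+F(\rho,\sigma))/2}$. Since $|\phi\rangle$ and $|\chi\rangle$ are purifications of $\rho$ and $\xi$ respectively, Uhlmann yields $F(\rho,\xi) \ge |\langle\phi|\chi\rangle| = \sqrt{(1+F(\rho,\sigma))/2}$; by symmetry $F(\xi,\sigma) \ge \sqrt{(1+F(\rho,\sigma))/2}$. Summing the squares gives $F^2(\rho,\xi)+F^2(\xi,\sigma) \ge 1+F(\rho,\sigma)$.

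For the upper bound I will pass to the \emph{Bures angle} $A(\rho,\sigma) := \arccos F(\rho,\sigma) \in [0,\pi/2]$. Setting $\alpha := A(\rho,\xi)$, $\beta := A(\xi,\sigma)$, and $\gamma := A(\rho,\sigma)$, the elementary identity
\[
\cos^2\alpha + \cos^2\beta = 1 + \cos(\alpha+\beta)\cos(\alpha-\beta)
\]
reduces the desired inequality $F^2(\rho,\xi)+F^2(\xi,\sigma) \le 1+F(\rho,\sigma)$ to $\cos(\alpha+\beta)\cos(\alpha-\beta) \le \cos\gamma$. This in turn follows from the triangle inequality $\gamma \le \alpha+\beta$ for the Bures angle by a short case split: if $\alpha+\beta \le \pi/2$ then $\cos$ is nonneg and nonincreasing on the relevant range, so $\cos(\alpha+\beta)\cos(\alpha-\beta) \le \cos(\alpha+\beta) \le \cos\gamma$; if $\alpha+\beta > \pi/2$ then $\cos(\alpha+\beta) \le 0$ while $\cos(\alpha-\beta)\ge 0$ (since $|\alpha-\beta|\le \pi/2$), so the left side is nonpositive and therefore bounded by $\cos\gamma \ge 0$.

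The remaining obstacle is the triangle inequality for the Bures angle, which I would derive from Uhlmann and the monotonicity of fidelity under partial trace (Fact \ref{CPTPfidelity}). Pick $|\phi\rangle,|\chi\rangle$ realizing Uhlmann for $(\rho,\xi)$, i.e.\ $|\langle\phi|\chi\rangle|=F(\rho,\xi)$; then apply Uhlmann a second time \emph{with the same} $|\chi\rangle$ to obtain a purification $|\psi\rangle$ of $\sigma$ with $|\langle\chi|\psi\rangle|=F(\xi,\sigma)$. The Fubini--Study angle $\arccos|\langle\cdot|\cdot\rangle|$ on unit vectors satisfies a triangle inequality (a standard fact, proven by adjusting overall phases so that the relevant inner products are real-positive and reducing to the spherical triangle inequality in the real subspace spanned by the three vectors). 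Hence $\arccos|\langle\phi|\psi\rangle| \le A(\rho,\xi)+A(\xi,\sigma)$. Since partial trace is CPTP, Fact \ref{CPTPfidelity} gives $|\langle\phi|\psi\rangle|=F(|\phi\rangle\langle\phi|,|\psi\rangle\langle\psi|) \le F(\rho,\sigma)$, and therefore $A(\rho,\sigma) \le \arccos|\langle\phi|\psi\rangle| \le A(\rho,\xi) + A(\xi,\sigma)$, closing the argument.
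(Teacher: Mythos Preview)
The paper does not actually prove this statement: it is stated as a \emph{Fact} with citations to \cite{SR01,NS03} and used as a black box (to derive the weak triangle inequality in Proposition~\ref{Prop:FidelityInequality2}). So there is no proof in the paper to compare your argument against.

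Your proof is correct. The lower bound via the normalized midpoint purification is the standard construction, and the computation checks out. For the upper bound, the trigonometric identity $\cos^2\alpha+\cos^2\beta = 1+\cos(\alpha+\beta)\cos(\alpha-\beta)$ cleanly reduces the problem to the triangle inequality for the Bures angle, and your case split on the sign of $\cos(\alpha+\beta)$ is sound. Note that the paper already records (just after this Fact) that $Angle(\rho,\sigma)=\arccos F(\rho,\sigma)$ is a metric, citing \cite[p.~413]{NC00}; so you could simply invoke that in place of your Uhlmann-plus-Fubini--Study derivation, though re-deriving it does no harm. One minor point: in your sketch of the Fubini--Study triangle inequality you speak of making all three inner products real-positive by phase adjustment, but with only two relative phases available you can in general force only two of the three; the standard argument handles the residual phase, and in any case the metric property is well established, so this does not affect correctness.
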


As a corollary of Fact~\ref{Prop:FidelityInequality}, we can show a \emph{weak triangle inequality} for the quantity $1-F$.
\begin{proposition}\label{Prop:FidelityInequality2}
For any 3 quantum states $\rho_1,\rho_2,\rho_3$, we have 
\begin{align*}
1 - F(\rho_1,\rho_3) \le  2(1 - F(\rho_1,\rho_2)) + 2(1 - F(\rho_2,\rho_3)).
\end{align*}
\end{proposition}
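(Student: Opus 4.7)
The plan is to apply Fact~\ref{Prop:FidelityInequality} directly with the middle state $\rho_2$ taken as the maximizer candidate $\xi$, then convert the resulting inequality on $F^2$ into one on $1-F$ using the elementary bound $1-F^2 \le 2(1-F)$ valid for $F \in [0,1]$.

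More precisely, I would first instantiate Fact~\ref{Prop:FidelityInequality} with $\rho = \rho_1$ and $\sigma = \rho_3$. Since the identity there is a maximum over $\xi$, plugging in the specific choice $\xi = \rho_2$ yields
\begin{equation*}
F^2(\rho_1,\rho_2) + F^2(\rho_2,\rho_3) \;\le\; 1 + F(\rho_1,\rho_3).
\end{equation*}
Rearranging,
\begin{equation*}
1 - F(\rho_1,\rho_3) \;\le\; \bigl(1 - F^2(\rho_1,\rho_2)\bigr) + \bigl(1 - F^2(\rho_2,\rho_3)\bigr).
\end{equation*}

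The second step is to bound each factor on the right by a linear-in-$F$ expression. For any $F \in [0,1]$, we have $1 - F^2 = (1-F)(1+F) \le 2(1-F)$, and fidelity between density operators always lies in $[0,1]$, so this applies to both $F(\rho_1,\rho_2)$ and $F(\rho_2,\rho_3)$. Substituting gives the claimed inequality
\begin{equation*}
1 - F(\rho_1,\rho_3) \;\le\; 2\bigl(1 - F(\rho_1,\rho_2)\bigr) + 2\bigl(1 - F(\rho_2,\rho_3)\bigr).
\end{equation*}

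There is essentially no obstacle here — the proof is a two-line manipulation of the prior fact. The only thing to double-check is that the factor of $2$ in the statement is tight with what this method produces; the bound $1 - F^2 \le 2(1-F)$ is exactly what yields the constant $2$, matching the proposition as stated, so nothing sharper is needed.
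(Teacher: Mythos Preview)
Your proof is correct and follows essentially the same approach as the paper: instantiate Fact~\ref{Prop:FidelityInequality} with $\xi = \rho_2$ to get $1 - F(\rho_1,\rho_3) \le (1 - F^2(\rho_1,\rho_2)) + (1 - F^2(\rho_2,\rho_3))$, then use $1 - F^2 \le 2(1-F)$ on each term.
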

\begin{proof}
Using Fact~\ref{Prop:FidelityInequality}, we have
\begin{align*}
1 + F(\rho_1,\rho_3) & = \max_{\xi} \left( F^2(\rho_1, \xi) + F^2(\xi, \rho_3) \right) \\
& \ge  F^2(\rho_1, \rho_2) + F^2(\rho_2, \rho_3),
\end{align*}
which gives
\begin{align*}
1 - F(\rho_1,\rho_3) \le 1 - F^2(\rho_1,\rho_2) + 1 - F^2(\rho_2,\rho_3) \le 2(1 - F(\rho_1,\rho_2)) + 2(1 - F(\rho_2,\rho_3)).
\end{align*}
\end{proof}

\begin{definition}
For any two states $\rho,\sigma$, we define $Angle(\rho,\sigma) = Arccos(F(\rho,\sigma))$. Angle is a distance for quantum states \cite[page 413]{NC00}.
\end{definition}

\begin{claim} \label{SuperClaim}
For any 4 quantum states $\rho_1,\rho_2,\rho_3,\rho_4$, we have $$
\Fbar(\rho_1,\rho_4) \le 3(\Fbar(\rho_1,\rho_2) + \Fbar(\rho_2,\rho_3) + \Fbar(\rho_3,\rho_4)).
$$
\end{claim}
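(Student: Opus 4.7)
The plan is to route the inequality through the angle metric. Define $\theta_{ij}=\mathrm{Angle}(\rho_i,\rho_j)=\arccos F(\rho_i,\rho_j)$ for $(i,j)\in\{(1,2),(2,3),(3,4),(1,4)\}$. Since $F\in[0,1]$, each $\theta_{ij}\in[0,\pi/2]$, and by the metric property of $\mathrm{Angle}$ stated just before the claim, we get the three-step triangle inequality
$$\theta_{14}\le\theta_{12}+\theta_{23}+\theta_{34}.$$
The goal is then to convert this into the desired bound on $\Fbar$ using the half-angle identity $\Fbar(\rho_i,\rho_j)=1-\cos\theta_{ij}=2\sin^2(\theta_{ij}/2)$.

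The core step I would isolate is the pointwise inequality
$$\sin(\theta_{14}/2)\le\sin(\theta_{12}/2)+\sin(\theta_{23}/2)+\sin(\theta_{34}/2).$$
Here I would split into two cases. If $(\theta_{12}+\theta_{23}+\theta_{34})/2\le\pi/2$, then $\sin$ is increasing on the relevant interval, so $\sin(\theta_{14}/2)\le\sin((\theta_{12}+\theta_{23}+\theta_{34})/2)$, and I would then apply the subadditivity of $\sin$ on $[0,\pi]$ (a one-line consequence of $\sin(a+b)=\sin a\cos b+\cos a\sin b$ and the nonnegativity of cosines on $[0,\pi/2]$) twice to peel off the three half-angles. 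If instead $(\theta_{12}+\theta_{23}+\theta_{34})/2>\pi/2$, I would use concavity of $\sin$ on $[0,\pi/2]$, which gives $\sin x\ge 2x/\pi$, so the right-hand side is at least $2(\theta_{12}+\theta_{23}+\theta_{34})/(2\pi)>1$, while the left-hand side is $\le\sin(\pi/4)=1/\sqrt{2}<1$.

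Once the sine inequality is in hand, I would square it and apply Cauchy--Schwarz $(a+b+c)^2\le 3(a^2+b^2+c^2)$ to obtain
$$\sin^2(\theta_{14}/2)\le 3\bigl(\sin^2(\theta_{12}/2)+\sin^2(\theta_{23}/2)+\sin^2(\theta_{34}/2)\bigr),$$
and then multiply both sides by $2$ and apply $2\sin^2(\theta_{ij}/2)=\Fbar(\rho_i,\rho_j)$ to conclude
$$\Fbar(\rho_1,\rho_4)\le 3\bigl(\Fbar(\rho_1,\rho_2)+\Fbar(\rho_2,\rho_3)+\Fbar(\rho_3,\rho_4)\bigr).$$

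The main obstacle is really just the bookkeeping in the sine inequality: one must resist the temptation to simply iterate Proposition~\ref{Prop:FidelityInequality2} twice, because that route produces asymmetric coefficients $(2,4,4)$ or $(4,4,2)$ rather than the tight symmetric constant $3$ coming from Cauchy--Schwarz (which is sharp in the small-angle regime where $\Fbar\sim\theta^2/2$). Routing through the angle metric and the subadditivity/concavity of $\sin$ is what captures the right constant.
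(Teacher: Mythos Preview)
Your proof is correct and shares the paper's overall strategy: both reduce the claim to the triangle inequality for the angle metric $\mathrm{Angle}(\rho,\sigma)=\arccos F(\rho,\sigma)$ and then translate the resulting bound on angles back into a bound on $\Fbar=1-\cos$. The difference is only in the trigonometric conversion step. The paper writes $1-\cos\alpha\le 9\bigl(1-\cos(\alpha/3)\bigr)$ (a one-line consequence of the triple-angle formula, equivalent to $(2\cos(\alpha/3)+1)^2\le 9$) and then applies convexity of $1-\cos$ on $[0,\pi/2]$ (Jensen) to get $9\bigl(1-\cos(\alpha/3)\bigr)\le 3\sum_i(1-\cos\alpha_i)$. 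You instead pass to half-angles via $\Fbar=2\sin^2(\theta/2)$, establish $\sin(\theta_{14}/2)\le\sum\sin(\theta_{ij}/2)$ by subadditivity of $\sin$ (with your case split handling the edge case where the half-angle sum exceeds $\pi/2$), and then square and apply Cauchy--Schwarz. Both routes capture the same quadratic small-angle scaling and deliver the same constant $3$; the paper's version is slightly shorter because it avoids the case split, while your version makes the role of Cauchy--Schwarz (and hence the sharpness of the constant in the equal-angle regime) more transparent.
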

\begin{proof}
Let $\alpha = Angle(\rho_1,\rho_4)$. Let also $\alpha_1 = Angle(\rho_1,\rho_2)$, $\alpha_2 = Angle(\rho_2,\rho_3)$, $\alpha_3 = Angle(\rho_3,\rho_4)$. Since $Angle$ is a distance on quantum states, we have $\alpha \le \alpha_1 + \alpha_2 + \alpha_3$. We have 
$$
1 - \cos(\alpha) \le 9(1 - \cos(\alpha/3)) \le 3(1 - \cos(\alpha_1) + 1 - \cos(\alpha_2) + 1 - \cos(\alpha_3)),
$$
where the first inequality can be shown analytically and the second one comes from convexity of the function $1 - \cos$.  From there, we conclude 
$$
\Fbar(\rho_1,\rho_4) \le 3(\Fbar(\rho_1,\rho_2) + \Fbar(\rho_2,\rho_3) + \Fbar(\rho_3,\rho_4)).
$$
\end{proof}

\begin{proposition}\label{Prop:FidelityLast}
For two quantum states $\rho = \sum_{x} p_x \altketbra{x} \otimes \rho_x$ and $\rho' = \sum_x p'_x \altketbra{x} \otimes \rho'_x$, we have $F(\rho,\rho') = \sum_x \sqrt{p_x p_{x'}} F(\rho_x,\rho_{x'})$.
\end{proposition}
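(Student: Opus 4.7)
The plan is to perform a direct calculation from the definition $F(\rho,\rho') = \operatorname{Tr}(\sqrt{\rho^{1/2}\rho'\rho^{1/2}})$, exploiting the fact that both states are block-diagonal with respect to the mutually orthogonal projectors $\{\altketbra{x}\otimes I\}_x$ on the classical register. The orthogonality of the $\ket{x}$'s is the only structural fact doing real work, and it guarantees that every operator built out of $\rho$ and $\rho'$ preserves this block decomposition, so all cross-terms between distinct values of $x$ vanish automatically.

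Concretely, I would first observe that $\rho^{1/2} = \sum_x \sqrt{p_x}\,\altketbra{x}\otimes \rho_x^{1/2}$, which is immediate since each block $p_x\altketbra{x}\otimes \rho_x$ is positive semidefinite and the blocks are supported on orthogonal subspaces of the classical register. Next I would compute
\begin{align*}
\rho^{1/2}\rho'\rho^{1/2}
&= \sum_{x,y,z} \sqrt{p_x p_z}\, p'_y\, \altketbra{x}\cdot\altketbra{y}\cdot\altketbra{z}\otimes \rho_x^{1/2}\rho'_y\rho_z^{1/2} \\
&= \sum_x p_x p'_x\, \altketbra{x}\otimes \rho_x^{1/2}\rho'_x\rho_x^{1/2},
\end{align*}
where the collapse to a single index uses $\braket{x}{y}=\delta_{x,y}$. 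Taking the square root of this block-diagonal positive operator again preserves the block structure, giving $\sum_x \sqrt{p_x p'_x}\,\altketbra{x}\otimes\sqrt{\rho_x^{1/2}\rho'_x\rho_x^{1/2}}$. Finally, taking the trace and using $\operatorname{Tr}(\altketbra{x}) = 1$ yields $F(\rho,\rho')=\sum_x \sqrt{p_x p'_x}\,F(\rho_x,\rho'_x)$, which (interpreting the apparent typo in the statement) is the claimed identity.

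There is essentially no serious obstacle here; the proof is a routine computation once one sets up the block decomposition. The only subtlety worth double-checking is that $\sqrt{\,\cdot\,}$ of a block-diagonal positive semidefinite operator equals the sum of the square roots of its blocks, which follows because the blocks are supported on orthogonal subspaces, so the unique positive square root decomposes across them.
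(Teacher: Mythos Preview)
Your proof is correct and follows essentially the same block-diagonal computation as the paper; the only cosmetic difference is that the paper starts from the equivalent form $F(\rho,\rho') = \|\sqrt{\rho}\sqrt{\rho'}\|_1$ and uses additivity of the trace norm over orthogonal blocks, whereas you start from $\operatorname{Tr}\sqrt{\rho^{1/2}\rho'\rho^{1/2}}$.
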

\begin{proof}
We use the following definition  of the fidelity: $F(\rho,\rho') = ||\sqrt{\rho}\sqrt{\rho'}||_1$. From there, we immediately have that 
\[ F(\rho,\rho') = \sum_{x} \sqrt{p_x p_{x'}} ||\sqrt{\rho_x}\sqrt{\rho'_x}||_1 = \sum_{x} \sqrt{p_x p_{x'}} F(\rho_x,\rho_{x'}). \]
\end{proof}

\subsection{Information Theory}\label{Section:InformationTheory}
\paragraph{Quantum registers and measured quantum registers}
For a quantum state $\rho$ and a quantum register $X$, we will write $\rho^X$ the reduced state of $\rho$ on register $X$.
For a quantum register $X$, $\X$ corresponds to this register after it was measured in the computational basis. For example, for a quantum pure state $\ket{\phi} = \sum_{x} \sqrt{p_x} \ket{x}_X \otimes \ket{Z_x}_Z$, we have $\ket{\phi}^X = Tr_Z \ket{\phi}$ and $\ket{\phi}^{\X} = \sum_x p_x \kb{x}$. \\

For a quantum state $\rho$, the entropy of $\rho$ is $
S(\rho) = - \Tr(\rho \log(\rho))$.
For a quantum state $\rho \in \spa{X} \otimes \spa{Y}$, $S(X)_\rho$ is the entropy of the quantum register in the space $\spa{X}$ when the total underlying state is $\rho$. In other words, $S(X)_\rho = S(\rho^X) = S(\Tr_{\spa{Y}} (\rho)) $.

$S(X|Y)_{\rho} = S(XY)_{\rho} - S(Y)_{\rho}$ is the conditional entropy of $X$ given $Y$ on $\rho$ and $I(X:Y)_\rho = S(X)_\rho + S(Y)_\rho - S(XY)_\rho $ is the mutual information between $X$ and $Y$ on $\rho$.

For a pair of quantum states $\rho,\sigma$, the relative entropy of $\rho$ with respect to $\sigma$ is defined by $S(\rho || \sigma) = \Tr(\rho\log(\rho)) - \Tr(\rho\log(\sigma))$. It can be shown that $I(X:Y)_{\rho} = S(\rho^{XY} || \rho^X \otimes \rho^Y)$.

The min-relative entropy of $\rho$ with respect to $\sigma$ is defined by $S_{\infty}(\rho || \sigma) = \min\{k: \rho \le 2^k \sigma\}$. 

\begin{fact}[Subadditivity of the conditional entropy]
\begin{align*}
S(AB|C) \le S(A|C) + S(B|C)
\end{align*}
\end{fact}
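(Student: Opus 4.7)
The plan is to reduce the statement to strong subadditivity of the von Neumann entropy, which is the standard nontrivial ingredient for inequalities of this type. Recall that strong subadditivity (Lieb--Ruskai) states that for any tripartite state $\rho_{ABC}$,
\[
    S(ABC)_\rho + S(C)_\rho \le S(AC)_\rho + S(BC)_\rho.
\]
This is the one deep fact I would invoke as a black box; everything else is definitional.

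First I would expand both sides of the desired inequality using the definition of conditional entropy. On the left, $S(AB|C) = S(ABC) - S(C)$. On the right,
\[
    S(A|C) + S(B|C) = \bigl(S(AC) - S(C)\bigr) + \bigl(S(BC) - S(C)\bigr) = S(AC) + S(BC) - 2S(C).
\]
Therefore the inequality $S(AB|C) \le S(A|C) + S(B|C)$ is equivalent, after adding $S(C)$ to both sides, to
\[
    S(ABC) + S(C) \le S(AC) + S(BC),
\]
which is precisely strong subadditivity. That ends the proof.

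The only obstacle is strong subadditivity itself, which is genuinely nontrivial but is treated here as a standard tool (it can be cited from \cite{NC00}, for instance). Once that is assumed, the fact is a one-line algebraic rearrangement and no further calculation is required. I would present it in exactly this form in the paper: expand the conditional entropies, observe that the inequality reduces to strong subadditivity, and cite the latter.
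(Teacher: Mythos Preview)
Your argument is correct: expanding the conditional entropies and rearranging shows the inequality is exactly strong subadditivity, so citing Lieb--Ruskai (or \cite{NC00}) finishes it. The paper itself does not prove this statement at all---it is listed as a \emph{Fact} in the preliminaries and invoked without justification---so your derivation is more than what the paper provides, and it is the standard route.
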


\begin{fact}[\cite{JPY13}]\label{Claim:DC}
$
S(\rho || \sigma) \ge 1 - F(\rho,\sigma)$.
This immediately implies $I(X:Y)_{\rho} \ge 1 - F(\rho,\rho^X \otimes \rho^Y)$.
\end{fact}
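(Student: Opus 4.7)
The plan is to reduce the quantum bound to a classical Pinsker-type inequality by measuring with a well-chosen POVM. Pick the POVM $E = \{E_i\}$ guaranteed by Fact~\ref{POVMfidelity} that achieves equality $F(\rho,\sigma) = \sum_i \sqrt{p_i q_i}$, where $p_i = \Tr(\rho E_i)$ and $q_i = \Tr(\sigma E_i)$. Applying monotonicity of quantum relative entropy (Lindblad) under the CPTP map that implements this measurement, I get $S(\rho\|\sigma) \ge D(p\|q)$, where $D$ denotes the classical Kullback--Leibler divergence. It then suffices to prove the classical inequality $D(p\|q) \ge 1 - \sum_i \sqrt{p_i q_i}$.

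For the classical step I would rewrite
\[ D(p\|q) = -2 \sum_i p_i \log \sqrt{q_i/p_i}, \]
apply Jensen's inequality using concavity of $\log$ to obtain $D(p\|q) \ge -2\log\sum_i \sqrt{p_i q_i}$, and finally use the elementary bound $-\log t \ge 1 - t$ valid for all $t > 0$. This yields $D(p\|q) \ge 2\bigl(1 - \sum_i \sqrt{p_i q_i}\bigr)$, which is in fact strictly stronger than the stated bound. Combining with the choice of POVM gives $S(\rho\|\sigma) \ge 2(1 - F(\rho,\sigma))$ and in particular $S(\rho\|\sigma) \ge 1 - F(\rho,\sigma)$.

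The mutual information corollary is then immediate from the identity $I(X:Y)_\rho = S(\rho^{XY}\|\rho^X \otimes \rho^Y)$ recalled in Section~\ref{Section:InformationTheory}: just set $\rho := \rho^{XY}$ and $\sigma := \rho^X \otimes \rho^Y$ in the main inequality. The main technical ingredient is the appeal to monotonicity of quantum relative entropy under CPTP maps, which is a standard but nontrivial theorem that I would invoke by citation rather than reprove; everything else is elementary manipulation of classical inequalities combined with the existence of a fidelity-saturating POVM from Fact~\ref{POVMfidelity}.
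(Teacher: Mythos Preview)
Your argument is correct. Note that the paper does not supply its own proof of this statement: it is recorded as a Fact with a citation to \cite{JPY13}, so there is nothing to compare against beyond checking your reasoning.

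Your route---reduce to the classical case via a fidelity-saturating POVM (Fact~\ref{POVMfidelity}) together with monotonicity of relative entropy under CPTP maps, then bound $D(p\|q) \ge -2\log\sum_i\sqrt{p_i q_i}$ by Jensen, and finish with $-\log t \ge 1-t$---is the standard one and indeed yields the stronger constant $2$. One small caveat: the elementary bound $-\log t \ge 1-t$ holds for all $t>0$ only with the natural logarithm; if $\log$ is base $2$ (as is implicit in the entropy conventions here), the inequality can fail for $t>1$. This is harmless in your application because $t=\sum_i\sqrt{p_iq_i}\le 1$ by Cauchy--Schwarz, and for $t\in(0,1]$ one has $-\log_2 t \ge -\ln t \ge 1-t$. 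It would be worth making this explicit. The mutual-information corollary then follows exactly as you say from $I(X:Y)_\rho = S(\rho^{XY}\|\rho^X\otimes\rho^Y)$.
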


\begin{proposition}
Let $\sigma^{12}, \rho^1, \rho^2$ three classical states. We have 
$$
S(\sigma^{12} || \rho^1 \otimes \rho^2) \ge S(\sigma^1 || \rho^1) + S(\sigma^2 || \rho^2)
$$
\end{proposition}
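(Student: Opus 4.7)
The plan is to unfold the definition of relative entropy in the classical case and split it into a mutual-information term plus the two single-system relative entropies. Writing $\sigma^{12}(x,y)$ for the joint distribution and $\sigma^1(x) = \sum_y \sigma^{12}(x,y)$, $\sigma^2(y) = \sum_x \sigma^{12}(x,y)$ for its marginals, the relative entropy becomes
\[
S(\sigma^{12} \,\|\, \rho^1 \otimes \rho^2) \;=\; \sum_{x,y} \sigma^{12}(x,y) \log \frac{\sigma^{12}(x,y)}{\rho^1(x)\,\rho^2(y)}.
\]
The key algebraic trick is to multiply and divide the argument of the logarithm by $\sigma^1(x)\sigma^2(y)$, which is a standard ``chain-rule'' decomposition for KL divergence.

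After this splitting, the sum becomes
\[
\sum_{x,y} \sigma^{12}(x,y) \log \frac{\sigma^{12}(x,y)}{\sigma^1(x)\sigma^2(y)} \;+\; \sum_{x,y} \sigma^{12}(x,y) \log \frac{\sigma^1(x)\sigma^2(y)}{\rho^1(x)\rho^2(y)}.
\]
The first summand is exactly the mutual information $I(X\!:\!Y)_{\sigma^{12}}$, which is nonnegative (this is just the nonnegativity of KL divergence applied to $\sigma^{12}$ vs.\ its product of marginals). The second summand factors: carrying out the sum over $y$ in the part involving $\sigma^1(x)/\rho^1(x)$ and similarly the sum over $x$ in the other part yields precisely $S(\sigma^1 \,\|\, \rho^1) + S(\sigma^2 \,\|\, \rho^2)$.

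Combining these observations gives the stronger identity
\[
S(\sigma^{12} \,\|\, \rho^1 \otimes \rho^2) \;=\; I(X\!:\!Y)_{\sigma^{12}} + S(\sigma^1 \,\|\, \rho^1) + S(\sigma^2 \,\|\, \rho^2),
\]
from which the desired inequality follows by dropping the nonnegative mutual-information term. There is no genuine obstacle here: the whole argument is classical and reduces to the standard chain rule for KL divergence together with nonnegativity of mutual information. The only thing to be careful about is making sure all three marginals $\rho^1, \rho^2, \sigma^1, \sigma^2$ are taken over the same underlying alphabets so that the product $\rho^1 \otimes \rho^2$ and the reference distributions line up coordinate by coordinate, which is immediate from the statement that all states are classical on the joint system.
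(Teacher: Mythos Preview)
Your proof is correct and essentially the same as the paper's: the paper applies the chain rule $S(\sigma^{12}\|\rho^1\otimes\rho^2)=S(\sigma^1\|\rho^1)+\E_{x}S(\sigma^2_x\|\rho^2)$ and then invokes convexity of $S(\cdot\|\rho^2)$, and that convexity gap is exactly the mutual information $I(X{:}Y)_{\sigma^{12}}$ you isolate. Your version has the small advantage of recording the exact identity $S(\sigma^{12}\|\rho^1\otimes\rho^2)=I(X{:}Y)_{\sigma^{12}}+S(\sigma^1\|\rho^1)+S(\sigma^2\|\rho^2)$ rather than just the inequality.
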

\begin{proof}
We write $\sigma^{12} = \sum_x q_x \altketbra{x} \otimes \sigma^2_x$. Using the chain rule for relative entropy, we have 
\begin{align*}
S(\sigma^{12} || \rho^1 \otimes \rho^2) & = S(\sigma^1 || \rho^1) + \E_{x \leftarrow q_x} S(\sigma^2_x || \rho^2) \\
& \ge S(\sigma^1 || \rho^1) + S(\E_{x \leftarrow q_x}  \sigma^2_x || \rho^2) \\
& = S(\sigma^1 || \rho^1) + S(\sigma^2 || \rho^2).
\end{align*}
\end{proof}
\begin{corollary}\label{Corollary:RelativeEntropySuperadditivity}
Let $\sigma^{Z}$ and $\rho^Z$ some classical distribution with $Z = Z_1 \otimes \dots \otimes Z_n$ and $\rho^Z = \rho^{Z_1} \otimes \dots \otimes \rho^{Z_n}$. We have
$S(\sigma^Z || \rho^Z) \ge \sum_i S(\sigma^{Z_i} || \rho^{Z_i})$.
\end{corollary}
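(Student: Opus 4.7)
My plan is a straightforward induction on $n$ using the preceding proposition as the base case and inductive ingredient. The key observation is that the proposition already handles the bipartition case, and the product structure of $\rho^Z$ is preserved under regrouping registers, so iterating the bipartite bound is all that is required.

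For the base case $n=1$ the inequality is an equality, and for $n=2$ it is precisely the statement of the preceding proposition applied with the two registers $Z_1$ and $Z_2$ (both $\sigma$ and $\rho$ being classical). For the inductive step, I would group $Z_1 \cdots Z_{n-1}$ into a single composite classical register and view $Z_n$ as the second register. Because $\rho^Z = \rho^{Z_1}\otimes\cdots\otimes\rho^{Z_n}$ is fully product, we have $\rho^Z = \rho^{Z_1\cdots Z_{n-1}}\otimes\rho^{Z_n}$ where $\rho^{Z_1\cdots Z_{n-1}} = \rho^{Z_1}\otimes\cdots\otimes\rho^{Z_{n-1}}$, so the proposition applies and gives
\[
S(\sigma^Z \,\|\, \rho^Z) \;\ge\; S\bigl(\sigma^{Z_1\cdots Z_{n-1}} \,\big\|\, \rho^{Z_1\cdots Z_{n-1}}\bigr) + S(\sigma^{Z_n} \,\|\, \rho^{Z_n}).
\]
Applying the induction hypothesis to the first term bounds it below by $\sum_{i=1}^{n-1} S(\sigma^{Z_i}\,\|\,\rho^{Z_i})$, which combined with the $Z_n$ term yields the desired $\sum_{i=1}^n S(\sigma^{Z_i}\,\|\,\rho^{Z_i})$.

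There is no real obstacle here: the only things to verify are that the marginalization used in the proposition (taking the reduced state on one side of the bipartition) really gives $\sigma^{Z_1\cdots Z_{n-1}}$ and $\sigma^{Z_n}$, which is immediate from the definition of the marginal of a classical distribution, and that the product structure of $\rho^Z$ restricts to a product structure on any sub-block, which is obvious from the tensor form. So the proof is essentially a one-line induction and I would present it as such, appealing to the preceding proposition at each step.
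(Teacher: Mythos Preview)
Your proposal is correct and matches the paper's intent: the corollary is stated immediately after the bipartite proposition with no explicit proof, so the intended argument is exactly the straightforward induction you describe.
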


The following facts were used in \cite{JPY13}.
\begin{fact} \label{f7}
$
S_{\infty} (\rho || \sigma) \ge S(\rho || \sigma).$
\end{fact}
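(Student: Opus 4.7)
The plan is to exploit operator monotonicity of the logarithm. Let $k = S_\infty(\rho \| \sigma)$, so by definition $\rho \leq 2^k \sigma$. First, I would handle the trivial case: if $\mathrm{supp}(\rho) \not\subseteq \mathrm{supp}(\sigma)$, then $S(\rho \| \sigma) = +\infty$ but also there is no finite $k$ with $\rho \leq 2^k \sigma$, so $S_\infty(\rho \| \sigma) = +\infty$ and the inequality is vacuous. Assume then $\mathrm{supp}(\rho) \subseteq \mathrm{supp}(\sigma)$, and work on that support so that $\log \sigma$ is well defined.

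Next, I would apply the fact that the logarithm is operator monotone on positive operators: from $\rho \leq 2^k \sigma$ we get
\[
\log \rho \;\leq\; \log(2^k \sigma) \;=\; k\, I + \log \sigma.
\]
Then I would test this operator inequality against $\rho$. Since $\rho \geq 0$, the map $A \mapsto \Tr(\rho A)$ is positivity preserving (indeed $\Tr(\rho(B-A)) = \Tr(\rho^{1/2}(B-A)\rho^{1/2}) \geq 0$ whenever $A \leq B$), so taking traces yields
\[
\Tr(\rho \log \rho) \;\leq\; k\,\Tr(\rho) + \Tr(\rho \log \sigma) \;=\; k + \Tr(\rho \log \sigma).
\]
Rearranging gives $S(\rho \| \sigma) = \Tr(\rho \log \rho) - \Tr(\rho \log \sigma) \leq k = S_\infty(\rho \| \sigma)$, which is the desired inequality.

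The only subtle point is justifying the operator-monotonicity step; this is a well-known property of $\log$ (unlike, say, the exponential, which is \emph{not} operator monotone), and I would simply cite it from a standard reference such as Bhatia's \emph{Matrix Analysis} or Nielsen--Chuang. Everything else is a one-line manipulation, so I do not anticipate any real obstacle beyond the support issue noted above.
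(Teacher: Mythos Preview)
Your argument is correct. Note that the paper does not actually supply a proof of this Fact; it is simply quoted (with attribution to \cite{JPY13}) and used later. So there is no ``paper's own proof'' to compare against, and your operator-monotonicity argument is exactly the standard way to see it.

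One small technical point worth being explicit about: after restricting to $\mathrm{supp}(\sigma)$, the state $\rho$ may still fail to be strictly positive there, so $\log\rho$ is not literally defined as an operator on that whole subspace. The cleanest fix is the usual $\varepsilon$-perturbation: apply operator monotonicity of $\log$ to $\rho+\varepsilon I \le 2^k\sigma+\varepsilon I$ (now both strictly positive), trace against $\rho$, and let $\varepsilon\downarrow 0$; the quantity $\Tr(\rho\log(\rho+\varepsilon I))$ converges to $\Tr(\rho\log\rho)$ with the convention $0\log 0=0$, and $\Tr(\rho\log(2^k\sigma+\varepsilon I))\to k+\Tr(\rho\log\sigma)$ since $\mathrm{supp}(\rho)\subseteq\mathrm{supp}(\sigma)$. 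This is routine and does not change the structure of your proof.
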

\begin{fact} \label {f8}$
S(\rho^{XY} || \rho^X \otimes \rho^Y) \le S(\rho^{XY} || \sigma^X \otimes \sigma^Y)$
for any $\rho,\sigma$. 
\end{fact}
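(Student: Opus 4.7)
The plan is to prove the Pythagoras-like identity
$$S(\rho^{XY} || \sigma^X \otimes \sigma^Y) = S(\rho^{XY} || \rho^X \otimes \rho^Y) + S(\rho^X || \sigma^X) + S(\rho^Y || \sigma^Y),$$
from which Fact \ref{f8} follows immediately since each of the two extra terms on the right is non-negative by Klein's inequality. Informally, this says that the product of marginals $\rho^X \otimes \rho^Y$ is the unique minimizer (in relative entropy) of $S(\rho^{XY} || \tau^X \otimes \tau^Y)$ over product states, and the ``error'' introduced by using $\sigma^X \otimes \sigma^Y$ instead splits additively between the two factors. In particular this recovers the standard identity $I(X:Y)_\rho = S(\rho^{XY} || \rho^X \otimes \rho^Y)$ as a \emph{minimum} over product states.

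To establish the identity I would expand using the definition $S(\tau || \mu) = \Tr(\tau \log \tau) - \Tr(\tau \log \mu)$. The key algebraic observation is that for positive operators on a tensor product space, $\log(\sigma^X \otimes \sigma^Y) = \log \sigma^X \otimes I_{\spa{Y}} + I_{\spa{X}} \otimes \log \sigma^Y$. Substituting this into the definition and using the partial-trace identities $\Tr_{\spa{Y}}(\rho^{XY}) = \rho^X$ and $\Tr_{\spa{X}}(\rho^{XY}) = \rho^Y$, the cross term reduces to $\Tr(\rho^X \log \sigma^X) + \Tr(\rho^Y \log \sigma^Y)$. Adding and subtracting $\Tr(\rho^X \log \rho^X) + \Tr(\rho^Y \log \rho^Y)$ and regrouping then yields the identity above.

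The only subtle point is the support condition: if either $\Supp(\rho^X) \not\subseteq \Supp(\sigma^X)$ or $\Supp(\rho^Y) \not\subseteq \Supp(\sigma^Y)$, then the right-hand side of Fact \ref{f8} is $+\infty$ by the standard convention and the inequality is trivial; otherwise every term above is finite and the manipulation of logarithms is justified. I do not anticipate any real obstacle here—this is a classical identity whose quantum analogue reduces to the single algebraic calculation just sketched.
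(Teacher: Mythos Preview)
Your argument is correct: the Pythagoras-type identity
\[
S(\rho^{XY} \,\|\, \sigma^X \otimes \sigma^Y) \;=\; S(\rho^{XY} \,\|\, \rho^X \otimes \rho^Y) + S(\rho^X \,\|\, \sigma^X) + S(\rho^Y \,\|\, \sigma^Y)
\]
follows exactly from the expansion you describe, and Klein's inequality then gives Fact~\ref{f8}. The support caveat is handled correctly.

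Note that the paper itself does not prove this statement at all; it is listed as a Fact with a citation to \cite{JPY13}. So there is no ``paper's own proof'' to compare against. Your approach is the standard one and is essentially what lies behind the cited fact.
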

\begin{fact}\label{Claim:RelEntTrace}
For any states $\rho,\sigma$ each in space $\spa{XY}$, we have $ S(\rho || \sigma) \ge S(\rho^X || \sigma^X)$.
\end{fact}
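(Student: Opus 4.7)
The plan is to derive this as an immediate consequence of the monotonicity (data processing) inequality for quantum relative entropy under completely positive trace preserving (CPTP) maps. Recall that Lindblad's theorem states that for any CPTP map $\Phi$ and any pair of density operators $\rho, \sigma$ on the input space, one has $S(\Phi(\rho) \,\|\, \Phi(\sigma)) \le S(\rho \,\|\, \sigma)$.

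Given this, I would observe that the partial trace $\Tr_{\spa{Y}} : \mathcal{B}(\spa{XY}) \to \mathcal{B}(\spa{X})$ is the prototypical CPTP map (it has the Kraus representation $\Tr_{\spa{Y}}(\tau) = \sum_j (I_{\spa{X}} \otimes \bra{j}_{\spa{Y}}) \tau (I_{\spa{X}} \otimes \ket{j}_{\spa{Y}})$ for any orthonormal basis $\{\ket{j}\}$ of $\spa{Y}$). Applying Lindblad's monotonicity with $\Phi = \Tr_{\spa{Y}}$ to the two states $\rho, \sigma \in \mathcal{B}(\spa{XY})$ then yields
\[
S(\rho \,\|\, \sigma) \;\ge\; S\bigl(\Tr_{\spa{Y}}(\rho) \,\big\|\, \Tr_{\spa{Y}}(\sigma)\bigr) \;=\; S(\rho^X \,\|\, \sigma^X),
\]
which is exactly the claimed inequality.

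The only real content is Lindblad's theorem itself, which I would invoke as a black box since it is a standard textbook result (see, \emph{e.g.}, Nielsen--Chuang). If a self-contained argument were required, the standard route would be to first reduce via the Stinespring dilation to the case where the CPTP map is a partial trace after an isometry, and then handle the partial trace directly using Klein's inequality $\Tr(\rho \log \rho - \rho \log \sigma) \ge 0$ together with the operator concavity of the logarithm; both ingredients follow from the operator convexity of $x \mapsto x \log x$. The main obstacle in a from-scratch proof is establishing this operator convexity, but given that Fact~\ref{f7} and Fact~\ref{f8} are stated as facts borrowed from \cite{JPY13}, it is natural to do the same here and simply cite monotonicity under CPTP maps.
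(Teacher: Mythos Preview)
Your proposal is correct: the fact is exactly the data processing inequality for quantum relative entropy specialized to the partial trace, and invoking Lindblad's monotonicity theorem is the standard way to justify it. The paper itself does not give a proof of this statement at all---it is listed among the preliminary facts without argument---so there is nothing further to compare.
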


\begin{proposition}
For any pure state $\ket{\phi}$ in $\spa{A} \otimes \spa{B}$, we have 
$$
\kb{\phi} \le |B|^2 (\kb{\phi^A} \otimes \kb{\phi^B}).$$
\end{proposition}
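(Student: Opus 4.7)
My plan is to reduce the claimed operator inequality to a scalar bound on the Schmidt coefficients of $\ket{\phi}$, via the Schmidt decomposition. First I would write $\ket{\phi}=\sum_{i=1}^{r}\sqrt{\lambda_i}\,\ket{a_i}_A\otimes\ket{b_i}_B$ with $\{\ket{a_i}\}\subset\spa{A}$ and $\{\ket{b_i}\}\subset\spa{B}$ orthonormal, $\lambda_i>0$, $\sum_i\lambda_i=1$, and Schmidt rank $r\le \min(|A|,|B|)\le|B|$. Tracing out each side immediately gives the marginals appearing on the right-hand side: $\phi^A=\sum_i\lambda_i\kb{a_i}$ and $\phi^B=\sum_i\lambda_i\kb{b_i}$, which in the paper's notation are $\kb{\phi^A}$ and $\kb{\phi^B}$.

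Next I would invoke the standard reformulation of operator inequalities: for any PSD operator $\sigma$ whose support contains $\ket{\phi}$, the inequality $\kb{\phi}\le c\,\sigma$ is equivalent to $\|\sigma^{-1/2}\ket{\phi}\|^2\le c$, where the inverse is taken on the support of $\sigma$. Applying this with $\sigma=\phi^A\otimes\phi^B$ and working directly in the Schmidt basis yields $\sigma^{-1/2}\ket{\phi}=\sum_i\lambda_i^{-1/2}\,\ket{a_i b_i}$, so the operator inequality is equivalent to the scalar condition $\sum_i\lambda_i^{-1}\le|B|^2$.

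The main obstacle is establishing this scalar bound from the two constraints $\sum_i\lambda_i=1$ and $r\le|B|$. Here I would leverage the Schmidt-rank bound, which ensures that at most $|B|$ non-zero summands contribute to $\sum_i\lambda_i^{-1}$, and combine it with a convexity argument on the simplex of Schmidt spectra: the maximally entangled case $\lambda_i=1/|B|$ is precisely the extremal configuration that saturates the bound, giving $\sum_i\lambda_i^{-1}=|B|\cdot|B|=|B|^2$. Obtaining the precise constant $|B|^2$ from this extremal analysis is the technical heart of the proposition, and is where a careful formal write-up (either via Cauchy--Schwarz applied to $r=(\sum_i\sqrt{\lambda_i}\cdot\lambda_i^{-1/2})^2\le(\sum_i\lambda_i)(\sum_i\lambda_i^{-1})$ combined with the rank bound, or via an explicit optimization over admissible Schmidt spectra) would be required.
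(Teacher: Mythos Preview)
Your reduction via the Schmidt decomposition and the criterion $\kb{\phi}\le c\,\sigma \Longleftrightarrow \|\sigma^{-1/2}\ket{\phi}\|^2\le c$ is correct, and it cleanly isolates the scalar condition $\sum_{i=1}^{r}\lambda_i^{-1}\le |B|^2$. The gap is that this scalar inequality is false. Take $|B|=2$ and $\lambda_1=1-\delta$, $\lambda_2=\delta$; then $\sum_i\lambda_i^{-1}=\tfrac{1}{1-\delta}+\tfrac{1}{\delta}\to\infty$ as $\delta\to 0^+$, far exceeding $|B|^2=4$. Both arguments you propose actually point the wrong way: your Cauchy--Schwarz step reads $r^2=\bigl(\sum_i\sqrt{\lambda_i}\,\lambda_i^{-1/2}\bigr)^2\le\bigl(\sum_i\lambda_i\bigr)\bigl(\sum_i\lambda_i^{-1}\bigr)=\sum_i\lambda_i^{-1}$, i.e.\ a \emph{lower} bound $\sum_i\lambda_i^{-1}\ge r^2$; and the map $(\lambda_1,\dots,\lambda_r)\mapsto\sum_i\lambda_i^{-1}$ is convex on the simplex, so the uniform point $\lambda_i=1/r$ is its \emph{minimum}, not its maximum. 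There is no way to repair this step, because the target inequality does not hold.

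Indeed, your (correct) reduction shows that the proposition as stated is false: for $\ket{\phi}=\sqrt{1-\delta}\,\ket{00}+\sqrt{\delta}\,\ket{11}$ with $\delta<1/4$ one has $\bra{11}\,\kb{\phi}\,\ket{11}=\delta>4\delta^2=|B|^2\,\bra{11}(\phi^A\otimes\phi^B)\ket{11}$, so $|B|^2(\phi^A\otimes\phi^B)-\kb{\phi}$ has a negative diagonal entry. The paper's own argument computes $\bra{\phi}(\phi^A\otimes\phi^B)\ket{\phi}=\sum_i p_i^3\ge |B|^{-2}$ (true, by the power-mean inequality) and then concludes $\phi^A\otimes\phi^B\ge |B|^{-2}\kb{\phi}$; but for general PSD $\sigma$ the implication $\bra{\phi}\sigma\ket{\phi}\ge c\Rightarrow\sigma\ge c\,\kb{\phi}$ is invalid, so that step does not go through either. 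A correct nearby statement---which your method proves in one line---is $\kb{\phi}\le |B|\,(\phi^A\otimes I_B)$: with $\sigma=\phi^A\otimes I_B$ one gets $\|\sigma^{-1/2}\ket{\phi}\|^2=\sum_{i=1}^r 1=r\le|B|$.
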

\begin{proof}
We write $\ket{\phi} = \sum_{i = 1}^{|B|} \sqrt{p_i} \ket{e_i}\ket{f_i}$ a Schmidt decomposition of $\ket{\phi}$. We have $\kb{\phi^A} = \sum_i p_i \kb{e_i}$ and $\kb{\phi^B} = \sum_i p_i \kb{f_i}$. We have
\begin{align*}
\bra {\phi} \cdot (\kb{\phi^A} \otimes \kb{\phi^B}) \cdot \ket{\phi} = 
\sum_{i,j =1}^{|B|} p_i p_j \bra{\phi} \cdot (\kb{e_i} \otimes \kb{f_j}) \cdot \ket{\phi} = \sum_{i = 1}^{|B|} p_i^3 \ge \frac{1}{|B|^2},
\end{align*}
which implies $\kb{\phi^A} \otimes \kb{\phi^B} \ge \frac{1}{|B|^2} \kb{\phi}$.
\end{proof} 
\begin{corollary} \label{Corollary:CoolProduct}
For any state $\rho$ in $\spa{A} \otimes \spa{B}$ with $|A| \ge |B|$, we have 
$$
\rho \le |B|^2 (\rho^A \otimes \rho^B).$$
\end{corollary}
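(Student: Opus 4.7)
The natural approach is to reduce the mixed-state statement to the pure-state bound established just above via purification. Specifically, I would introduce an auxiliary Hilbert space $\mathcal{C}$ and write $\rho = \Tr_{\mathcal{C}}(\kb{\phi})$ for some purification $\ket{\phi}$ of $\rho$ on $\spa{A}\otimes\spa{B}\otimes\mathcal{C}$.

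The key idea is to feed $\ket{\phi}$ into the preceding proposition under a cleverly chosen bipartition. Rather than treating $\ket{\phi}$ as living in $\spa{A} \otimes (\spa{B}\otimes\mathcal{C})$, I would regroup it as a pure state in $(\spa{A}\otimes\mathcal{C}) \otimes \spa{B}$. The hypothesis $|A| \ge |B|$ ensures $|AC| \ge |B|$, so $\spa{B}$ plays the role of the smaller factor in the previous proposition. This yields an inequality of the form $\kb{\phi} \le |B|^2 \bigl( \kb{\phi^{AC}} \otimes \kb{\phi^B}\bigr)$, with the multiplicative constant being exactly $|B|^2$.

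The final step is to apply $\Tr_{\mathcal{C}}$ to both sides, using that the partial trace is a positive map and therefore preserves the operator ordering $\le$. On the left-hand side this directly gives $\rho$ by definition of the purification. On the right-hand side, $\kb{\phi^B}$ is already supported on $\spa{B}$ and equals $\rho^B$, while $\Tr_{\mathcal{C}}(\kb{\phi^{AC}}) = \Tr_{\spa{B}\otimes\mathcal{C}}(\kb{\phi}) = \rho^A$. Combining these yields $\rho \le |B|^2 (\rho^A \otimes \rho^B)$.

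I do not expect any substantial obstacle in this argument, since purification combined with the pure-state proposition is a standard move. The only subtlety worth flagging is the choice of bipartition: one has to group the purifying register $\mathcal{C}$ together with $\spa{A}$, not with $\spa{B}$, so that the Schmidt rank of $\ket{\phi}$ under the relevant bipartition is bounded by $|B|$ and the constant that appears in the proposition is $|B|^2$ rather than a factor depending on $|A|$ or the purifying dimension.
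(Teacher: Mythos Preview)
Your proposal is correct and matches the paper's proof essentially line for line: the paper also takes a purification $\ket{\phi}$ in $\spa{Z}\otimes\spa{A}\otimes\spa{B}$, applies the preceding proposition to the bipartition $(\spa{Z}\otimes\spa{A})\otimes\spa{B}$ to obtain $\kb{\phi}\le |B|^2(\kb{\phi^{ZA}}\otimes\kb{\phi^B})$, and then traces out $\spa{Z}$. Your remark about grouping the purifying register with $\spa{A}$ rather than $\spa{B}$ is exactly the point.
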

\begin{proof}
Fix a state $\rho$ in $\spa{A} \otimes \spa{B}$ and a purification $\ket{\phi}$ in $\spa{Z} \otimes \spa{A} \otimes \spa{B}$ of $\rho$. From the previous proposition, we have 
$$
\kb{\phi} \le |B|^2 (\kb{\phi^{ZA}} \otimes \kb{\phi^B}). $$
We trace out the $Z$ part to each side and we obtain
$$
\rho \le |B|^2 (\rho^A \otimes \rho^B).
$$
\end{proof}

\COMMENT{
\begin{fact}\label{Fact:InfoInfo}
Let a Hilbert space $U = U_1 \otimes \dots \otimes U_n$. We have 
$I(U:V)_\rho \ge \sum_i I(U_i : V) + S(U)_\rho - \sum_i S(U_i)_{\rho}$.
\end{fact}
\begin{proof}
\begin{align*}
I(U:V)_{\rho} = S(U)_{\rho} - S(U|V)_{\rho} \ge S(U)_{\rho} - \sum_i S(U_i | V)_{\rho} = \sum_i I(U_i : V)_\rho + S(\rho) - \sum_i S(U_i)_{\rho}.
\end{align*}
\end{proof}
}

\subsection{Entangled Games}\label{Section:EntangledGames}
We now define the notion of an entangled game and its value. 
\begin{definition}
An entangled game $G = (I,O,V,p)$ is defined by finite input and output sets $I$ and $O$ as well as an accepting function $V: O^2 \times I^2 \rightarrow \zo$ and a probability distribution $p: I^2 \rightarrow [0,1]$.
\end{definition}

A strategy for the game proceeds as follows. Alice and Bob can share any quantum state. Then, Alice receives an input $x \in I$ and Bob receives an input $y \in I$ where these inputs are sampled according to $p$. They can perform any quantum operation but are not allowed to communicate. Alice outputs $a \in O$ and Bob outputs $b \in O$. They win the game if $V(a,b|x,y) = 1$. 

The \emph{entangled value} of a game $G$ is the maximal probability with which Alice and Bob can win the game. From standard purification techniques, we can assume that w.l.o.g., Alice and Bob can share a pure state $\ket{\phi}$. Moreover, their optimal strategy can be described as projective measurements $A^x = \{A^x_a\}_{a \in O}$ and $B^y = \{B^y_b\}_{b \in O}$ on $\ket{\phi}$.

This means that after receiving their inputs, they share a state of the form
\[
\rho = \sum_{x,y \in I} p_{xy} \altketbra{x} \otimes \altketbra{\phi} \otimes \altketbra{y}, 
\]
for some state $\ket{\phi}$. 
\begin{definition}
The entangled value of a game $G$ is 
\[
\omega^*(G) = \sup_{\ket{\phi},A^x,B^y} \sum_{x,y,a,b} p_{xy} V(a,b|x,y) \triple{\phi}{A^x_a \otimes B^y_b}{\phi}.
\]
\end{definition}

\begin{definition}
A game $G = (I,O,V,p)$ is called \emph{free} if $p$ is a product distribution.
 \end{definition}

\begin{definition}
A game $G = (I,O,V,p)$ is a \emph{projection game} if $\forall  x,y \in I$ and $\forall  b \in O$, $\exists! \ a$ st. $V(ab | xy) = 1$.
\end{definition}
\subsubsection{Value of a game with advice states}
\label{Section:advice}
Consider a game $G = (I,O,V,p)$. We are interested in the value of the game when the two players share an advice state $\ket{\phi_{xy}}$ on inputs $x,y$. This means that Alice and Bob share a state of the form 
\[
\rho = \sum_{x,y} p_{xy} \altketbra{x} \otimes \altketbra{\phi_{xy}} \otimes \altketbra{y}.
\]

\begin{definition}
The entangled value of  $G$, given that Alice and Bob share the above state $\rho$ is
\[
\omega^*(G|\rho) = \max_{A^x,B^y} \sum_{x,y,a,b} p_{xy} V(a,b|x,y) \triple{\phi_{xy}}{A^x_a \otimes B^y_b}{\phi_{xy}}.
\]
\end{definition}
\subsubsection{Repetition of entangled games}

In the $n$-fold parallel repetition of a game $G$, each player gets $n$ inputs from $I$ and must produce $n$ outputs from $O$. Each instance of the game will be evaluated as usual by the function~$V$. The players win the parallel repetition game if they win \emph{all} the instances.
More formally, for a game $G = (I,O,V,p)$ we define $G^n = (I',O',V',q)$, where
$I' = I^{\times n}, O' = O^{\times n}, q_{xy} = \Pi_{i \in [n]} p_{x_i,y_i}$ and $V'(a,b|x,y) = \Pi_{i \in [n]} V(a_i,b_i|x_i,y_i)$. While playing $G^n$, we say that Alice and Bob win game $i$ if $V(a_i,b_i|x_i,y_i) = 1$.

\subsubsection{Majority game}\label{Section:DefinitionMajorityGame}
For a game $G = (I,O,V,p)$ and a real number $\alpha \in [0,1]$ we define $G^n_\alpha = (I',O',V',p')$ as follows: $I' = I^{\times n}$, $O' = O^{\times n}$, $p'_{xy} = \Pi_{i \in [n]} p_{x_i,y_i}$ as in $G^n$. We define $V'$ as follows:
\begin{align*}
V'(a,b|x,y) = 1 \Leftrightarrow \#\{i: V(a_i,b_i|x_i,y_i) = 1\} \ge \alpha n.
\end{align*}

\subsection{Definition of the superposed information cost}
Informally, the superposed information cost (SIC) of a game represents the minimal amount of information that Alice and Bob must have about each other's classical input register in other to win the game with probability $1$, while having their own inputs in a quantum superposition. More formally:

\begin{definition}
Fix a game $G = (I,O,V,p)$.
\[ 
SIC(G) = \min_{\ket{\Omega}} \ I(\X : BY)_{\ket{\Omega}} + I(\Y : XA)_{\ket{\Omega}},
\]
where the minimum is taken over all $\ket{\Omega} = \sum_{x,y} \sqrt{p_{xy}} \ket{x}_{X} \ket{\phi_{xy}}_{AB} \ket{y}_{Y}$ such that $\omega^*(G | \rho) = 1$ with $\rho = \sum_{xy} p_{xy} \altketbra{x} \otimes \altketbra{\phi_{xy}} \otimes \altketbra{y}$. Recall that $\X$ (resp. $\Y$) corresponds to the $X$ (resp. $Y$) register measured in the computational basis. $\X$ and $\Y$ correspond to Alice's and Bob's classical inputs.
\end{definition}

We also generalize the above definition to the case where we minimize over all states such that $\omega^*(G) = \alpha$.
\begin{definition}
Fix a game $G = (I,O,V,p)$.
\[ 
SIC(G,\alpha) = \min_{\ket{\Omega}} \ I(\X : BY)_{\ket{\Omega}} + I(\Y : XA)_{\ket{\Omega}},
\]
where the minimum is taken over all $\ket{\Omega} = \sum_{x,y} \sqrt{p_{xy}} \ket{x}_{X} \ket{\phi_{xy}}_{AB} \ket{y}_{Y}$ such that $\omega^*(G | \rho) = \alpha$ with $\rho = \sum_{xy} p_{xy} \altketbra{x} \otimes \altketbra{\phi_{xy}} \otimes \altketbra{y}$.
\end{definition}
Notice that we have by definition $SIC(G,1) = SIC(G)$ and $SIC(G,\omega^*(G)) = 0$.

\section{Relating $SIC(G)$ and $\omega^*(G)$} \label{Section:Relating}
Our goal here is to lower bound the superposed information cost of $G$ in termes of its entangled value. In this Section, we show that for any game $G$, $SIC(G) \ge \Omega(\eps)$ where $\eps = 1 - \omega^*(G)$. We are actually able to make that result robust in the following way: for any fixed constant $\gamma < 1$, we can show that $SIC(G,1 - \gamma \eps) \ge \Omega(\eps)$. Moreover, we will also extend this to case of a game $H$ which is close to a free game.

In order to prove this, we show in Section \ref{Section:Dependency} that for any state $\ket{\Omega} = \sum_{xy} \ket{x}_X \ket{\phi_{xy}}_{AB} \ket{y}_Y$, if the quantity  $I(\X:AB)_{\ket{\Omega}} + I(\Y:XA)_{\ket{\Omega}}$ is small then Alice and Bob can almost remove the dependency in $x,y$ of the advice states $\ket{\phi_{xy}}$ by local quantum isometries, using only their input registers as control bits. This statement actually requires Alice and Bob to have a quantum superposition of their inputs and would not be true if they both had classical inputs instead. Then, in Section \ref{Section:ProofOfRelation}, we show how to use the above quantum isometries to bound the superposed information cost.

\subsection{Removing the dependence on the inputs from the advice states}\label{Section:Dependency}

Consider a game with advice, with initial state $\ket{\Omega_0} = \sum_{xy} \sqrt{p_{xy}} \ket{x}_{\spa{X}} \otimes \ket{\phi_{xy}}_{\spa{AB}} \otimes \ket{y}_{\spa{Y}}$. We first show that if the advice states $\{\ket{\phi_{xy}}_{\spa{AB}}\}$ do not give Alice and Bob much information about each other's input registers then Alice can perform a local operation to almost decouple the advice states with his input register. By symmetry, Bob can do the same.
We combine these two facts in Proposition \ref{Proposition:LocalUnitaries}: Alice and Bob can perform local operations such that the resulting advice states are close to $\ket{\psi}$, which is independent of $x,y$.

\begin{lemma}
Let $\ket{\Omega_0} = \sum_{xy} \sqrt{p_{xy}} \ket{x}_{\spa{X}} \otimes \ket{\phi_{xy}}_{\spa{AB}} \otimes \ket{y}_{\spa{Y}}$. If $I(\X:BY)_{\ket{\Omega_0}} \le \delta$ then there exist quantum isometries $U_x$ from $\spa{A}$ to $\spa{A'}$ such that $\Fbar(\ket{\Omega_1},\ket{\Omega_1}^{X} \otimes \ket{\Omega_1}^{A'BY}) \le 9\delta$ with $\ket{\Omega_1} = \sum_{xy} \sqrt{p_{xy}} \ket{x} \otimes (U_x \otimes I_B) \ket{\phi_{xy}} \otimes \ket{y}.$
\end{lemma}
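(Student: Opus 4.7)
The plan is to use Uhlmann's theorem to define the isometries $U_x$ and then to apply the weak triangle inequality for $\Fbar$ (Claim~\ref{SuperClaim}) to bridge from closeness to a specific pure product state to closeness to the product of the marginals of $\ket{\Omega_1}$.

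Write $\ket{\Phi_x}_{ABY} := \sum_y \sqrt{p_{y|x}}\,\ket{\phi_{xy}}_{AB}\ket{y}_Y$, so that $\ket{\Omega_0} = \sum_x \sqrt{p_x}\,\ket{x}\ket{\Phi_x}$, and set $\rho^{BY}|_x := \Tr_A \kb{\Phi_x}$ and $\rho^{BY} := \sum_x p_x\, \rho^{BY}|_x$. Since the reduced state on $\X BY$ is classical-quantum in $\X$, Proposition~\ref{Prop:FidelityLast} gives $F(\rho^{\X BY},\rho^{\X}\otimes\rho^{BY}) = \sum_x p_x\, F(\rho^{BY}|_x, \rho^{BY})$. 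Combining with Fact~\ref{Claim:DC} and the hypothesis $I(\X:BY)_{\Omega_0} \le \delta$ then yields $\sum_x p_x\, F(\rho^{BY}|_x, \rho^{BY}) \ge 1 - \delta$.

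Now fix any purification $\ket{\Phi^*}_{A'BY}$ of $\rho^{BY}$, with $\dim A'$ taken large enough. For each $x$, Uhlmann's theorem produces a purification $\ket{\Phi^*_x}_{A'BY}$ of $\rho^{BY}|_x$ with $\braket{\Phi^*_x}{\Phi^*} = F(\rho^{BY}|_x, \rho^{BY})$ (phases absorbed freely). Since $\ket{\Phi_x}_{ABY}$ is another purification of the same state $\rho^{BY}|_x$, there is an isometry $U_x : \spa{A} \to \spa{A'}$ with $(U_x \otimes I_{BY})\ket{\Phi_x} = \ket{\Phi^*_x}$. Use these $U_x$ in the definition of $\ket{\Omega_1}$, and let $\ket{\tilde\Omega} := \bigl(\sum_x \sqrt{p_x}\,\ket{x}\bigr)_X \otimes \ket{\Phi^*}_{A'BY}$, which is a pure product state across $X : A'BY$. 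A one-line calculation gives $\braket{\Omega_1}{\tilde\Omega} = \sum_x p_x\, F(\rho^{BY}|_x, \rho^{BY}) \ge 1 - \delta$, i.e.\ $\Fbar(\kb{\Omega_1}, \kb{\tilde\Omega}) \le \delta$.

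To conclude, apply Claim~\ref{SuperClaim} along the chain
\[
\kb{\Omega_1},\quad \kb{\tilde\Omega},\quad \kb{\alpha}_X \otimes \ket{\Omega_1}^{A'BY},\quad \ket{\Omega_1}^X \otimes \ket{\Omega_1}^{A'BY},
\]
with $\ket{\alpha} := \sum_x \sqrt{p_x}\,\ket{x}$. Each consecutive $\Fbar$ is at most $\delta$: the first pair by the previous paragraph; the second pair reduces by multiplicativity of $F$ to $\Fbar(\kb{\Phi^*}, \ket{\Omega_1}^{A'BY})$, which is bounded by $\delta$ via monotonicity under tracing out $X$ (Fact~\ref{CPTPfidelity}) applied to $\kb{\tilde\Omega},\kb{\Omega_1}$; the third pair reduces symmetrically to $\Fbar(\kb{\alpha},\ket{\Omega_1}^X) \le \delta$ by tracing out $A'BY$. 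Claim~\ref{SuperClaim} then yields $\Fbar(\kb{\Omega_1}, \ket{\Omega_1}^X \otimes \ket{\Omega_1}^{A'BY}) \le 3(3\delta) = 9\delta$. The only real subtlety is precisely this last step: Uhlmann delivers closeness only to a \emph{pure} product state, and converting that into closeness to the (generally mixed) product of the actual marginals of $\ket{\Omega_1}$ costs the factor of $9$ coming from the weak triangle inequality.
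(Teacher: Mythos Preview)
Your proof is correct and follows essentially the same approach as the paper's: both use Fact~\ref{Claim:DC} to convert the mutual-information bound into an average fidelity bound $\sum_x p_x F(\rho^{BY}|_x,\rho^{BY}) \ge 1-\delta$, invoke Uhlmann's theorem to define the isometries $U_x$ aligning each $\ket{\Phi_x}$ with a fixed purification of $\rho^{BY}$, and then apply Claim~\ref{SuperClaim} along a four-term chain (pure global state, pure product state, and two hybrids obtained by swapping one tensor factor at a time) to absorb the factor of $9$. The only cosmetic difference is the order in which you replace the $X$ and $A'BY$ factors in the chain.
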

\begin{proof}
Let $\rho_x$ be the state in $\spa{BY}$ when Alice measures the $\spa{X}$ register in the computational basis and observes $x$. Let also $\rho_+ = \sum_x p_{x\cdot}  \rho_x = \kb{\Omega_0}^{BY}$. We have 
\begin{align*}
\delta \ge I(\X:BY)_{\ket{\Omega_0}} & \ge 1 - F(\kb{\Omega_0}^{\X BY}, \kb{\Omega_0}^{\X} \otimes 
\kb{\Omega_0}^{BY}) \\
& =  1 - F(\sum_x p_{x\cdot} \kb{x} \otimes \rho_x, \sum_x p_{x\cdot} \kb{x} \otimes \rho_+)
= 1 - \sum_{x} p_{x\cdot} F(\rho_x,\rho_+),
\end{align*}
where the first inequality comes from Fact \ref{Claim:DC}.

Let $\ket{\Phi_y} = \sum_y \sqrt{p_{\cdot y}} \ket{\phi_y}_{\spa{A'B}} \ket{y}_{\spa{Y}}$ be a purification of $\rho_+$ in $\spa{A'BY}$ for some $\ket{\phi_y}$ with $|A'| \ge |A|$. Let also $\ket{\Psi_{xy}} = \sum_{y} \sqrt{p_{\cdot y}} \ket{\psi_{xy}}_{\spa{AB}} \otimes \ket{y}$ which is a purification of $\rho_x$.  By Uhlmann's theorem, we consider quantum isometries $U_x$ from $\spa{A}$ to $\spa{A'}$ such that $\bra{\Phi_y} (U_x \otimes I_{BY}) \ket{\Psi_{xy}} = F(\rho_x,\rho_+)$. We  also define 
\begin{itemize}
\item $\ket{\Omega_1} = \sum_{x} \sqrt{p_{x\cdot}} \ket{x}_{\spa{X}} \otimes (U_x \otimes I_{\spa{BY}}) \ket{\Psi_{xy}}$
\item $\ket{\Omega'_1} = \sum_{x} \sqrt{p_{x\cdot}} \ket{x}_{\spa{X}} \otimes \ket{\Phi_{y}}$
\end{itemize}
We have
\begin{align*}
\braket{\Omega_1}{\Omega'_1} = \sum_x p_{x\cdot} \bra{\Phi_y} (U_x \otimes I_{BY}) \ket{\Psi_{xy}}
= \sum_x p_{x\cdot} F(\rho_x,\rho_+) \ge 1 - \delta.
\end{align*}
or equivalently $\Fbar(\ket{\Omega_1},\ket{\Omega'_1}) \le \delta$.
Notice also that $\ket{\Omega'_1} = \ket{\Omega'_1}^X \otimes \ket{\Omega'_1}^{A'BY}$
From there, we have 
\begin{itemize}
\item $\Fbar(\ket{\Omega_1},\ket{\Omega'_1}^X \otimes \ket{\Omega'_1}^{ABY}) \le \delta$,
\item $\Fbar(\ket{\Omega'_1}^X \otimes \ket{\Omega'_1}^{ABY}, \ket{\Omega_1}^X \otimes \ket{\Omega'_1}^{ABY}) = \Fbar(\ket{\Omega'_1}^X, \ket{\Omega_1}^X) \le \delta$,
\item $\Fbar(\ket{\Omega_1}^X \otimes \ket{\Omega'_1}^{ABY}, \ket{\Omega_1}^X \otimes \ket{\Omega_1}^{ABY}) \le \Fbar(\ket{\Omega'_1}^{ABY}, \ket{\Omega_1}^{ABY}) \le \delta$.
\end{itemize}
We now use Claim \ref{SuperClaim} from Section \ref{HowClose}, which states 
that for any 4 quantum states $\rho_1,\rho_2,\rho_3,\rho_4$, we have $
\Fbar(\rho_1,\rho_4) \le 3(\Fbar(\rho_1,\rho_2) + \Fbar(\rho_2,\rho_3) + \Fbar(\rho_3,\rho_4)).
$
We take $\rho_1 = \kb{\Omega_1}$, $\rho_2 =  \kb{\Omega'_1}$, $\rho_3 = \ket{\Omega_1}^X \otimes \ket{\Omega'_1}^{ABY}$ and $\rho_4 = \ket{\Omega_1}^X \otimes \ket{\Omega_1}^{ABY}$.
We conclude that 
$$\Fbar(\rho_1,\rho_4) = \Fbar(\ket{\Omega_1},\ket{\Omega_1}^X \otimes \ket{\Omega_1}^{ABY}) \le 3(3\delta) = 9\delta.$$
\end{proof}
Similarly, we can prove the following.
\begin{lemma}
Let $\ket{\Omega_0} = \sum_{xy} \sqrt{p_{xy}} \ket{x}_{\spa{X}} \otimes \ket{\phi_{xy}}_{\spa{AB}} \otimes \ket{y}_{\spa{Y}}$. 
If $I(\Y:XA)_{\ket{\Omega_0}} \le \delta$ then there exist quantum isometries $V_y$ from $\spa{B}$ to $\spa{B'}$ such that $\Fbar(\ket{\Omega_2},\ket{\Omega_2}^{XAB'} \otimes \ket{\Omega_2}^{Y}) \le 9\delta$ with $\ket{\Omega_2} = \sum_{xy} \sqrt{p_{xy}} \ket{x} \otimes (I_A \otimes V_y) \ket{\phi_{xy}} \otimes \ket{y}.$
\end{lemma}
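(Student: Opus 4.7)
The plan is to prove this second lemma by direct symmetry with the preceding one, just swapping the roles of Alice and Bob (equivalently, exchanging the registers $X \leftrightarrow Y$ and $\spa{A} \leftrightarrow \spa{B}$). I would introduce $\sigma_y$ as the state in $\spa{XA}$ obtained by measuring $\spa{Y}$ in the computational basis and observing outcome $y$, and set $\sigma_+ = \sum_y p_{\cdot y} \sigma_y = \kb{\Omega_0}^{XA}$. By Fact \ref{Claim:DC} and Proposition \ref{Prop:FidelityLast}, the assumption $I(\Y : XA)_{\ket{\Omega_0}} \le \delta$ immediately yields
\[
1 - \sum_y p_{\cdot y}\, F(\sigma_y, \sigma_+) \le \delta.
\]

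Next I would take a canonical purification $\ket{\Phi_x} = \sum_x \sqrt{p_{x \cdot}}\, \ket{x}_X \otimes \ket{\phi_x}_{AB'}$ of $\sigma_+$ in a space $\spa{XAB'}$ with $|B'| \ge |B|$, together with the given purification $\ket{\Psi_{xy}} = \sum_x \sqrt{p_{x\cdot}}\, \ket{x}_X \otimes \ket{\psi_{xy}}_{AB}$ of $\sigma_y$. By Uhlmann's theorem, for each $y$ there exists an isometry $V_y : \spa{B} \to \spa{B'}$ such that $\bra{\Phi_x}(I_{XA} \otimes V_y)\ket{\Psi_{xy}} = F(\sigma_y, \sigma_+)$. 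Defining
\[
\ket{\Omega_2} = \sum_{y} \sqrt{p_{\cdot y}}\, (I_{XA} \otimes V_y) \ket{\Psi_{xy}} \otimes \ket{y}_Y, \qquad
\ket{\Omega'_2} = \sum_{y} \sqrt{p_{\cdot y}}\, \ket{\Phi_x} \otimes \ket{y}_Y,
\]
a direct inner product computation gives $\braket{\Omega_2}{\Omega'_2} = \sum_y p_{\cdot y} F(\sigma_y, \sigma_+) \ge 1 - \delta$, hence $\Fbar(\ket{\Omega_2}, \ket{\Omega'_2}) \le \delta$. Note that $\ket{\Omega'_2}$ is a product state across the $XAB' \,|\, Y$ cut.

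From here I would mirror the three-step triangle estimate from the previous lemma: (i) $\Fbar(\ket{\Omega_2}, \ket{\Omega'_2}^{XAB'} \otimes \ket{\Omega'_2}^{Y}) \le \delta$ since $\ket{\Omega'_2}$ is already product; (ii) $\Fbar(\ket{\Omega'_2}^{XAB'} \otimes \ket{\Omega'_2}^{Y}, \ket{\Omega_2}^{XAB'} \otimes \ket{\Omega'_2}^{Y}) = \Fbar(\ket{\Omega'_2}^{XAB'}, \ket{\Omega_2}^{XAB'}) \le \delta$ by monotonicity of fidelity under partial trace (Fact \ref{CPTPfidelity}); (iii) $\Fbar(\ket{\Omega_2}^{XAB'} \otimes \ket{\Omega'_2}^{Y}, \ket{\Omega_2}^{XAB'} \otimes \ket{\Omega_2}^{Y}) \le \Fbar(\ket{\Omega'_2}^{Y}, \ket{\Omega_2}^{Y}) \le \delta$. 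Applying Claim \ref{SuperClaim} to the chain $\ket{\Omega_2}, \ket{\Omega'_2}, \ket{\Omega_2}^{XAB'} \otimes \ket{\Omega'_2}^{Y}, \ket{\Omega_2}^{XAB'} \otimes \ket{\Omega_2}^{Y}$ then yields $\Fbar(\ket{\Omega_2}, \ket{\Omega_2}^{XAB'} \otimes \ket{\Omega_2}^{Y}) \le 9\delta$ as desired.

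There is essentially no obstacle here: the whole argument is a mechanical mirror of the preceding proof, and the only thing to be slightly careful about is that the roles of the ``free'' register (the one purified away) and the controlled register swap, so that the isometries $V_y$ act on $\spa{B}$ and are controlled by the classical register $Y$ (rather than $U_x$ on $\spa{A}$ controlled by $X$). Everything else -- the Uhlmann step, the three one-sided fidelity bounds, and the invocation of Claim \ref{SuperClaim} -- goes through verbatim with $X$ and $Y$ (and $A$ and $B$) swapped.
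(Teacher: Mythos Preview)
Your proposal is correct and is exactly the approach the paper intends: the paper gives no separate proof for this lemma, only stating ``Similarly, we can prove the following,'' and your argument is precisely the mechanical mirror of the preceding lemma with the roles $X\leftrightarrow Y$, $\spa{A}\leftrightarrow\spa{B}$ swapped. Your three-step chain and invocation of Claim~\ref{SuperClaim} match the structure of the paper's proof of the first lemma line by line.
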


We now combine the two lemmata above.

\begin{proposition}\label{Proposition:LocalUnitaries}
Let $\ket{\Omega_0} = \sum_{xy} \sqrt{p_{xy}} \ket{x}_{\spa{X}} \otimes \ket{\phi_{xy}}_{\spa{AB}} \otimes \ket{y}_{\spa{Y}}$. If $I(\X:BY)_{\ket{\Omega_0}} \le \delta$ and $I(\Y:XA)_{\ket{\Omega_0}} \le \delta$ then there exist quantum isometries $U_x$ and $V_y$, respectively from $A$ to $A'$ and from $B$ to $B'$, such that $\Fbar(\ket{\Omega_3},\ket{\Omega_3}^{XY} \otimes \ket{\Omega_3}^{A'B'}) \le 81\delta$ with $\ket{\Omega_3} = \sum_{xy} \sqrt{p_{xy}} \ket{x} \otimes (U_x \otimes V_y) \ket{\phi_{xy}} \otimes \ket{y}$.
\end{proposition}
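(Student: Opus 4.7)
The plan is to combine the two preceding lemmas and close the resulting diagram by invoking Claim~\ref{SuperClaim} on a four-state chain. The two lemmas furnish one-sided decouplings (one from Alice's side via $U_x$, one from Bob's via $V_y$); the content of the proposition is that these combine to fully decouple the advice register $A'B'$ from the joint input register $XY$.

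First I would apply the first lemma to $\ket{\Omega_0}$, using $I(\X:BY)_{\ket{\Omega_0}} \le \delta$, to obtain isometries $U_x : \spa{A} \to \spa{A'}$ such that $\ket{\Omega_1}$ satisfies $\Fbar(\ket{\Omega_1}, \ket{\Omega_1}^X \otimes \ket{\Omega_1}^{A'BY}) \le 9\delta$. Because applying $U_x$ controlled on $X$ is a local isometry on Alice's registers $XA$, it preserves every mutual information quantity involving Bob's side, so $I(\Y:XA')_{\ket{\Omega_1}} = I(\Y:XA)_{\ket{\Omega_0}} \le \delta$. I can therefore apply the second lemma to $\ket{\Omega_1}$, viewed as a state of the same form with new advice $(U_x \otimes I)\ket{\phi_{xy}}$, to obtain isometries $V_y : \spa{B} \to \spa{B'}$ such that $\ket{\Omega_3}$ satisfies $\Fbar(\ket{\Omega_3}, \ket{\Omega_3}^{XA'B'} \otimes \ket{\Omega_3}^Y) \le 9\delta$. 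Since applying $V_y$ controlled on $Y$ is local on $YB$, its application to $\ket{\Omega_1}$ preserves fidelity with the first-step factorization, yielding also $\Fbar(\ket{\Omega_3}, \ket{\Omega_3}^X \otimes \ket{\Omega_3}^{A'B'Y}) \le 9\delta$.

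Writing $\rho = \kb{\Omega_3}$, I would next read off two consequences of the $Y$-decoupling bound by Fact~\ref{CPTPfidelity} (monotonicity of $\Fbar$ under partial trace): tracing out $X$ gives $\Fbar(\rho^{A'B'Y}, \rho^{A'B'} \otimes \rho^Y) \le 9\delta$, and, crucially, tracing out $A'B'$ gives $\Fbar(\rho^{XY}, \rho^X \otimes \rho^Y) \le 9\delta$. The latter is the key algebraic observation: nothing in the statement forces $p_{xy}$ to be a product distribution, but the two informational hypotheses implicitly force $X$ and $Y$ to be almost independent, which is what ultimately allows $A'B'$ to be decoupled from the joint register $XY$.

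Finally I would chain via Claim~\ref{SuperClaim} with
\[
\rho_1 = \rho,\quad \rho_2 = \rho^X \otimes \rho^{A'B'Y},\quad \rho_3 = \rho^X \otimes \rho^{A'B'} \otimes \rho^Y,\quad \rho_4 = \rho^{XY} \otimes \rho^{A'B'}.
\]
Each consecutive gap is at most $9\delta$: the first by the $X$-decoupling factorization; the second and third by multiplicativity of fidelity under common tensor factors ($\rho^X$ and $\rho^{A'B'}$, respectively) combined with the two traced estimates above. Claim~\ref{SuperClaim} then yields $\Fbar(\rho_1, \rho_4) \le 3 \cdot 3 \cdot 9\delta = 81\delta$, as required. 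The main obstacle I expect is identifying the right pivot $\rho_3$ and recognizing that tracing the $Y$-decoupling bound down to $XY$ produces the $\rho^{XY} \approx \rho^X \otimes \rho^Y$ estimate that closes the chain; once this pivot is in hand, the remaining links are routine.
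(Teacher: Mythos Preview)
Your proposal is correct and follows essentially the same route as the paper: obtain the two one-sided decouplings $\Fbar(\ket{\Omega_3},\ket{\Omega_3}^X\otimes\ket{\Omega_3}^{A'B'Y})\le 9\delta$ and $\Fbar(\ket{\Omega_3},\ket{\Omega_3}^{XA'B'}\otimes\ket{\Omega_3}^{Y})\le 9\delta$, then close the same four-state chain $\rho\to\rho^X\otimes\rho^{A'B'Y}\to\rho^X\otimes\rho^{A'B'}\otimes\rho^Y\to\rho^{XY}\otimes\rho^{A'B'}$ via Claim~\ref{SuperClaim}. The only cosmetic difference is that you apply the second lemma to $\ket{\Omega_1}$ (using $I(\Y:XA')_{\ket{\Omega_1}}=I(\Y:XA)_{\ket{\Omega_0}}$) whereas the paper applies it directly to $\ket{\Omega_0}$ and then transports both bounds to $\ket{\Omega_3}$ by local isometries; this yields possibly different $V_y$'s but identical estimates and an identical endgame.
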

\begin{proof}
We consider the quantum isometries $U_x,V_y$ from the previous two lemmata as well as the states $\ket{\Omega_1},\ket{\Omega_2}$. Since you can from $\ket{\Omega_1}$ (resp. $\ket{\Omega_2}$) to $\ket{\Omega_3}$ by a quantum isometry not acting on $X$ (resp. $Y$), we have 

$$\Fbar(\ket{\Omega_3},\ket{\Omega_3}^X \otimes \ket{\Omega_3}^{A'B'Y}) = 
\Fbar(\ket{\Omega_1},\ket{\Omega_1}^X \otimes \ket{\Omega_1}^{A'BY}) \leq 9\delta
$$

and 

$$\Fbar(\ket{\Omega_3},\ket{\Omega_3}^{XA'B'} \otimes \ket{\Omega_3}^{Y}) = 
\Fbar(\ket{\Omega_2},\ket{\Omega_2}^{XAB'} \otimes \ket{\Omega_2}^{Y}) \leq 9\delta.
$$

From there, we obtain:
\begin{itemize}
\item $\Fbar(\ket{\Omega_3},\ket{\Omega_3}^X \otimes \ket{\Omega_3}^{A'B'Y}) \le 9\delta$,
\item  $\Fbar(\ket{\Omega_3}^X \otimes \ket{\Omega_3}^{A'B'Y}, \ket{\Omega_3}^X \otimes \ket{\Omega_3}^{A'B'} \otimes \ket{\Omega_3}^{Y}) = \Fbar(\ket{\Omega_3}^{A'B'Y}, \ket{\Omega_3}^{A'B'} \otimes \ket{\Omega_3}^{Y})  \\  \le 
\Fbar(\ket{\Omega_3}, \ket{\Omega_3}^{XA'B'} \otimes \ket{\Omega_3}^{Y}) \le 9\delta,$
\item $\Fbar(\ket{\Omega_3}^X \otimes \ket{\Omega_3}^{A'B'} \otimes \ket{\Omega_3}^{Y},\ket{\Omega_3}^{XY} \otimes \ket{\Omega_3}^{A'B'}) = \Fbar(\ket{\Omega_3}^{X} \otimes \ket{\Omega_3}^{Y},\ket{\Omega_3}^{XY}) \\ \le 
\Fbar(\ket{\Omega_3}^{XA'B'} \otimes \ket{\Omega_3}^{Y} , \ket{\Omega_3}) \le 9 \delta.$
\end{itemize}
Using again Claim \ref{SuperClaim}, we conclude that $\Fbar(\ket{\Omega_3},\ket{\Omega_3}^{XY} \otimes \ket{\Omega_3}^{A'B'}) \le 3(3 \cdot 9\delta) = 81\delta$.
\end{proof}

\subsection{Proving the relation}\label{Section:ProofOfRelation}

We are now ready to relate the superposed information cost and the value of an entangled game. To do this, we consider the above results on removing the dependence on the inputs, and this time we work on advice states that allow players to win the game.

\begin{proposition} \label{Proposition:SICnGame}
For any game $G$ with $\omega^*(G) = 1 - \eps$, we have
$$SIC(G,1-\delta) \ge \frac{1}{81}\left(1 - \sqrt{(1-\eps)(1-\delta)} - \sqrt{\delta \eps}\right).$$
As special cases, we have $SIC(G) \ge \frac{\eps}{162}$ and $SIC(G,1-\frac{\eps}{8}) \ge \frac{\eps}{324}$.
\end{proposition}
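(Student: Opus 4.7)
The plan is to apply the decoupling result from Proposition~\ref{Proposition:LocalUnitaries} to a state minimizing $SIC(G,1-\delta)$, argue that after decoupling the game cannot be won much better than the intrinsic value $\omega^*(G)=1-\eps$, and then convert the resulting gap in winning probabilities into a fidelity bound via Fact~\ref{POVMfidelity}.

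Let $c = SIC(G,1-\delta)$ and let $\ket{\Omega_0} = \sum_{xy}\sqrt{p_{xy}}\ket{x}_X\ket{\phi_{xy}}_{AB}\ket{y}_Y$ attain this minimum, so the corresponding advice game has value $1-\delta$ and $I(\X{:}BY)_{\ket{\Omega_0}} + I(\Y{:}XA)_{\ket{\Omega_0}} = c$. Each of the two nonnegative terms is at most $c$, so Proposition~\ref{Proposition:LocalUnitaries} applied with parameter $c$ produces local isometries $U_x$ on $A$ and $V_y$ on $B$ such that
\[
\ket{\Omega_3} := \sum_{xy}\sqrt{p_{xy}}\,\ket{x}\otimes (U_x\otimes V_y)\ket{\phi_{xy}}\otimes\ket{y}
\]
satisfies $\Fbar\bigl(\ket{\Omega_3},\,\ket{\Omega_3}^{XY}\otimes\ket{\Omega_3}^{A'B'}\bigr) \le 81c$. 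Because the $U_x,V_y$ are local and controllable on Alice's and Bob's own input registers, they can be absorbed into any strategy, so the advice game built from $\ket{\Omega_3}$ is still won with probability $1-\delta$ by some strategy $(\tilde A^x,\tilde B^y)$. On the decoupled state $\ket{\Omega_3}^{XY}\otimes\ket{\Omega_3}^{A'B'}$, however, the shared entanglement $\ket{\Omega_3}^{A'B'}$ is independent of $(x,y)$, so running $(\tilde A^x,\tilde B^y)$ is an honest strategy for $G$ played with a fixed entangled state; hence its winning probability $W_2$ satisfies $W_2 \le \omega^*(G) = 1-\eps$.

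I would then consider the two-outcome POVM that measures the $XY$ registers in the computational basis, runs $(\tilde A^x,\tilde B^y)$, and accepts iff the game is won. Its outcome distributions are $(1-\delta,\delta)$ on $\ket{\Omega_3}$ and $(W_2,1-W_2)$ on the decoupled state, so Fact~\ref{POVMfidelity} gives
\[
F\bigl(\ket{\Omega_3},\,\ket{\Omega_3}^{XY}\otimes\ket{\Omega_3}^{A'B'}\bigr) \le \sqrt{(1-\delta)W_2}+\sqrt{\delta(1-W_2)}.
\]
In the non-trivial regime $\delta\le\eps$, the right-hand side is increasing in $W_2$ on $[0,1-\delta]\supseteq[0,1-\eps]$ (check the derivative), so replacing $W_2$ by its upper bound $1-\eps$ weakens the inequality and yields $F\le\sqrt{(1-\delta)(1-\eps)}+\sqrt{\delta\eps}$. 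Combining this with $F = 1 - \Fbar \ge 1-81c$ gives the main bound. The two special cases are then immediate estimates: for $\delta=0$, use $1-\sqrt{1-\eps}\ge\eps/2$; for $\delta=\eps/8$, apply AM--GM in the form $\sqrt{(1-\delta)(1-\eps)}\le 1-(\delta+\eps)/2$, which yields $1-\sqrt{(1-\delta)(1-\eps)}-\sqrt{\delta\eps}\ge(\sqrt\eps-\sqrt\delta)^2/2$.

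Proposition~\ref{Proposition:LocalUnitaries} has already done the heavy lifting, so the remaining difficulty is bookkeeping. The subtlest point is the final step: one must be careful that the optimization over $W_2$ is constrained to $[0,1-\eps]$ (arising because any strategy on the decoupled advice is no better than the unrestricted game value) and that the resulting fidelity bound lines up cleanly with $F\ge 1-81c$. Everything else---absorbing the local isometries into the strategy, identifying the POVM outcome probabilities, and checking the special cases---is routine.
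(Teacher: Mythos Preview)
Your proof is correct and follows essentially the same approach as the paper's: both invoke Proposition~\ref{Proposition:LocalUnitaries} on an optimizer for $SIC(G,1-\delta)$, then compare the winning probabilities of the same strategy on $\ket{\Omega_3}$ versus the decoupled state $\ket{\Omega_3}^{XY}\otimes\ket{\Omega_3}^{A'B'}$, and translate the gap into a fidelity lower bound. If anything, you are slightly more careful than the paper in explicitly handling the monotonicity in $W_2$ (the paper silently replaces the decoupled winning probability by exactly $1-\eps$), and your use of Fact~\ref{POVMfidelity} is equivalent to the paper's implicit appeal to Fact~\ref{CPTPfidelity}.
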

\begin{proof}
Let $\ket{\Omega} = \sum_{xy} \sqrt{p_{xy}} \ket{x} \otimes \ket{\phi_{xy}} \otimes \ket{y}$ such that Alice and Bob can win $G$ with probability $1-\delta$ when sharing states $\ket{\phi_{xy}}$ and $I(\X:BY)_{\ket{\Omega}} + I(\Y:XA)_{\ket{\Omega}} = SIC(G,1-\delta)$. From Proposition \ref{Proposition:LocalUnitaries}, we consider quantum isometries $U_x$ and $V_y$ acting respectively from $\spa{A}$ to $\spa{A}'$ and from  $\spa{B}$ to $\spa{B}'$ and the state $\ket{\Omega_3} = \sum_{xy} \sqrt{p_{xy}} \ket{x} \otimes (U_x \otimes V_y) \ket{\phi_{xy}} \otimes \ket{y}$ such that $\Fbar(\ket{\Omega_3},\ket{\Omega_3}^{XY} \otimes \ket{\Omega_3}^{AB}) \le 81 \cdot SIC(G,1-\delta)$.

Notice that Alice and Bob can locally win $G$ with probability $1-\delta$ when sharing $\ket{\Omega_3}$ and measuring the input registers since they can recreate $\ket{\phi_{xy}}$ using local quantum operations. On the other hand, this strategy will only succeed with probability at most $\omega^*(G)$ when sharing $\ket{\Omega_3}^{XY} \otimes \ket{\Omega_3}^{AB}$. 

Let $\rho_{win}$ and $\rho_{lose}$ denote the final states in case of victory of loss, respectively. It follows from the above discussion that
\begin{align*}
\Fbar(\ket{\Omega_3},\ket{\Omega_3}^{XY} \otimes \ket{\Omega_3}^{AB}) & \ge 
\Fbar((1 - \delta) \rho_{win} + \delta \rho_{lose},(1-\eps) \rho_{win} + \eps \rho_{lose}) \\
&  = 1 - \sqrt{(1 - \eps)(1-\delta)} - \sqrt{\delta \eps},
\end{align*}
 which proves the main statement. The two special cases follow from this inequality.
\end{proof}

We now prove that a similar statement still holds if we replace the input distribution $p$ with a slightly perturbed version $q$. The perturbation is quantified in terms of the relative entropy $S( q || p)$.

\begin{lemma} \label{Lemma:RelEnt}
Let $G = (I,O,V,p)$ such that $\omega^*(G) = 1-\eps$. Let $H = (I,O,V,q)$ such that $S(q || p) \le \frac{\eps}{8}$. We have $\omega^*(H) \le 1 - \frac{\eps}{4}$.
\end{lemma}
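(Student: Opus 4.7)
The plan is to argue by contradiction, turning the gap $\omega^*(G)=1-\eps$ into a lower bound on the loss probability under $q$. Fix any strategy $\mathcal{S}$ for $H$, and let $L(x,y)\in[0,1]$ denote its conditional loss probability on input $(x,y)$. Then the winning probability of $\mathcal{S}$ under $q$ equals $1-\E_q[L]$, while viewing $\mathcal{S}$ as a (suboptimal) strategy in $G$ yields $\E_p[L]\ge 1-\omega^*(G)=\eps$. So it suffices to suppose, toward contradiction, that $\E_q[L]<\eps/4$ and to derive a contradiction using only $S(q || p)\le \eps/8$.

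The main tool will be a Cauchy--Schwarz comparison in the style of Hellinger distance. Factoring $p_{xy}-q_{xy}=(\sqrt{p_{xy}}-\sqrt{q_{xy}})(\sqrt{p_{xy}}+\sqrt{q_{xy}})$ inside $\E_p[L]-\E_q[L]=\sum_{xy}(p_{xy}-q_{xy})L(x,y)$, applying Cauchy--Schwarz, then using $(\sqrt{a}+\sqrt{b})^2\le 2(a+b)$ together with $L^2\le L$ (which holds because $L\in[0,1]$), I would obtain
\[
\E_p[L]-\E_q[L]\;\le\;2\sqrt{\bigl(1-F(p,q)\bigr)\bigl(\E_p[L]+\E_q[L]\bigr)},
\]
where $F(p,q)=\sum_{xy}\sqrt{p_{xy}\,q_{xy}}$ is the classical fidelity (equivalently, the quantum fidelity of the diagonal states with these spectra).

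Next I would sharpen Fact~\ref{Claim:DC} to the stronger inequality $1-F(p,q)\le S(q || p)/2$. This follows from Jensen's inequality applied to $\log$: $\log F(p,q)=\log\E_q[\sqrt{p/q}]\ge \tfrac{1}{2}\,\E_q[\log(p/q)]=-\tfrac{1}{2}\,S(q || p)$, so $F(p,q)\ge e^{-S(q || p)/2}$ and thus $1-F(p,q)\le 1-e^{-S(q || p)/2}\le S(q || p)/2\le \eps/16$. The factor $\tfrac{1}{2}$ (absent from the form $1-F\le S$ stated in Fact~\ref{Claim:DC}) is essential for matching the constants of the statement.

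Combining the two bounds and writing $x=\E_p[L]$, the assumption $\E_q[L]<\eps/4$ yields $x-\eps/4\le\sqrt{(\eps/4)(x+\eps/4)}$. Squaring and simplifying gives $x^2-(3\eps/4)\,x\le 0$, i.e.\ $x\le 3\eps/4$, which contradicts $x\ge\eps$; this establishes $\omega^*(H)\le 1-\eps/4$. The main obstacle I anticipate is exactly the sharpness of this fidelity--KL bound: with only the weaker form $1-F\le S$ one gets $\omega^*(H)\le 1-\Omega(\eps)$ but with a strictly worse constant (the quadratic in $x$ no longer forces $x<\eps$), so the improved factor of $\tfrac{1}{2}$ is what pins down the $\eps/4$ in the lemma.
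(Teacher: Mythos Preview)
Your proof is correct, but it takes a different route from the paper's. The paper uses the weaker bound $1-F(q,p)\le S(q\,||\,p)\le\eps/8$ (Fact~\ref{Claim:DC}) and then applies the fidelity data-processing bound (Fact~\ref{POVMfidelity}) directly to the two-outcome \{win, lose\} POVM: if the optimal $H$-strategy is run under $p$, it wins with probability at most $1-\eps$, so
\[
\tfrac{\eps}{8}\;\ge\;1-F(q,p)\;\ge\;1-\sqrt{\omega^*(H)(1-\eps)}-\sqrt{(1-\omega^*(H))\eps},
\]
and a short computation forces $\omega^*(H)\le 1-\eps/4$. Your Cauchy--Schwarz/Hellinger inequality $\E_p[L]-\E_q[L]\le 2\sqrt{(1-F)(\E_p[L]+\E_q[L])}$ is a slightly lossy relaxation of this POVM bound, which is precisely why you need the sharpened $1-F\le S/2$ to land on the constant $\eps/4$; the paper, using the tight two-point fidelity comparison, does not. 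So your remark that the factor $1/2$ is ``essential'' is accurate for your argument but not for the lemma itself. On the plus side, your route is entirely self-contained (Jensen and Cauchy--Schwarz only) and makes the dependence on the loss function $L$ explicit, while the paper's is shorter and leans on the stated fidelity facts.
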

\begin{proof}
We have that
$
\frac{\eps}{8} \ge S(q || p) \ge \Fbar(q,p).
$
Let $\ket{\phi}$ be the shared state that allows Alice and Bob to win $H$ with probability $\omega^*(H)$.
Let $\rho_p = \sum_{xy} p_{xy} \kb{x} \otimes \kb{\phi} \otimes \kb{y}$ and $\rho_q = \sum_{xy} q_{xy} \kb{x} \otimes \kb{\phi} \otimes \kb{y}$. If Alice and Bob apply the optimal strategy to win $H$ on $\rho_q$, they win with probability $\omega^*(H)$ while they win with probability at most $\omega^*(G)$ on $\rho_p$. 
Let $\rho_{win}$ and $\rho_{lose}$ denote the final states in case of victory of loss, respectively. We have 
\begin{align*}
\frac{\eps}{8} \geq  \Fbar(q,p) = \Fbar (\rho_q,\rho_p)  & \ge \Fbar (\omega^*(H) \rho_{win} + (1 -\omega^*(H)) \rho_{lose}, (1 - \eps) \rho_{win} + \eps \rho_{lose} ) \\
& = 1 - \sqrt{\omega^*(H) (1 - \eps)} - \sqrt{(1 - \omega^*(H))\eps},
\end{align*}
which implies $\omega^*(H) \le 1 - \frac{\eps}{4}$.
\end{proof}
\begin{proposition}\label{MainProposition}
Let $G = (I,O,V,p)$ on a product distribution such that $\omega^*(G) = 1 - \eps$. Let $H = (I,O,V,q)$ such that $S(q || p) \le \frac{\eps}{8}$. We have $SIC(H,1 - \frac{\eps}{32}) \ge \frac{\eps}{1296} = \Omega(\eps)$.
\end{proposition}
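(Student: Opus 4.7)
The plan is to reduce the bound to a direct application of Proposition~\ref{Proposition:SICnGame} on $H$ itself, after using Lemma~\ref{Lemma:RelEnt} to establish that $\omega^*(H)$ is bounded away from $1$. Concretely, Lemma~\ref{Lemma:RelEnt} immediately gives $\omega^*(H) \le 1 - \eps/4$, so setting $\eps_H := 1 - \omega^*(H)$ we have $\eps_H \ge \eps/4$.

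The key observation enabling the second step is that the statement and proof of Proposition~\ref{Proposition:SICnGame} never use that the input distribution is a product: the dependency-removal step of Section~\ref{Section:Dependency} only manipulates the mutual information quantities $I(\X:BY)$ and $I(\Y:XA)$, and the final comparison only uses that applying a strategy to the decoupled state $\ket{\Omega_3}^{XY} \otimes \ket{\Omega_3}^{AB}$ (whose $XY$-marginal is still the game's input distribution) achieves at most the entangled value of the game. Thus Proposition~\ref{Proposition:SICnGame} applies to $H$ itself with parameter $1 - \eps_H$, and taking $\delta := \eps/32$ yields
\begin{align*}
SIC(H,\ 1 - \eps/32)\ \ge\ \frac{1}{81}\Bigl(1 - \sqrt{(1-\eps_H)(1-\eps/32)} - \sqrt{\eps_H \cdot \eps/32}\Bigr).
\end{align*}

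What remains is to lower bound the right-hand side by $\eps/1296$. A short differentiation shows that, for fixed $\delta$, the function $f(t) := 1 - \sqrt{(1-t)(1-\delta)} - \sqrt{t\delta}$ vanishes at $t = \delta$ and is strictly increasing for $t > \delta$. Since $\eps_H \ge \eps/4 > \eps/32 = \delta$, the worst case in the bound is $\eps_H = \eps/4$, and bounding $\sqrt{(1-\eps/4)(1-\eps/32)} \le 1 - 9\eps/64$ by AM--GM together with $\sqrt{\eps^2/128} = \eps/(8\sqrt 2)$ gives $f(\eps/4) \ge (9/64 - \sqrt 2/16)\eps = \Omega(\eps)$. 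The only (mild) obstacle is the arithmetic bookkeeping to carry this constant through the factor $1/81$ and arrive at exactly $\eps/1296$; conceptually the proposition is an immediate combination of Lemma~\ref{Lemma:RelEnt} and Proposition~\ref{Proposition:SICnGame} applied directly to $H$.
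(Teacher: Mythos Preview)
Your proposal is correct and follows exactly the paper's two-line argument: apply Lemma~\ref{Lemma:RelEnt} to get $\omega^*(H)\le 1-\eps/4$, then invoke Proposition~\ref{Proposition:SICnGame} on $H$ (which, as you correctly observe, holds for arbitrary games, not just free ones). Your careful remark that the precise constant $\eps/1296$ does not fall out cleanly from the general inequality---only an $\Omega(\eps)$ bound does---is accurate; the paper simply asserts the constant without the intermediate arithmetic.
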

\begin{proof}
From Lemma \ref{Lemma:RelEnt}, we know that $\omega^*(H) \le 1 - \frac{\eps}{4}$. By Proposition \ref{Proposition:SICnGame} we have  $SIC(H,1 -\frac{\eps}{32}) \ge \frac{\eps}{1296}$.
\end{proof}

\section{Proving parallel repetition}\label{Section:Proof}

In this section we prove the main result. The proof will proceed as follows.
We fix a free game $G = (I,O,V,p)$ with $\omega^*(G) = 1 - \eps$ and $\omega^*(G^n) = 2^{-t}$ for some $t$. The previous section ended with Proposition \ref{MainProposition} where we showed that $SIC(H,1-\frac{\eps}{32}) \ge \Omega(\eps)$ for any game $H = (I,O,V,q)$ with $S(q || p) \le \frac{\eps}{8}$. 

Here we construct a game $H = (I,O,V,q)$ such that $S(q || p) \le \frac{\eps}{8}$ and $SIC(H,1-\frac{\eps}{32}) \le O(\frac{t \log(l)}{n \eps})$. Combining the inequalities above, we conclude that $t = \Omega(\frac{n \eps^2}{\log(l)})$ or equivalently $\omega^*(G^n) = (1 - \eps^2)^{\Omega(\frac{n}{\log(l)})}$. \\

Our goal is to construct this game $H$ as well as some advice states that will imply $SIC(H,1-\frac{\eps}{32}) \le O(\frac{t \log(l)}{n \eps})$. This Section will be organized as follows
\begin{itemize}
\item In Section \ref{Section:Checking}, we present a classical checking procedure that captures the following idea: if Alice and Bob play $G^n$ according to the optimal strategy  then Bob can know whether they won $G^n$ or not with Alice sending only roughly $O(\frac{t}{\eps})$ bits.
\item In Section \ref{Section:ConstructionAdvice}, we present how to construct these advice states using the checking procedure above.
\item In Section \ref{Section:Index}, we show how to choose a good instance of the game which will characterize $H$ and the advice states.
\item In Section \ref{Section:Conclude}, we show our main Theorem.
\end{itemize} 

\subsection{The checking procedure}\label{Section:Checking}

We consider the following procedure:
\\ \\
\cadre{
\begin{center} Checking procedure \end{center}
\begin{itemize}
\item Alice and Bob share a state $\ket{\phi}$ that allows them to win $G^n$ with probability $\omega^*(G^n) = 2^{-t}$.
\item Alice and Bob get inputs $x = x_1,\dots,x_n$ and $y = y_1,\dots,y_n$, with $x,y \in I^n$ following the distribution of $G^n$, play the game according to the optimal strategy and output $a,b$.
\item Alice and Bob have some shared randomness that correspond to $v$ random indices $i_1,\dots,i_v \in [n]$, where $v$ will be specified later. Let $C$ be this set of indices. For all $i \in C$, Alice sends $x_i,a_i$ to Bob.
\item Bob checks that $\forall i \in C$, $V(a_i b_i | x_i y_i) = 1$. If this holds, we say that Bob succeeds the test. Otherwise, we say that Bob aborts.
\end{itemize}
} $ \ $ \\ \\

We first show the following

\begin{proposition}\label{Proposition:CommuncationProtocol}
If Alice and Bob perform the above protocol with $v = \frac{256}{\eps}\left(t + \log(1/\eps) + 8\right)$, we have:
\begin{enumerate}
\item $Pr[\textrm{Bob succeeds}] \ge 2^{-t}$
\item $Pr[\textrm{Alice and Bob win } \ge (1-\frac{\eps}{256})n \textrm{ games }| \textrm{ Bob succeeds} ] \ge (1-\frac{\eps}{256})$.
\end{enumerate}
where  
$$Pr[\textrm{A\textsc{\&}B win } \ge (1-\frac{\eps}{256})n \textrm{ games } \mid \textrm{Bob succeeds} ] = \Pr[ \#\{i : V(a_ib_i | x_i y_i) = 1\} \ge n(1 - \frac{\eps}{256}) \mid \textrm{Bob succeeds}].$$
\end{proposition}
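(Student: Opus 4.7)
The plan is to prove both items by direct probabilistic reasoning, using nothing more than the definitions and a union bound over the random subset $C$.

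For item 1, I would simply observe that whenever Alice and Bob win all $n$ instances of $G^n$, Bob automatically succeeds the test, since the predicates $V(a_ib_i|x_iy_i)=1$ hold for \emph{every} $i \in [n]$ and in particular for every $i \in C$. By assumption this event has probability at least $\omega^*(G^n) = 2^{-t}$, which gives the lower bound on $\Pr[\text{Bob succeeds}]$.

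For item 2, I would fix the transcript $(x,y,a,b)$ produced by the optimal strategy and write $L(x,y,a,b) \subseteq [n]$ for the set of lost indices. The set $C$ of $v$ indices is drawn uniformly and independently of $(x,y,a,b)$, so conditioned on any transcript with $|L| \geq \frac{\eps}{256} n$, Bob succeeds only if $C \cap L = \emptyset$, an event of probability at most $\bigl(1 - \tfrac{\eps}{256}\bigr)^v$. Averaging over transcripts,
\[
\Pr\bigl[\text{Bob succeeds and } |L| \geq \tfrac{\eps}{256} n\bigr] \leq \bigl(1 - \tfrac{\eps}{256}\bigr)^v.
\]
Combining this with item 1 via Bayes' rule yields
\[
\Pr\bigl[|L| \geq \tfrac{\eps}{256} n \,\big|\, \text{Bob succeeds}\bigr] \leq 2^{t} \bigl(1 - \tfrac{\eps}{256}\bigr)^v.
\]

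The last step is to verify that the choice $v = \frac{256}{\eps}\bigl(t + \log(1/\eps) + 8\bigr)$ makes the right-hand side at most $\eps/256$. Using $\log(1-x) \leq -x$, one checks that $v \cdot \tfrac{\eps}{256} \geq t + \log(256/\eps)$, which is exactly what the bound requires since $\log 256 = 8$. Thus the complement of the desired event has probability at most $\eps/256$, proving item 2. There is no real obstacle here; the only subtle point is noting that $C$ is sampled independently of the inputs and measurement outcomes, so the conditioning on the transcript legitimately lets us treat $L$ as fixed while bounding the probability that $C$ avoids it.
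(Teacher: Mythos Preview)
Your proposal is correct and follows essentially the same approach as the paper's proof: both bound $\Pr[\text{Bob succeeds}]$ from below by the probability of winning all $n$ games, bound $\Pr[\text{Bob succeeds} \wedge |L|\ge \eps n/256]$ by $(1-\eps/256)^v$ via independence of $C$ from the transcript, and then divide. Your conditioning on the full transcript before averaging is slightly more explicit than the paper's direct conditioning on the bad event, but the computation and the final numerics are identical (note that what you call a ``union bound'' in your plan is actually the product bound $(1-\eps/256)^v$, not a union bound).
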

\begin{proof}
We first have:
\begin{align*}
\Pr[\textrm{Bob succeeds}] & = \Pr[\textrm{Alice and Bob win } G_i \ \forall i \in C] \\ & \ge \Pr[\textrm{Alice and Bob win } G_i \ \forall i \in [n] ] \ge 2^{-t}.
\end{align*}

For a uniformly random index $i$, we have:
\begin{align*}
\Pr[\textrm{Alice and Bob win } G_i \mid \textrm{ Alice and Bob win }  \le (1 - \frac{\eps}{256})n \textrm { games }] \le 1 - \frac{\eps}{256}.
\end{align*}
Since the indices $i_1,\dots,i_v$ are independent random indices in $[n]$, we~have 
\begin{align*}
& \Pr[\textrm{Bob succeeds} | \textrm{ Alice and Bob win } \le (1 - \frac{\eps}{256})n \textrm { games}] \\
& =  \Pr[\textrm{Alice and Bob win } G_i \ \forall i \in C | \textrm{ Alice and Bob win } \le (1 - \frac{\eps}{256})n \textrm { games}] \\ 
& \le (1 - \frac{\eps}{256})^v.
\end{align*}
Next, we have:
\begin{align*}
& \Pr[\textrm{Alice and Bob win } \le (1 - \frac{\eps}{256})n \textrm { games} \mid \textrm{Bob succeeds}] \cdot \Pr[\textrm{Bob succeeds}] \\
& =  \Pr[\textrm{Bob succeeds} \mid \textrm{Alice and Bob win }  \le (1 - \frac{\eps}{256})n \textrm { games}] \cdot \Pr[\textrm{Alice and Bob win }  \le (1 - \frac{\eps}{256})n \textrm { games}] \\
& \le \Pr[\textrm{Bob succeeds} \mid \textrm{Alice and Bob win }  \le (1 - \frac{\eps}{256})n \textrm { games}] \\
& \le (1 - \frac{\eps}{256})^v.
\end{align*}
This gives us:
\begin{align*}
\Pr[\textrm{Alice and Bob win } \le (1 - \frac{\eps}{256})n \textrm { games } | \textrm{ Bob succeeds}] & \le \frac{(1 - \frac{\eps}{256})^v}{\Pr[\textrm{Bob succeeds}]} \\
& \le \frac{(1 - \frac{\eps}{256})^v}{2^{-t}}.
\end{align*}
Since $v = \frac{256}{\eps}(t + \log(1/\eps) + 8)$, we have 
$$
\Pr[\textrm{Alice and Bob win } \le (1 - \frac{\eps}{256})n \textrm { games } | \textrm{ Bob succeeds}] \le \frac{\eps}{256}.
$$
\end{proof}  

\subsection{Constructing the advice states}\label{Section:ConstructionAdvice}
In order to construct the advice states, we perform the above checking procedure but we perform everything in quantum superposition. More precisely, Alice and Bob start with the state 
$$
\ket{\Omega_0} = \sum_{xy} \sqrt{p_{xy}} \ket{x}_{X} \ket{\phi}_{AB} \ket{y}_{Y}.
$$
where $\ket{\phi}$ is the shared state that allows Alice and Bob to win $G^n$ with probability $2^{-t}$. After that, they perform unitarily the strategy to win $G^n$ with this probability $2^{-t}$ without measuring their outputs $a,b$.

Proposition \ref{Proposition:CommuncationProtocol} works for a random $C$. We pick a fixed subset $C$ such that Proposition \ref{Proposition:CommuncationProtocol} holds. Alice sends $x^C$ and $a^C$ to Bob in an extra message register $M_{X^C} \otimes M_{A^C}$.

Let 
$$
\ket{\Omega_1} = \sum_{xy} \sqrt{p_{xy}} \ket{x}_{X} \otimes (\sum_{a,b} \alpha^{xy}_{ab} \ket{a} \ket{\phi^{xy}_{ab}} \ket{b})_{AB} \otimes \ket{y}_{Y} \ket{x^C a^C}_{M_{X^C},M_{A^C}} 
$$
 the state that Alice and Bob share after Alice sends a copy of the registers $X^C,A^C$ to Bob.
Let $\rho = p\cdot \altketbra{\psi} + (1-p) \cdot \altketbra{\psi_{Abort}}$ the state that they share after Bob performs his test. Here, state $\ket{\psi}$ corresponds to the case where Bob succeeds and $\ket{\psi_{Abort}}$ to the case where Bob aborts. We write
$$
\ket{\psi} = \sum_{xy} \sqrt{q_{xy}} \ket{x}_{X} \otimes (\sum_{a,b} \beta^{xy}_{ab} \ket{a} \ket{\psi^{xy}_{ab}} \ket{b})_{AB} \otimes \ket{y}_{Y} \ket{x^C a^C}_{M_{X^C},M_{A^C}}. $$
 From Proposition \ref{Proposition:CommuncationProtocol}, we have $p \ge 2^{-t}$. We define Bob's Hilbert space as $Bob = B \otimes Y \otimes M_{X^C} \otimes M_{A^C}$. Similarly, we will write $Alice = 
 X \otimes A$.  We also write $X = X^C \otimes X^{\overline{C}}$ and $Y = Y^C \otimes Y^{\overline{C}}$.

We now show that $\ket{\psi}$ doesn't give away much information about input registers $X^{\overline{C}}$ and $Y^{\overline{C}}$ to the other player.

\begin{proposition} $$
I(\X^{\overline{C}} : \Bob)_{\ket{\psi}} + I(\Y^{\overline{C}} : \Alice)_{\ket{\psi}} \le 2|M_{A^C}| + 2t \le 2v\log(l) + 2t.$$
\end{proposition}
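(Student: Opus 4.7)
The plan is to split the task into two stages: first bound the two mutual information quantities on the pre-test state $\Omega_1$, and then transfer these bounds to the post-test state $\ket{\psi}$ via the relation $\kb{\psi} \le \tfrac{1}{p}\Omega_1 \le 2^t \Omega_1$.

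The central observation driving Stage~1 is that, since Alice's and Bob's honest strategies implement measurements of their classical inputs, the corresponding unitaries take the form $U_A = \sum_x \kb{x}_X \otimes U_A^x$ and $U_B = \sum_y \kb{y}_Y \otimes U_B^y$. These classically controlled unitaries, together with the CNOT copies producing $M_{X^C}, M_{A^C}$ and the test projector, all commute with computational-basis measurements of any subregister of $X$ or $Y$, so $\X$ and $\Y$ can be treated as classical throughout. The free game structure moreover makes $\X^{\overline{C}}$ independent of $(\X^C, \Y)$, and the partial trace over Alice's registers absorbs any local unitary she applies, leaving Bob's reduced state untouched (and vice versa). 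Combining these, one shows that on $\Omega_1$ the reduced state on $BY$ conditional on the classical values of $\X^C$ and $\X^{\overline{C}}$ does not depend on $\X^{\overline{C}}$; since $M_{X^C}$ is just a classical copy of $\X^C$, this gives $I(\X^{\overline{C}}: BY M_{X^C})_{\Omega_1} = 0$. Applying the chain rule with the standard bound $I(\cdot : M) \le 2|M|$ to the remaining register $M_{A^C}$ then yields $I(\X^{\overline{C}}: \mbox{Bob})_{\Omega_1} \le 2|M_{A^C}|$; symmetrically $I(\Y^{\overline{C}}: \mbox{Alice})_{\Omega_1} = 0$, as Alice receives no message.

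For Stage~2, the inequality $\kb{\psi} \le 2^t \Omega_1$ gives $S_\infty(\psi \| \Omega_1) \le t$, and Facts~\ref{f7} and \ref{Claim:RelEntTrace} pass this down to every reduced register: $S(\psi^R \| \Omega_1^R) \le t$. For each of the two mutual informations, Fact~\ref{f8} lets us replace the reference $\psi^X \otimes \psi^Y$ by $\Omega_1^X \otimes \Omega_1^Y$; combining this with the identity $S(\rho \| \sigma \otimes \tau) = I(X:Y)_\rho + S(\rho^X \| \sigma) + S(\rho^Y \| \tau)$ and the marginal min-relative-entropy bound delivers $I(X:Y)_\psi \le I(X:Y)_{\Omega_1} + t$. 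Summing then yields the claimed $2|M_{A^C}| + 2t$, and the second inequality in the statement follows from $|M_{A^C}| \le v \log l$. The principal obstacle is exactly this Stage~2 conversion: a projection of probability $2^{-t}$ can in principle rescale mutual information by a multiplicative factor $1/p = 2^t$, and keeping the loss merely additive in $t$ per quantity requires playing the min-relative-entropy monotonicity (Facts~\ref{f7} and \ref{Claim:RelEntTrace}) against the product-reference identity in precisely the right way.
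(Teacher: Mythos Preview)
Your Stage~2 has a genuine gap. First, the operator inequality $\kb{\psi}\le 2^t\,\kb{\Omega_1}$ is simply false: both states are pure, so this would force $\ket{\psi}$ to be proportional to $\ket{\Omega_1}$. The correct statement is $\kb{\psi}\le 2^t\rho$, where $\rho=p\kb{\psi}+(1-p)\kb{\psi_{\Abort}}$ is the dephased state after Bob's test. More seriously, even with the right reference state, the conversion ``$I(X{:}Y)_{\psi}\le I(X{:}Y)_{\Omega_1}+t$'' that you claim does not follow from Fact~\ref{f8} and the identity $S(\cdot\|\sigma\otimes\tau)=I(X{:}Y)+S(\cdot^X\|\sigma)+S(\cdot^Y\|\tau)$. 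Applying the identity with first argument $\psi^{XY}$ just rewrites $S(\psi^{XY}\|\Omega_1^X\otimes\Omega_1^Y)$ as $I(X{:}Y)_\psi$ plus two nonnegative terms, which combined with Fact~\ref{f8} yields only the tautology $I(X{:}Y)_\psi\le I(X{:}Y)_\psi+\text{(nonneg)}$. In fact the implication is false in general: take $X$ uniform on $\{0,1\}^n$, let $Y=X$ with probability $\tfrac12$ and $Y$ independent uniform otherwise, and let $\psi$ be the conditioning on the first event. Then $S_\infty(\psi\|\rho)<1$ while $I(X{:}Y)_\psi - I(X{:}Y)_\rho \approx n/2$.

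The fix---and this is exactly what the paper does---is to carry out your Stage~1 at the \emph{operator} level rather than the mutual-information level. Your observation $I(\X^{\overline{C}}:BYM_{X^C})_{\Omega_1}=0$ says precisely that $\kb{\Omega_1}^{\X^{\overline{C}}BYM_{X^C}}$ is a product; combining this with Corollary~\ref{Corollary:CoolProduct} (instead of the chain rule) gives the operator bound $\kb{\Omega_1}^{\X^{\overline{C}}\Bob}\le 2^{2|M_{A^C}|}\bigl(\kb{\Omega_1}^{\X^{\overline{C}}}\otimes\kb{\Omega_1}^{BYM_{X^C}}\otimes\kb{\Omega_1}^{M_{A^C}}\bigr)$. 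This inequality is preserved when Bob applies his test (a CPTP map on his side), after which one chains it with $\kb{\psi}\le 2^t\rho$ to obtain an $S_\infty$ bound on $\kb{\psi}^{\X^{\overline{C}}\Bob}$ against a product reference; Facts~\ref{f7} and~\ref{f8} then convert this to the desired mutual-information bound in a single step. The point is that $S_\infty$ bounds compose additively under the operator order, whereas mutual-information bounds do not survive conditioning on a low-probability event with only an additive loss.
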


\begin{proof}
\begin{align*}
\altketbra{\Omega_1}^{\X^{\overline{C}}\Bob} & \le 2^{2|M_{A^C}|} (\altketbra{\Omega_1}^{\X^{\overline{C}}BYM_{X^C}} \otimes \altketbra{\Omega_1}^{M_{A^C}}) \\
& = 2^{2|M_{A^C}||} (\altketbra{\Omega_1}^{\X^{\overline{C}}} \otimes \altketbra{\Omega_1}^{BYM_{X^C}}  \otimes \altketbra{\Omega_1}^{M_{A^C}}).
\end{align*}
The first inequality comes from Corollary \ref{Corollary:CoolProduct} and
the last equality comes from the fact that Bob has no information about $\X^{\overline{C}}$ outside of $M_{A^C}$, since we start from a game on a product distribution. 

Recall that we defined $\rho$ as the state shared by Alice and Bob after Bob performs his test. Since Bob can go from $\ket{\Omega_1}$ to $\rho$ with a local operation on his space, we have:
\begin{align*}
\rho^{\X^{\overline{C}}\Bob} \leq 2^{2|M_{A^C}|} (\rho^{\X^{\overline{C}}} \otimes \rho^{BYM_{\X^C}}  \otimes \rho^{M_{A^C}})\end{align*}
Next, we use $\rho = p\cdot \altketbra{\psi} + (1-p) \cdot \altketbra{\psi_{Abort}}$. We have
$
\altketbra{\psi}^{\X^{\overline{C}}\Bob} \le \frac{1}{p} \rho^{\X^{\overline{C}}\Bob}  \le \frac{1}{p} 2^{2|M_{A^C}|} (\rho^{\X^{\overline{C}}} \otimes \rho^{BYM_{\X^C}}  \otimes \rho^{M_{A^C}}),
$
which gives 
\begin{align*}
S_{\infty}(\altketbra{\psi}^{\X^{\overline{C}}Bob}   \ || \  \rho^{\X^{\overline{C}}} \otimes \rho^{BYM_{\X^C}}  \otimes \rho^{M_{A^C}}) \le 2|M_{A^C}| + \log(1/p).
\end{align*}
Moreover,
\begin{align*}
S_\infty(\altketbra{\psi}^{\X^{\overline{C}}Bob}  \ || \ \rho^{\X^{\overline{C}}} \otimes \rho^{BYM_{\X^C}}  \otimes \rho^{M_{A^C}}) & \ge
S(\altketbra{\psi}^{\X^{\overline{C}}Bob}   \ || \  \rho^{\X^{\overline{C}}} \otimes \rho^{BYM_{\X^C}}  \otimes \rho^{M_{A^C}}) \\ & \ge 
S(\altketbra{\psi}^{\X^{\overline{C}}Bob}   \ || \  \altketbra{\psi}^{\X^{\overline{C}}} \otimes \altketbra{\psi}^{\Bob}) = I(\X^{\overline{C}} : \Bob)_{\ket{\psi}},
\end{align*}
where we use respectively Fact \ref{f7} and Fact \ref{f8}.
Putting this together, we obtain
$$
I(\X^{\overline{C}} : \Bob)_{\ket{\psi}} \le 2|M_{A^C}| + \log(1/p) \le 2|M_{A^C}| + t.
$$
Similarly, we can write 
\begin{align*}
I(\Y^{\overline{C}} : \Alice)_{\ket{\psi}} & = S(\ket{\psi}^{\Y^{\overline{C}}\Alice} \ || \ \ket{\psi}^{\Y^{\overline{C}}} \otimes \ket{\psi}^{\Alice}) 
\le S(\ket{\psi}^{\Y^{\overline{C}}\Alice} \ || \ \rho^{\Y^{\overline{C}}} \otimes \rho^{\Alice}) \\
& \le S_{\infty}(\ket{\psi}^{\Y^{\overline{C}}\Alice} \ || \ \rho^{\Y^{\overline{C}}} \otimes \rho^{\Alice}) 
\le t
\end{align*}
where for the last inequality, we use $\rho^{\Y^{\overline{C}} \Alice} = \rho^{\Y^{\overline{C}}}  \otimes \rho^{\Alice}$ (there is no message from Alice to Bob) and $\kb{\psi} \le 2^{t} \rho$. Putting all this together, we conclude $$
I(\X^{\overline{C}} : \Bob)_{\ket{\psi}} + I(\Y^{\overline{C}} : \Alice)_{\ket{\psi}} \le 2|M_{A^C}| + 2t.
$$
\end{proof}
We now show that on average on $i \in \overline{C}$, Alice and Bob have little information about each other's $i^{th}$ input registers:
\begin{proposition}\label{Proposition:AlmostProduct} $$
\sum_{i \in \overline{C}} I(\X_i : \Bob)_{\ket{\psi}} + I(\Y_i : \Alice)_{\ket{\psi}} \le 2 |M_{A^C}| + 4t$$
\end{proposition}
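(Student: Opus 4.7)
The plan is to reduce the desired bound to the preceding proposition and then bound the small correction term that arises because, in $\ket\psi$ (the state obtained after conditioning on Bob's test succeeding), the measured registers $\X_i$ for $i\in\overline C$ are no longer jointly independent, even though they are in the unconditioned state $\rho$.

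First I would apply subadditivity of the conditional entropy to relate the individual mutual informations to the joint one. Expanding $\sum_{i\in\overline C} I(\X_i:\Bob)_{\ket\psi} = \sum_i S(\X_i)_{\ket\psi}-\sum_i S(\X_i\mid\Bob)_{\ket\psi}$ and using $S(\X^{\overline C}\mid\Bob)_{\ket\psi}\le\sum_i S(\X_i\mid\Bob)_{\ket\psi}$ gives $\sum_{i\in\overline C} I(\X_i:\Bob)_{\ket\psi} - I(\X^{\overline C}:\Bob)_{\ket\psi}\le \sum_i S(\X_i)_{\ket\psi}-S(\X^{\overline C})_{\ket\psi}$. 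The right-hand side is precisely the classical total correlation of the inputs in $\ket\psi$, which can also be written as $S(\altketbra{\psi}^{\X^{\overline C}}\,\|\,\bigotimes_{i\in\overline C}\altketbra{\psi}^{\X_i})$.

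The key step is then to bound this total correlation by $t$. The observation is that because $G$ is free, the inputs in $\ket{\Omega_1}$ are already in a product distribution, and Bob's test is a local operation on his registers (including the success/abort flag) that does not touch $\X^{\overline C}$; hence $\rho^{\X^{\overline C}}=\bigotimes_{i\in\overline C}\rho^{\X_i}$. Combining $\altketbra{\psi}\le\tfrac{1}{p}\rho$ (which comes directly from $\rho=p\altketbra{\psi}+(1-p)\altketbra{\psi_{Abort}}$) with $p\ge 2^{-t}$ from Proposition~\ref{Proposition:CommuncationProtocol}, and taking the partial trace to $\X^{\overline C}$, yields $\altketbra{\psi}^{\X^{\overline C}}\le 2^t\bigotimes_{i\in\overline C}\rho^{\X_i}$, whence $S_\infty(\altketbra{\psi}^{\X^{\overline C}}\,\|\,\bigotimes_i\rho^{\X_i})\le t$. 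A classical computation (a multipartite analogue of Fact~\ref{f8}, whose one-line proof is the identity $S(\sigma\,\|\,\bigotimes_i\rho^{X_i})-S(\sigma\,\|\,\bigotimes_i\sigma^{X_i})=\sum_i S(\sigma^{X_i}\,\|\,\rho^{X_i})\ge 0$) lets me replace $\bigotimes_i\rho^{\X_i}$ by $\bigotimes_i\altketbra{\psi}^{\X_i}$ at the cost of a nonnegative quantity, and Fact~\ref{f7} then replaces $S_\infty$ by $S$. Chaining these inequalities gives $\sum_i S(\X_i)_{\ket\psi}-S(\X^{\overline C})_{\ket\psi}\le t$.

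Finally I would run the same argument symmetrically with Alice's role replacing Bob's to get $\sum_{i\in\overline C} I(\Y_i:\Alice)_{\ket\psi}-I(\Y^{\overline C}:\Alice)_{\ket\psi}\le t$. Adding the two bounds and invoking the preceding proposition for $I(\X^{\overline C}:\Bob)_{\ket\psi}+I(\Y^{\overline C}:\Alice)_{\ket\psi}\le 2|M_{A^C}|+2t$ then yields the claimed inequality $2|M_{A^C}|+4t$. The one subtlety worth double-checking is that $\rho^{\X^{\overline C}}$ is indeed product across $i\in\overline C$: this is the only place where the freeness of $G$ genuinely enters, and it relies on combining the product form of inputs in $\ket{\Omega_1}$ with the fact that Bob's local test leaves the $\X^{\overline C}$ marginal unchanged. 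Everything else is a short manipulation of standard information-theoretic identities already assembled in the paper.
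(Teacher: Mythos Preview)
Your proposal is correct and is essentially the paper's own proof: the same subadditivity step reduces to $I(\X^{\overline C}:\Bob)+I(\Y^{\overline C}:\Alice)$ plus the two total-correlation terms $\sum_i S(\X_i)-S(\X^{\overline C})$ and $\sum_i S(\Y_i)-S(\Y^{\overline C})$, and the paper bounds each of the latter by $t$ via exactly the relative-entropy computation you describe (using $\altketbra{\psi}\le 2^t\rho$, the product structure $\rho^{\X^{\overline C}}=\bigotimes_i\rho^{\X_i}$ from freeness, and the identity $S(\sigma\,\|\,\bigotimes_i\rho^{X_i})=S(\sigma\,\|\,\bigotimes_i\sigma^{X_i})+\sum_i S(\sigma^{X_i}\,\|\,\rho^{X_i})$). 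The only cosmetic point is that the passage from $S_\infty$ to $S$ via Fact~\ref{f7} should precede the multipartite analogue of Fact~\ref{f8}, since the latter identity is stated for the relative entropy $S$; with that ordering fixed, your chain of inequalities matches the paper verbatim.
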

\begin{proof}
\begin{align*}
\sum_{i \in \overline{C}} I(\X_i : \Bob)_{\ket{\psi}} + I(\Y_i : \Alice)_{\ket{\psi}} & = 
\sum_{i \in \overline{C}} S(\X_i)_{\ket{\psi}} - S(\X_i | Bob)_{\ket{\psi}} + S(\Y_i)_{\ket{\psi}} - S(\Y_i | Bob)_{\ket{\psi}} \\
& \le  \sum_{i \in \overline{C}} S(\X_i)_{\ket{\psi}} - S(\X | Bob)_{\ket{\psi}} + S(\Y_i)_{\ket{\psi}} - S(\Y | Bob)_{\ket{\psi}} \\
& \le 
I(\X^{\overline{C}} : \Bob)_{\ket{\psi}} + I(\Y^{\overline{C}} : \Alice)_{\ket{\psi}} 
+ \sum_{i \in \overline{C}}  S(\X_i)_{\ket{\psi}} - S(\X^{\overline{C}})_{\ket{\psi}} 
\\ & \quad + \sum_{i \in \overline{C}}  S(\Y_i)_{\ket{\psi}} - S(\Y^{\overline{C}})_{\ket{\psi}} \\
& \le 2|M_{A^C}| + 2t
+ \sum_{i \in \overline{C}}  S(\X_i)_{\ket{\psi}} - S(\X^{\overline{C}})_{\ket{\psi}} + 
\sum_{i \in \overline{C}}  S(\Y_i)_{\ket{\psi}} - S(\Y^{\overline{C}})_{\ket{\psi}}.
\end{align*}
Morever, recall that $S_{\infty}(\kb{\psi} \ || \ \rho) \le t$. This gives 
\begin{align*}
t & \ge S_{\infty}(\kb{\psi} \ || \ \rho) \ge S(\kb{\psi} \ || \ \rho)  \ge S(\kb{\psi^{\X^{\overline{C}}}} \ || \ \rho^{\X^{\overline{C}}}) \\
& = S(\kb{\psi^{\X^{\overline{C}}}} \ || \ \bigotimes_{i \in \overline{C}} \rho^{\X_i}).
\end{align*}
where the last equality comes from the face that $\rho^{\X^{\overline{C}}} = \bigotimes_{i \in \overline{C}} \rho^{\X_i}$. Next, we have 
\begin{align*}
S(\kb{\psi}^{\X^{\overline{C}}} \ || \ \bigotimes_{i \in \overline{C}} \rho^{\X^{i}}) & = - S(\X^{\overline{C}})_{\ket{\psi}} - \Tr(\kb{\psi}^{\X^{\overline{C}}} \log(\bigotimes_{i \in \overline{C}} \rho^{\X_{i}})) \\
& = - S(\X^{\overline{C}})_{\ket{\psi}} - \sum_{i \in \overline{C}} \Tr(\kb{\psi}^{\X_i} \log(\rho^{\X_i})) \\
& = - S(\X^{\overline{C}})_{\ket{\psi}} - \sum_{i \in \overline{C}} \Tr(\kb{\psi}^{\X_i} \log(\kb{\psi}^{\X_i})) + \sum_i S(\kb{\psi}^{\X_i} \ || \ \rho^{\X_{i}}) \\
& \ge  - S(\X^{\overline{C}})_{\ket{\psi}} + \sum_{i \in  \overline{C}} S(\X_i)_{\ket{\psi}}.
\end{align*}
From there, we have 
$$ \sum_{i \in  \overline{C}} S(\X_i)_{\ket{\psi}} - S(\X^{\overline{C}})_{\ket{\psi}} \le S(\kb{\psi}^{\X^{\overline{C}}} \ || \ \bigotimes_{i \in \overline{C}} \rho^{\X_{i}})  \le t. $$ 
Similarly, we can show that 
$$ \sum_{i \in  \overline{C}} S(\Y_i)_{\ket{\psi}} - S(\Y^{\overline{C}})_{\ket{\psi}}  \le t. $$
From there, we conclude that 
$$
\sum_{i \in \overline{C}} I(\X_i : \Bob)_{\ket{\psi}} + I(\Y_i : \Alice)_{\ket{\psi}} \le 2 |M_{A^C}| + 4t.$$
\end{proof}

\subsection{Finding a good index}\label{Section:Index}
We consider the states $\ket{\psi}$ and $\rho$ from the previous Section. We now prove that if Alice and Bob share $\ket{\psi}$, there exists an index $i$ such that Alice and Bob can win $G_i$ with high probability but Alice (resp. Bob) doesn't have a lot of information about $y_i$ (resp. $x_i$). We also want that the distribution of inputs $x_i,y_i$ when sharing $\ket{\psi}$ (after conditionning on 'Accept') is close to the distribution of inputs when sharing $\rho$ (before conditionning on 'Accept'). 
\begin{lemma}
We show  the following:
\begin{enumerate}
\item Let $K = \{i : S(\kb{\psi}^{\X_i,\Y_i} || \rho^{\X_i,\Y_i}) \le \frac{4t}{n}\}$, we have $|K| \ge 3n/4$.
\item Let $L = \{i : Pr[\Alice \  \& \ \Bob \mbox{ win } G_i �| \ \mbox{sharing } \ket{\psi}] \ge 1 - \frac{\eps}{32}\}$, we have $|L| \ge 3n/4$.
\item Let $M = \{i \in \overline{C}: S(\X_i : Bob)_{\ket{\psi}} + S(\Y_i : Alice)_{\ket{\psi}} \le \frac{16|M_{A^C}|}{\overline{C}} + \frac{32t}{\overline{C}} \}$. We have $|M| \ge \frac{7\overline{C}}{8}$. In particular,  if $|\overline{C}| \ge 6n/7$, we have $|M| \ge 3n/4$.
\end{enumerate}
If $\overline{C} \ge \frac{6n}{7}$, this implies $|K \cap L \cap M| \ge n/4$. In particular, $K \cap L \cap M \neq \emptyset$.
\end{lemma}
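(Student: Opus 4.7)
The plan is to prove the three size estimates on $K$, $L$, $M$ each by a Markov-type argument applied to an aggregate bound, and then conclude via a simple union bound on the complements in $[n]$.

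For item~(1), I would first observe that $S_{\infty}(\kb{\psi}\,\|\,\rho) \le t$, which follows from the decomposition $\rho = p\kb{\psi}+(1-p)\kb{\psi_{\mathrm{Abort}}}$ together with $p \ge 2^{-t}$ (Proposition~\ref{Proposition:CommuncationProtocol}). Facts~\ref{f7} and~\ref{Claim:RelEntTrace} then yield $S(\kb{\psi}^{\X Y}\,\|\,\rho^{\X Y}) \le t$ on the classical input registers. Because $G$ is a free game, the distribution $\rho^{\X Y}$ factorizes as $\bigotimes_i \rho^{\X_i \Y_i}$, so Corollary~\ref{Corollary:RelativeEntropySuperadditivity} (applied with $Z_i = (X_i,Y_i)$) gives $\sum_i S(\kb{\psi}^{\X_i \Y_i}\,\|\,\rho^{\X_i \Y_i}) \le t$. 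Markov's inequality then forces at most $n/4$ indices to exceed the threshold $4t/n$, so $|K|\ge 3n/4$.

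For item~(2), Proposition~\ref{Proposition:CommuncationProtocol} implies that, conditioned on Bob succeeding (i.e.\ on sharing $\ket{\psi}$), with probability at least $1-\eps/256$ Alice and Bob win at least $(1-\eps/256)n$ games; in that case the fraction of winning indices is at least $1-\eps/256$. Averaging over a uniformly random $i\in[n]$ gives $\E_i \Pr[\text{A\&B win }G_i \mid \ket{\psi}] \ge (1-\eps/256)^2 \ge 1-\eps/128$, so the expected losing probability over $i$ is at most $\eps/128$. Markov's inequality with threshold $\eps/32 = 4(\eps/128)$ then gives $|L| \ge 3n/4$. Item~(3) is essentially Proposition~\ref{Proposition:AlmostProduct} combined with Markov: that bound reads $\sum_{i \in \overline{C}}\left(I(\X_i:\Bob)_{\ket{\psi}} + I(\Y_i:\Alice)_{\ket{\psi}}\right) \le 2|M_{A^C}| + 4t$, whose average over $\overline{C}$ is $(2|M_{A^C}|+4t)/|\overline{C}|$. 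The threshold in the definition of $M$ is exactly $8$ times this average, so at most a $1/8$ fraction of $\overline{C}$ violates it, giving $|M| \ge 7|\overline{C}|/8$; when $|\overline{C}|\ge 6n/7$ this yields $|M| \ge 3n/4$.

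To conclude, the three complements $[n]\setminus K$, $[n]\setminus L$, $[n]\setminus M$ each have size at most $n/4$, so by a union bound $|K\cap L\cap M| \ge n - 3(n/4) = n/4 > 0$. The only step where the free-game hypothesis is genuinely used is the factorization of $\rho^{\X Y}$ in item~(1); without it one cannot split the single-coordinate relative entropies by superadditivity. This is the main subtlety I expect in writing out the argument, since the other two items are pure Markov arguments that repackage estimates already established in Sections~\ref{Section:Checking} and~\ref{Section:ConstructionAdvice}.
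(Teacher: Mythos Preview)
Your proposal is correct and follows essentially the same route as the paper: for each of the three items you combine the relevant aggregate bound (relative-entropy superadditivity for $K$, the average winning probability from Proposition~\ref{Proposition:CommuncationProtocol} for $L$, and Proposition~\ref{Proposition:AlmostProduct} for $M$) with a Markov argument (the paper states this as Fact~\ref{Fact:Sum}), and then take a union bound on the complements. Your explicit remark that the free-game hypothesis enters only through the product structure $\rho^{\X\Y}=\bigotimes_i \rho^{\X_i\Y_i}$ needed to apply Corollary~\ref{Corollary:RelativeEntropySuperadditivity} is exactly the point the paper uses implicitly.
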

\begin{proof}
For each to these inequalities, we will use the following fact:
\begin{fact} \label{Fact:Sum}
For any $n$ non-negative real numbers $x_i$ with $\frac{1}{n} \sum_{i = 1}^n x_i \le s$, we have $|\{i : x_i \le Cs\}| \ge n(1 - 1/C)$.
\end{fact}
We can now prove our Lemma.
\begin{enumerate}
\item 
Since $\rho = p\cdot \kb{\psi} + (1-p) \cdot \kb{\psi_{Abort}}$, we have 
$S(\kb{\psi} \ || \rho) \le -\log(p) \le t$ which implies from Fact \ref{Claim:RelEntTrace} $S(\kb{\psi}^{\X\Y} || \rho^{\X\Y}) \le t$. Using Corollary \ref{Corollary:RelativeEntropySuperadditivity}, we have $
\sum_{i \in [n]} S(\kb{\psi}^{\X_i\Y_i} || \rho^{\X_i\Y_i}) \le t$ which implies $|K| \ge \frac{3n}{4}$ from Fact \ref{Fact:Sum}.
\item
$\sum_i \Pr[A \ \& \ B \mbox{ win } G_i | \mbox{ sharing } \ket{\psi}]$ is the average number of games that Alice and Bob win when sharing $\ket{\psi}$. From Proposition \ref{Proposition:CommuncationProtocol}, we have $
Pr[\textrm{Alice and Bob win } \ge (1-\frac{\eps}{256})n \textrm{ games }| \textrm{ Bob succeeds} ]
= Pr[\textrm{Alice and Bob win } \ge (1-\frac{\eps}{256})n \textrm{ games }| \textrm{sharing } \ket{\psi} ]
 \ge (1-\frac{\eps}{256})$.

This implies $\frac{1}{n} \sum_i Pr[A \ \& \ B \mbox{ win } G_i | \mbox{ sharing } \ket{\psi}] \ge (1 - \frac{\eps}{256})(1 - \frac{\eps}{256}) \ge 1 - \frac{\eps}{128}$ which gives $|L| \ge 3n/4$.
\item Using Proposition \ref{Proposition:AlmostProduct}, we have
\begin{align*}
\frac{1}{\overline{C}} \sum_{i \in \overline{C}} I(\X_i^{\overline{C}} : \Bob)_{\ket{\psi}} + I(\Y_i^{\overline{C}} : \Alice)_{\ket{\psi}} \le \frac{2|M_{A^C}|}{\overline{C}} + \frac{4t}{\overline{C}}.
\end{align*}
Again, using Fact \ref{Fact:Sum}, we have $|M| \ge \frac{7\overline{C}}{8}$ which implies $|M| \ge 3n/4$ for $|\overline{C}| \ge 6n/7$.
\end{enumerate}
\end{proof}

\subsection{Main result}\label{Section:Conclude}
\begin{theorem}\label{Theorem:ParallelRepetition}
For any free game $G = (I,O,V,p)$, we have $\omega^*(G) \le (1 - \eps^2)^{\Omega(\frac{n}{\log(l)})}$.
\end{theorem}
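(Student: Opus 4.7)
The plan is to combine the SIC lower bound of Section \ref{Section:Relating} with the constructions in Sections \ref{Section:ConstructionAdvice}--\ref{Section:Index}. Set $t = -\log\omega^*(G^n)$, so the goal is $t = \Omega(n\eps^2/\log l)$. I will produce, for a single well-chosen coordinate $i$, a game $H=(I,O,V,q)$ close to $G$ and an advice state that simultaneously yields $SIC(H,1-\eps/32) \ge \Omega(\eps)$ via Proposition \ref{MainProposition}, and $SIC(H,1-\eps/32)\le O((v\log l + t)/n)$ via the structural properties of $\ket{\psi}$; with $v = \frac{256}{\eps}(t+\log(1/\eps)+8)$, rearranging the resulting inequality gives the theorem.

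I would first execute the superposed checking procedure of Section \ref{Section:ConstructionAdvice} to produce the post-success state $\ket{\psi}$ and its unconditioned version $\rho$. I may assume $v\le n/7$ (otherwise $t = \Omega(n\eps)$ already implies the theorem), so that $|\overline{C}|\ge 6n/7$, and apply the index lemma of Section \ref{Section:Index} to obtain an index $i \in K\cap L\cap M \subseteq \overline{C}$. Next, define $H = (I,O,V,q)$ where $q$ is the joint distribution obtained by measuring $\X_i\Y_i$ on $\ket{\psi}$. Because $G$ is free and the checking measurements do not touch $X_i,Y_i$ for $i\in\overline{C}$, the marginal $\rho^{\X_i\Y_i}$ is exactly $p$, so $i\in K$ gives $S(q\|p)\le 4t/n$. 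Provided $4t/n \le \eps/8$ (else the theorem already holds), Proposition \ref{MainProposition} yields $SIC(H,1-\eps/32) \ge \Omega(\eps)$.

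For the matching upper bound, I would regard $\ket{\psi}$ itself as an advice state for $H$: after grouping all of Alice's registers other than $X_i$ (namely $A$, $X^{\overline{C}\setminus\{i\}}$, $X^C$) into her half of the advice, and all of Bob's other registers ($B$, $Y^{\overline{C}\setminus\{i\}}$, $M_{X^C}$, $M_{A^C}$) into his, $\ket{\psi}$ takes the required form $\sum_{x_iy_i}\sqrt{q_{x_iy_i}}\ket{x_i}\ket{\phi_{x_iy_i}}\ket{y_i}$. The condition $i\in L$ guarantees that running coordinate $i$'s measurements on $\ket{\psi}$ wins $H$ with probability $\ge 1-\eps/32$, meeting the constraint in the definition of $SIC(H,1-\eps/32)$; the condition $i\in M$ bounds $I(\X_i:\Bob)_{\ket{\psi}}+I(\Y_i:\Alice)_{\ket{\psi}}$ by $O((|M_{A^C}|+t)/n) = O((v\log l + t)/n)$. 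Combining with the lower bound and substituting $v$ yields $t = \Omega(n\eps^2/\log l)$, equivalently $\omega^*(G^n)\le (1-\eps^2)^{\Omega(n/\log l)}$.

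The main obstacle is verifying that the regrouping in the upper-bound step really does produce a legitimate SIC advice state across the Alice/Bob cut. This relies crucially on $i\in \overline{C}$, so that Bob's message registers $M_{X^C},M_{A^C}$ carry information only about coordinates other than $i$, and on $G$ being free, so that the inputs $x_i$ (Alice) and $y_i$ (Bob) genuinely sit on opposite sides of the cut even after conditioning on the success event; without the freeness assumption, the marginal $q$ might also fail to be close to $p$ because the input registers would be coupled through the distribution itself.
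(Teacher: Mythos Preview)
Your proposal is correct and follows essentially the same route as the paper: produce $\ket{\psi}$ via the superposed checking procedure, pick $i\in K\cap L\cap M$, let $H$ have input distribution $q=\ket{\psi}^{\X_i\Y_i}$, and then confront the lower bound $I(\X_i:\Bob)_{\ket\psi}+I(\Y_i:\Alice)_{\ket\psi}\ge\Omega(\eps)$ (from $i\in K,L$ and Proposition~\ref{MainProposition}) with the upper bound $O((v\log l+t)/n)$ (from $i\in M$). The only cosmetic differences are that the paper thresholds directly on $t\le n\eps/2048$ rather than on $v\le n/7$, and that it compares the two mutual-information bounds directly rather than phrasing both sides as bounds on $SIC(H,1-\eps/32)$ (which sidesteps the minor wrinkle that the definition of $SIC(H,\alpha)$ pins $\omega^*(H|\rho)$ to exactly $\alpha$); also, don't forget to include $Y^C$ among Bob's advice registers in your regrouping.
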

\begin{proof}
Fix $n$. Let $t$ such that $\omega^*(G^n) = 2^{-t}$. If $t \ge \frac{n\eps}{2048}$ then the statement immediately holds. We now consider the case where $t \le \frac{n\eps}{2048}$. Since 
$|C| = \frac{256}{\eps}(t + \log(1/\eps) + 8)$, we have $|C| \le n/7$ and $|\overline{C}| \ge 6n/7$. 

We consider $\ket{\psi}$ and $\rho$ as defined in Section \ref{Section:ConstructionAdvice}. We pick an element $i \in K \cap L \cap M$. We can find such an $i$ since $K \cap L \cap M \neq \emptyset$. 

We define the game $H = (I,O,V,q)$ where $q = \ket{\psi}^{\X_i,\Y_i}$ is the input distribution of $x_i,y_i$ in state $\ket{\psi}$. Notice that by construction of $\rho$, we have $p = \rho^{\X_i,\Y_i}$ where $p$ is the distribution of game $G$. Since $i \in K$, we have $S(\kb{\psi}^{\X_i,\Y_i} || \rho^{\X_i,\Y_I}) = S(q || p) \le \frac{4t}{n} \le \frac{\eps}{8}$. 

Since $i \in L$, Alice and Bob can win game $i$ (meaning $H$) with probability greater than $1 - \frac{\eps}{32}$ sharing $\ket{\psi}$. We can hence use Proposition \ref{MainProposition} and obtain 
$$
I(\X_i : Bob)_{\ket{\psi}} + I(\Y_i:Alice)_{\ket{\psi}} \ge \Omega(\eps).
$$
Moreover, since $i \in M$, we have 
$$
I(\X_i : Bob)_{\ket{\psi}} + I(\Y_i:Alice)_{\ket{\psi}} \le \frac{32t + 16|M_{A^C}|}{\overline{C}} \le 112 \cdot \frac{2t + v\log(l)}{6n},
$$
with $v = \frac{256}{\eps}\left(t + \log(1/\eps) + 8\right)$. By putting the 2 inequalities together, we have 
$$
\frac{112t}{3n} + \frac{112 \cdot 256 \log(l)}{6 n \eps}\left[t + \log(1/\eps) + 8\right]  \ge \Omega(\eps),$$
which gives $t \ge \Omega(\frac{n \eps^2}{\log(l)})$ and hence, we conclude $\omega^*(G^n) \le (1 - \eps^2)^{\Omega(\frac{n}{\log(l)})}$.
\end{proof}

\section{Extending to free projection games} \label{Section:Projection}

\paragraph{Sketch of proof}
We extend this to the case where in addition, the game we consider is a projection game. This means that for any $x,y,b$, there exists a unique $a$ such that $V(ab|xy) = 1$. The idea of the proof is very similar, the only change is in the communication protocol. Instead of sending $x_i,a_i$ for each $i \in C$, Alice sends all the $x_i$ for $i \in C$ and a hash $h(a^C)$ where $h : [C\log(l)] \rightarrow [2t]$ is taken at random from a universal family of hash functions.

When Bob has $x^C,y^C,b^C$, there exists a unique $a_0^C$ such that $V(a_0^Cb^C | x^Cy^C) = 1$. Bob's check consists of verifying that he receives $h(a_0^C)$. As before, if they win all the games, this test will pass with probability $1$ and $\Pr[\Bob \ \mbox{succeeds}] \ge 2^{-t}$. When calculating $\Pr[\Bob \ \mbox{succeeds} | \mbox{Alice and Bob win} \le n(1-\frac{\eps}{32}) \mbox{games}]$, we have to add the probability that Alice gets $a_1^C \neq a_0^C$ but Bob receives $h(a_1^C) = h(a_0^C)$. Since $h$ is drawn from a universal family of hash functions, this happens with probability at most $2^{-2t}$ which doesn't change fundamentally the analysis. 

The rest is the same except that $|M_{A^C}| = 2t$ instead of $v \log(l)$. By performing the same analysis as before, we obtain 
$$
\omega^*(G^n) \le (1 - \eps)^{\Omega({n})}.
$$  

We now present the full proof, which is very similar to the case of general free games.
\paragraph{Communication protocol} $ \ $ \\ \\ 
\cadre{
\begin{center} Communication protocol \end{center}
\begin{itemize}
\item Alice and Bob share the state $\ket{\phi}$ that allows them to win $G^n$ wp. $\omega^*(G^n)$.
\item Alice and Bob get inputs $x = x_1,\dots,x_n$ and $y = y_1,\dots,y_n$, with $x,y \in I^n$, following the distribution of $G^n$ and play the game according to the optimal strategy and output $a,b$.
\item Alice and Bob have some shared randomness that correspond to $v = O(\frac{t}{\eps})$ random indices $i_1,\dots,i_v \in [n]$. Let $C$ be this set of indices. They also share the description of a hash function $h : [|C| \log(l)] \rightarrow [2t]$ taken randomly from a universal family of hash functions. For all $i \in C$, Alice sends $x_i$ to Bob as well as $h(a^C)$.
\item Since we have a projection game, there exists a unique string $\alpha^C$ such that $\forall i \in C$, $V(\alpha_i b_i | x_i y_i) = 1$. We say that Bob succeeds the test is the string $h(a^C)$ he receives is equal to $h(\alpha^C)$. Otherwise, we say that Bob aborts.
\end{itemize}
} $�\ $ \\

Similarly as in the previous case, we can prove.

\begin{proposition}
If Alice and Bob perform the above protocol with $v = \frac{32}{\eps}\left(t + \log(1/\eps) + 9\right)$, we have:
\begin{enumerate}
\item $Pr[\textrm{Bob succeeds}] \ge 2^{-t}$
\item $Pr[\textrm{Alice and Bob win } \ge (1-\frac{\eps}{256})n \textrm{ games }| \textrm{ Bob succeeds} ] \ge (1-\frac{\eps}{256})$.
\end{enumerate}
where  
$$Pr[\textrm{A\textsc{\&}B win } \ge (1-\frac{\eps}{256})n \textrm{ games }| \textrm{ Bob succeeds} ] = \Pr[ \#\{i : V(a_ib_i | x_i y_i) = 1\} \ge n(1 - \frac{\eps}{256}) | \textrm{ Bob succeeds}].$$
\end{proposition}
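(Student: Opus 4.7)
The plan is to mimic the proof of Proposition~\ref{Proposition:CommuncationProtocol} almost verbatim, with the projection property and a universal-hash collision bound replacing the direct transmission of $a^C$. Throughout I will use the notation introduced in the protocol: $\alpha^C \in O^{|C|}$ denotes the (unique, by projection) correct Alice-answers on the coordinates in $C$ given $b^C, x^C, y^C$, and Bob's test accepts iff $h(a^C) = h(\alpha^C)$.

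For Part~1, I would argue exactly as in the free-games case. If Alice and Bob win all $n$ instances of $G^n$, an event of probability at least $\omega^*(G^n) = 2^{-t}$, then in particular $V(a_i b_i \mid x_i y_i) = 1$ for every $i \in C$. By the projection property this forces $a_i = \alpha_i$ for each such $i$, so $a^C = \alpha^C$ and trivially $h(a^C) = h(\alpha^C)$, meaning Bob succeeds. Hence $\Pr[\textrm{Bob succeeds}] \ge 2^{-t}$, independently of the choice of $h$.

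For Part~2 I would follow the Bayes computation from the free-game proof while tracking the single new contribution coming from hash collisions. Let $\mathcal{E}$ denote the event ``Alice and Bob win fewer than $(1-\eps/256)n$ games'' and $\mathcal{S}$ the event ``Bob succeeds''. By projection, all games indexed by $C$ being won is equivalent to $a^C = \alpha^C$, so $\mathcal{S}$ decomposes into (a) $a^C = \alpha^C$, or (b) $a^C \neq \alpha^C$ together with a hash collision $h(a^C) = h(\alpha^C)$. The independence of the $v$ uniformly random indices gives $\Pr[a^C = \alpha^C \mid \mathcal{E}] \le (1-\eps/256)^v$ exactly as before, while universality of $h : [|C|\log l] \to [2t]$ gives $\Pr_h[h(a^C) = h(\alpha^C) \mid a^C \neq \alpha^C] \le 2^{-2t}$, independently of the sampling of inputs and indices. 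Combining these and applying Bayes with Part~1,
\[
\Pr[\mathcal{E} \mid \mathcal{S}] \;\le\; \frac{\Pr[\mathcal{S} \mid \mathcal{E}]}{\Pr[\mathcal{S}]} \;\le\; 2^{t}\,(1-\eps/256)^{v} \;+\; 2^{-t}.
\]

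The step I expect to be the most delicate is the numerical calibration that forces the right-hand side below $\eps/256$. The first term is handled as in Proposition~\ref{Proposition:CommuncationProtocol}, using $(1-\eps/256)^v \le e^{-v\eps/256}$ together with the stated value of $v$; the dividend of hashing is that Alice's message now has length $|M_{A^C}| = 2t$ rather than $v \log l$, which is precisely what will eventually upgrade the downstream rate from $(1-\eps^2)^{\Omega(n/\log l)}$ to $(1-\eps)^{\Omega(n)}$. The second, additive $2^{-t}$ is the genuinely new feature compared to the non-projection case; absorbing it into $\eps/256$ requires $t \ge \log(1/\eps) + O(1)$, which is harmless because when $t$ is smaller the target parallel-repetition bound $\omega^*(G^n) \le (1-\eps)^{\Omega(n)}$ is already vacuously satisfied. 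Once the constants in $v$ are tuned so that both terms are at most $\eps/512$, their sum is at most $\eps/256$, establishing Part~2.
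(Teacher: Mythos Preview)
Your proposal is correct and follows essentially the same route as the paper: both proofs establish Part~1 via the inclusion ``win all $n$ games $\Rightarrow$ $a^C=\alpha^C$ $\Rightarrow$ Bob succeeds'', and handle Part~2 by the identical Bayes manipulation together with the decomposition $\Pr[\mathcal{S}\mid\mathcal{E}]\le \Pr[a^C=\alpha^C\mid\mathcal{E}]+\Pr_h[h(a^C)=h(\alpha^C)\mid a^C\neq\alpha^C]\le (1-\eps/256)^v+2^{-2t}$, then divide by $\Pr[\mathcal{S}]\ge 2^{-t}$. Your extra commentary on tuning the constants and on the regime where $2^{-t}$ is not small is more explicit than what the paper writes, but the underlying argument is the same.
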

\begin{proof}
We first have:
\begin{align*}
\Pr[\textrm{Bob succeeds}] & = \Pr[\textrm{Alice and Bob win } G_i \ \forall i \in \{i_1,\dots,i_v\}] \\ & \ge \Pr[\textrm{Alice and Bob win } G_i \ \forall i \in [n] ] = 2^{-t}.
\end{align*}
As in the previous case, we have 
\begin{align*}
\Pr[\textrm{A\textsc{\&}B win } \le (1 -\frac{\eps}{32})n \textrm { games } | \textrm{ Bob succeeds}] & \le \frac{\Pr[\textrm{Bob succeeds} \mid \textrm{A\textsc{\&}B win }  \le (1 - \frac{\eps}{32})n \textrm { games}]}{\Pr[\textrm{Bob succeeds}]} \\
& \le \frac{\Pr[\textrm{Bob succeeds} \mid \textrm{A\textsc{\&}B win }  \le (1 - \frac{\eps}{256})n \textrm { games}]}{2^{-t}}.
\end{align*}

Moreover, we have 
\begin{align*}
 \Pr[ & \textrm{Bob succeeds} \mid \textrm{Alice and Bob win }  \le (1 - \frac{\eps}{256})n \textrm { games}] \\ =
& \ \Pr[a^C = \alpha^C \mid \textrm{Alice and Bob win }  \le (1 - \frac{\eps}{256})n \textrm { games}] + \Pr[a^C \neq \alpha^C] \cdot \Pr[h(a^C) = h(\alpha^C) \mid a^C \neq \alpha^C] \\ \le
& \ \Pr[a^C = \alpha^C \mid \textrm{Alice and Bob win }  \le (1 - \frac{\eps}{256})n \textrm { games}] + \Pr[h(a^C) = h(\alpha^C) \mid a^C \neq \alpha^C] \\ \le
& \ (1 - \frac{\eps}{256})^v + 2^{-2t}.
\end{align*}
Putting this together, we obtain
$$
\Pr[\textrm{Alice and Bob win } \le (1 - \frac{\eps}{256})n \textrm { games } | \textrm{ Bob succeeds}] \le \frac{(1 - \frac{\eps}{32})^v + 2^{-2t}}{2^{-t}} \le 1 - \frac{\eps}{256}
$$
since $v = \frac{32}{\eps}\left(t + \log(1/\eps) + 9\right)$.
\end{proof}
\paragraph{Getting parallel repetition}
\begin{theorem}\label{Theorem:ParallelRepetitionProjectionGames}
For any free projection game $G$, we have $\omega^*(G) \le (1 - \eps)^{\Omega(n)}$.
\end{theorem}
\begin{proof}
We proceed as in Theorem \ref{Theorem:ParallelRepetition}. If $t \ge \frac{n\eps}{2048}$, the statement holds. If $t \le \frac{n\eps}{2048}$, we can construct an state $\ket{\psi}$ and find an index $i$ such that 
\begin{enumerate}
\item $I(\X_i : \Bob)_{\ket{\psi}} + I(\Y_i : \Alice)_{\ket{\psi}} \ge \Omega(\eps)$
\item
$I(\X_i : \Bob)_{\ket{\psi}} + I(\Y_i : \Alice)_{\ket{\psi}} \le \frac{32t + 16|M_{A^C}|}{\overline{C}}$
\end{enumerate}
In this case, we have $|M_{A^C}| = 2t$ and $\overline{C} \ge \frac{6n}{7}$, which means that
$\frac{64t}{\overline{C}} \ge \Omega(\eps)$. This implies $t \ge \Omega(n/\eps)$ or equivalently $\omega^*(G^n) \le (1 - \eps)^{\Omega(n)}$.
\end{proof}

\bibliography{paper}

\newcommand{\etalchar}[1]{$^{#1}$}
\begin{thebibliography}{AKK{\etalchar{+}}08}

\bibitem[AKK{\etalchar{+}}08]{AKK+08}
Sanjeev Arora, Subhash~A. Khot, Alexandra Kolla, David Steurer, Madhur
  Tulsiani, and Nisheeth~K. Vishnoi.
\newblock Unique games on expanding constraint graphs are easy: extended
  abstract.
\newblock In {\em Proceedings of the 40th annual ACM symposium on Theory of
  computing}, STOC '08, pages 21--28, New York, NY, USA, 2008. ACM.

\bibitem[BG14]{BG14}
Mark Braverman and Ankit Garg.
\newblock Small value parallel repetition for general games.
\newblock {\em Electronic Colloquium on Computational Complexity (ECCC)}, July
  2014.

\bibitem[BGS98]{BGS98}
Mihir Bellare, Oded Goldreich, and Madhu Sudan.
\newblock Free bits, pcps, and nonapproximability---towards tight results.
\newblock {\em SIAM J. Comput.}, 27(3):804--915, June 1998.

\bibitem[CS14]{CS14}
Andr{e} Chailloux and Giannicola Scarpa.
\newblock Parallel repetition of entangled games via the superposed information
  cost.
\newblock {\em ICALP}, 2014.

\bibitem[CSUU08]{CSUU08}
Richard Cleve, William Slofstra, Falk Unger, and Sarvagya Upadhyay.
\newblock Perfect parallel repetition theorem for quantum xor proof systems.
\newblock {\em Comput. Complex.}, 17(2):282--299, May 2008.

\bibitem[DSV13]{DSV13}
Irit Dinur, David Steurer, and Thomas Vidick.
\newblock A parallel repetition theorem for entangled projection games.
\newblock 2013.

\bibitem[Fei98]{Fei98}
Uriel Feige.
\newblock A threshold of ln n for approximating set cover.
\newblock {\em J. ACM}, 45(4):634--652, July 1998.

\bibitem[H{\aa}s01]{Has01}
Johan H{\aa}stad.
\newblock Some optimal inapproximability results.
\newblock {\em J. ACM}, 48(4):798--859, July 2001.

\bibitem[Hol07]{Hol07}
Thomas Holenstein.
\newblock Parallel repetition: simplifications and the no-signaling case.
\newblock In {\em Proceedings of the thirty-ninth annual ACM symposium on
  Theory of computing}, STOC '07, pages 411--419, New York, NY, USA, 2007. ACM.

\bibitem[JPY13]{JPY13}
Rahul Jain, Attila Pereszlenyi, and Penghui Yao.
\newblock A parallel repetition theorem for entangled two-player one-round
  games under product distributions.
\newblock 2013.

\bibitem[KRT08]{KRT08}
Julia Kempe, Oded Regev, and Ben Toner.
\newblock Unique games with entangled provers are easy.
\newblock In {\em Proceedings of the 2008 49th Annual IEEE Symposium on
  Foundations of Computer Science}, FOCS '08, pages 457--466, Washington, DC,
  USA, 2008. IEEE Computer Society.

\bibitem[KV11]{KV11}
Julia Kempe and Thomas Vidick.
\newblock Parallel repetition of entangled games.
\newblock In {\em Proceedings of the 43rd annual ACM symposium on Theory of
  computing}, STOC '11, pages 353--362, New York, NY, USA, 2011. ACM.

\bibitem[NC00]{NC00}
Michael~A. Nielsen and Isaac~L. Chuang.
\newblock {\em Quantum computation and quantum information}.
\newblock Cambridge University Press, New York, NY, USA, 2000.

\bibitem[NS03]{NS03}
Ashwin Nayak and Peter Shor.
\newblock Bit-commitment-based quantum coin flipping.
\newblock {\em Phys. Rev. A}, 67(1):012304, Jan 2003.

\bibitem[PRW97]{PRW97}
Itzhak Parnafes, Ran Raz, and Avi Wigderson.
\newblock Direct product results and the gcd problem, in old and new
  communication models.
\newblock In {\em Proceedings of the twenty-ninth annual ACM symposium on
  Theory of computing}, STOC '97, pages 363--372, New York, NY, USA, 1997. ACM.

\bibitem[Rao08]{Rao08}
Anup Rao.
\newblock Parallel repetition in projection games and a concentration bound.
\newblock In {\em Proceedings of the 40th annual ACM symposium on Theory of
  computing}, STOC '08, pages 1--10, New York, NY, USA, 2008. ACM.

\bibitem[Raz98]{Raz98}
Ran Raz.
\newblock A parallel repetition theorem.
\newblock {\em SIAM J. Comput.}, 27(3):763--803, June 1998.

\bibitem[Raz11]{Raz11}
Ran Raz.
\newblock A counterexample to strong parallel repetition.
\newblock {\em SIAM J. Comput.}, 40(3):771--777, June 2011.

\bibitem[SR01]{SR01}
R.~W. Spekkens and T.~Rudolph.
\newblock Degrees of concealment and bindingness in quantum bit commitment
  protocols.
\newblock {\em Physical Review A}, 65:012310, 2001.

\end{thebibliography}
\bibliographystyle{alpha}

\appendix

\end{document}